\documentclass[a4paper,twoside,11pt]{amsart}
\usepackage{amsmath}
\usepackage{amsfonts}
\usepackage{amssymb}
\usepackage{amsthm}
\usepackage{newlfont}
\usepackage{graphicx}
\usepackage{amscd}

\textwidth 6.35in
\textheight 9in
\topmargin -0.3cm
\leftmargin -3cm
\oddsidemargin=0cm
\evensidemargin=0cm
% Fuzz -------------------------------------------------------------------
\hfuzz5pt % Don't bother to report over-full boxes if over-edge is < 5pt
%\setlength{\tclineskip}{1.05\baselineskip}
%%% ----------------------------------------------------------------------
%\include{/home/doliwa/texfiles/mydef}
% THEOREMS ---------------------------------------------------------------
\theoremstyle{plain}
\newtheorem{Th}{Theorem}[section]
\newtheorem{Cor}[Th]{Corollary}
\newtheorem{Lem}[Th]{Lemma}
\newtheorem{Prop}[Th]{Proposition}
\theoremstyle{definition}

\newtheorem{Ex}{Example}[section]
\theoremstyle{remark}
\newtheorem*{Rem}{Remark}%[section]
\numberwithin{equation}{section}
%%% ----------------------------------------------------------------------
\newcommand{\PP}{{\mathbb P}}

\newcommand{\DD}{{\mathbb D}}

\newcommand{\II}{{\mathbb I}}
\newcommand{\ZZ}{{\mathbb Z}}

\newcommand{\VV}{{\mathbb V}}

%%%-----------------------------------------------------------------------

\newcommand{\bphi}{\boldsymbol{\phi}}
\newcommand{\bPhi}{\boldsymbol{\Phi}}
\newcommand{\bpsi}{\boldsymbol{\psi}}

%%%%%%%%%%%%%%%%%%%%%%%%%%%%%%%%%%%%%%%%%%%%%%%%%%%%%%%%%%%%%%%%%%%

\begin{document}

\title[Non-commutative Zamolodchikov maps, and Desargues lattices]
{Non-commutative bi-rational maps satisfying Zamolodchikov equation, and Desargues lattices}

\author[A. Doliwa]{Adam Doliwa}
\address{A. Doliwa: Faculty of Mathematics and Computer Science, University of Warmia and Mazury, 
ul.~S{\l}oneczna~54, 10-710 Olsztyn, Poland}
\email{doliwa@matman.uwm.edu.pl}
\urladdr{http://wmii.uwm.edu.pl/~doliwa/}

\author[R. M. Kashaev]{Rinat M. Kashaev}
\address{R. M. Kashaev: Section de Math\'{e}matiques, Universit\'{e} de Gen\`{e}ve, 2--4 Rue de Li\`{e}vre, Case Postale 64, 
1211 Gen\`{e}ve 4, Switzerland}
\email{Rinat.Kashaev@unige.ch}
\urladdr{http://www.unige.ch/math/folks/kashaev}

\date{}
\keywords{tetrahedron Zamolodchikov equation, functional pentagon equation, integrable discrete geometry; non-commutative rational functions; non-Abelian Hirota--Miwa system; quantum maps; Poisson maps}
\subjclass[2010]{Primary 37K60; Secondary 39A14, 51A20, 16T25, 81R12}

\begin{abstract}
We present new solutions of the functional Zamolodchikov tetrahedron equation in terms of birational maps in totally non-commutative variables. All the maps originate from  Desargues lattices, which provide geometric realization of solutions to the non-Abelian Hirota--Miwa system. The first map is derived using the original Hirota's gauge for the corresponding linear problem, and the second one from its affine (non-homogeneous) description. We provide also an interpretation of the maps within the local Yang--Baxter equation approach.
 We exploit decomposition of the second map into two simpler maps which, as we show, satisfy the pentagonal condition. We provide also geometric meaning of the matching ten-term condition between the pentagonal maps. 
The generic description of Desargues lattices in homogeneous coordinates allows to define another solution of the Zamolodchikov equation, but with functional parameter which should be adjusted in a particular way. Its ultra-local reduction produces a birational quantum map (with two central parameters) with Zamolodchikov property, which preserves Weyl commutation relations.  In the classical limit our construction gives the corresponding Poisson map satisfying the Zamolodchikov condition. 
\end{abstract}
\maketitle

\section{Introduction}
Let $\mathcal{X}$ be any set, a map $R:\mathcal{X}^3\to \mathcal{X}^3$ is called 
\emph{tetrahedral or Zamolodchikov map} when in $\mathcal{X}^6$ the relation 
\begin{equation} \label{eq:f-tetrahedron}
R_{123} \circ R_{145} \circ R_{246}\circ R_{356} = 
R_{356} \circ R_{246} \circ R_{145} \circ R_{123}, 
\end{equation}
holds. Here $R_{ijk}$ acts as $R$ on the $i$-th, $j$-th and $k$-th factors and as identity on others.
\begin{figure}[h!]
	\begin{center}
		\includegraphics[width=6cm]{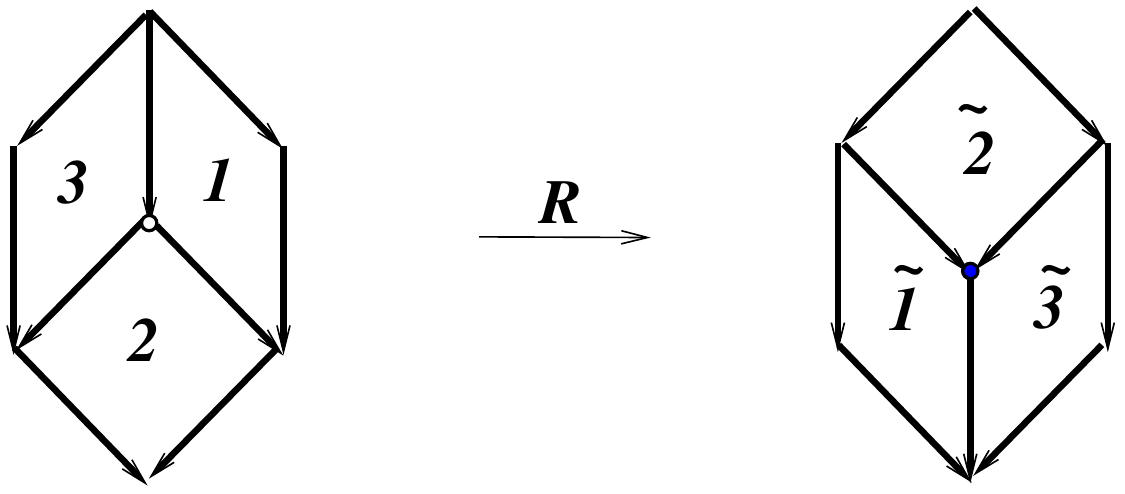}
	\end{center}
	\caption{Graphical representation of a map $R:\mathcal{X}^3\to \mathcal{X}^3$ in quadrilateral formalism}
	\label{fig:Zamolodchikov-map}
\end{figure}
By representing graphically the map as a flip between faces of a cube, as given in Figure~\ref{fig:Zamolodchikov-map}, the Zamolodchikov condition can be visualized in terms of flips between 2-dimensional faces of the 4-dimensional cube (tesseract), see Figure~\ref{fig:Zamolodchikov}. This graphical representation closely resembles the geometric realization of 4-dimensional consistency of the discrete Darboux equations in terms of lattices of planar quadrilaterals~\cite{MQL}.

The operator version of the equation~\eqref{eq:f-tetrahedron} was proposed by Zamolodchikov~\cite{Zamolodchikov} as a multidimensional version of the quantum Yang--Baxter equation, which plays central role in two-dimensional solvable models of statistical and quantum physics~\cite{Baxter,QISM,Jimbo}. A Yang--Baxter map (or set-theoretical solution to quantum Yang--Baxter equation \cite{Drinfeld}) $Y \colon \mathcal{X}^2\to \mathcal{X}^2$ satisfies functional equation
\begin{equation}
\label{eq:f-YB}
Y_{12}\circ Y_{13} \circ Y_{23} = Y_{23}\circ Y_{13} \circ Y_{12} \qquad \text{in} \qquad
\mathcal{X}^3 .
\end{equation}
After first solutions of the quantum Zamolodchikov (or tetrahedron) equation have been constructed~\cite{Baxter-Z,BaStro,Korepanov}, the equation was described in the language of category theory~\cite{KapranovVoevodsky} and has found applications to topology of knotted surfaces~\cite{CarterSaito}. 

Those solutions of the Zamolodchikov equation, which can be constructed form solutions to the quantum Yang--Baxter equation are considered as trivial (independently from their complexity). Solutions to the quantum Zamolodchikov equation related to discrete Darboux equations, which have been found in~\cite{BaMaSe,BaSe,Sergeev-q3w}, are not of this type. It turns out that such genuine
solutions of the quantum Zamolodchikov equation give rise to solutions of the quantum Yang--Baxter equation in the reduction process and allow for better understanding of algebraic structures behind two-dimensional integrable models~\cite{Kashaev-Volkov,BaMaSe,KunibaSergeev}. 
As a rule, given $3$-dimensional integrable $4$-dimensionally consistent system, it allows to construct the corresponding solution of the functional Zamolodchikov equation~\eqref{eq:f-tetrahedron}, see for example~\cite{Kashaev-Korepanov-Sergeev,Kashaev-Sergeev,Kashaev-LMP,KNPT}.  
\begin{figure}
	\begin{center}
		\includegraphics[width=14cm]{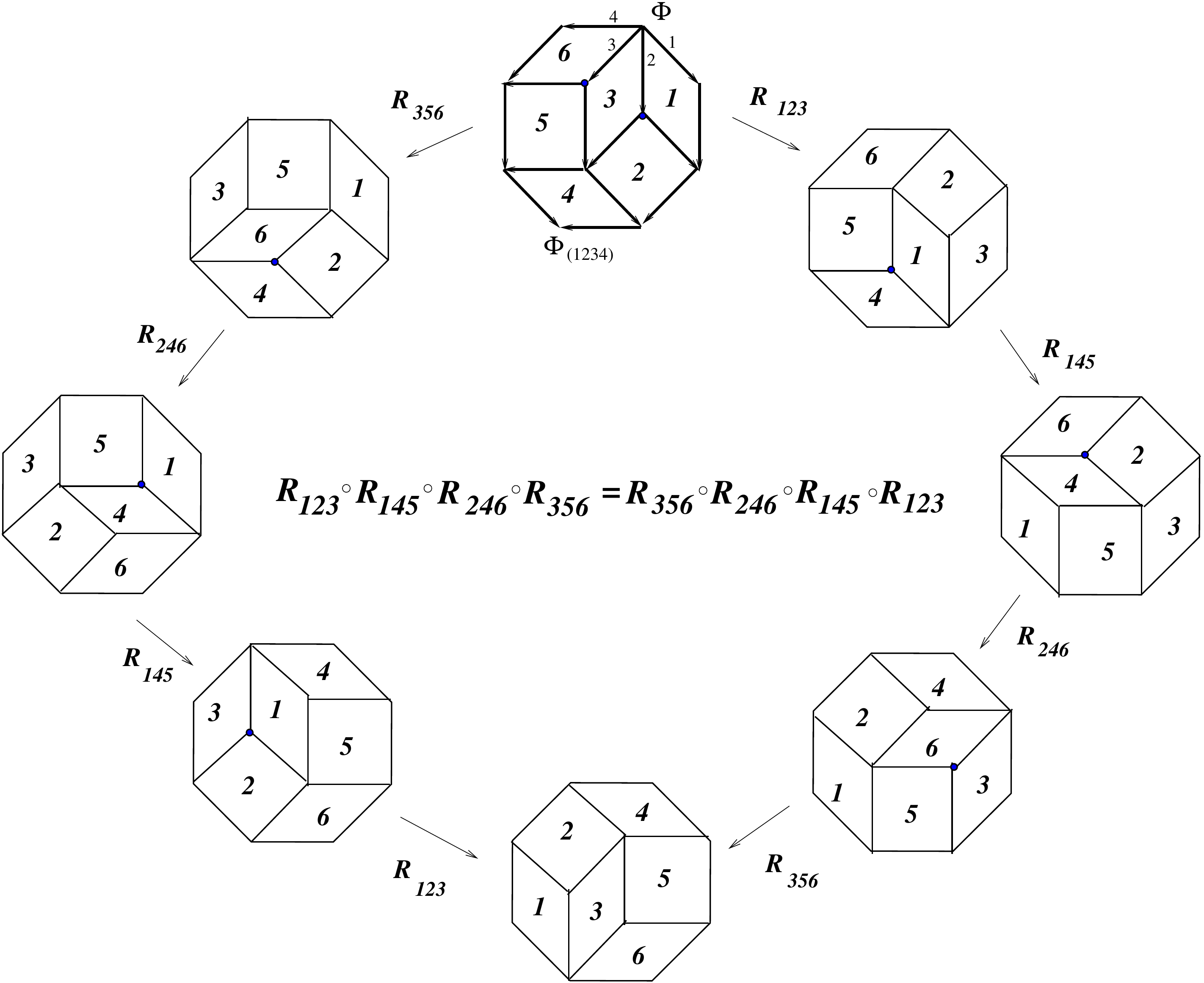}
	\end{center}
	\caption{The graphical representation of Zamolodchikov's equation}
	\label{fig:Zamolodchikov}
\end{figure}

Non-commutative extensions of integrable systems are of growing interest in mathematical physics~\cite{FWN-Capel,Kupershmidt,BobSur-nc,Nimmo-NCKP,Dol-GD,DoliwaNoumi}. They may be considered as a useful platform to more thorough understanding of integrable quantum or statistical mechanical lattice systems. 
As it was noticed in~\cite{Dol-Des} the projective incidence geometric approach to integrable discrete systems points out their non-commutative versions.  
In the present paper we consider Zamolodchikov maps in non-commuting variables which are related to the non-Abelian Hirota--Miwa system~\cite{Nimmo-NCKP} and its projective geometric description~\cite{Dol-Des}. We exploit decomposition of the maps into two pentagonal maps introduced in~\cite{DoliwaSergeev-pentagon}, and study the matching condition between them~\cite{Kashaev-Sergeev}. 
The geometric approach turns out to be very helpful, because the most general description of the maps in terms of homogeneous coordinates of the projective space allows to construct the corresponding quantum map satisfying the Zamolodchikov condition. The map preserves the Weyl commutation relations, and in the classical limit it gives the Zamolodchikov map which preserves the corresponding Poisson brackets.

The structure of paper is as follows. In Section~\ref{sec:Des-Hir-Zam} we first recall the geometric description of the non-Abelian Hirota--Miwa system (non-commutative discrete KP system) in terms of the Desargues maps. This allows to present the corresponding non-commutative birational map --- the Hirota map in the original gauge --- which satisfies the Zamolodchikov equation. By changing the gauge to affine space description, i.e. to non-homogeneous coordinates, we obtain another non-commutative birational map with Zamolodchikov property --- the affine Hirota map. The next Section~\ref{sec:pentagonal-decomposition} is devoted to decomposition of the affine Hirota map into affine versions of the non-commutative normalization and Veblen maps. These birational maps, introduced in a different gauge in~\cite{DoliwaSergeev-pentagon} satisfy the pentagonal condition. Following ideas of~\cite{Kashaev-Sergeev} we prove that these two maps are paired by the ten-term relation. In particular we present geometric meaning of the relation in terms of the star configuration. In Section~\ref{sec:NP-hom}, following mainly~\cite{DoliwaSergeev-pentagon} but changing slightly some formulas, we present the normalization and Veblen maps using homogeneous coordinates of the projective space. We also present ultra-local/quantum reductions of the maps and their classical limits to corresponding Poisson maps satisfying pentagonal condition. Finally, in Section~\ref{sec:Zam-h} we combine the homogeneous versions of the pentagonal maps into the corresponding homogeneous Hirota map. We present also the geometric meaning of the corresponding local Yang--Baxter equation. We provide direct proof that these maps are matched by the homogeneous version of the ten-term relation and we give its quantum/Poisson version as well. This provides the corresponding quantum/Poisson Hirota map with the Zamolodchikov property.

Throughout the paper we will work with division rings of (non-commutative) \emph{rational functions} in a finite number of (non-commuting) variables. This approach is intuitively accessible, see however \cite{Cohn} for formal definitions.

Results of the paper were presented by A.~D. on conferences \emph{Symmetries and Integrability of Difference Equations~12} (Sainte-Ad\`{e}le, Qu\'{e}bec, Canada, 3-9 July, 2016), and \emph{Integrable Systems and Quantum Symmetries XXV} (Prague, 6-10 June, 2017).

\section{Desargues maps in the Hirota and affine gauges, and the corresponding solutions to the Zamolodchikov equation} \label{sec:Des-Hir-Zam}

\subsection{Desargues maps and the non-Abelian Hirota--Miwa system}
Consider the following linear problem \cite{DJM-II,Nimmo-NCKP}
\begin{equation} \label{eq:lin-dKP}
\bphi_{(i)} - \bphi_{(j)} =  \bphi U_{ij},  \qquad i \ne j \leq N,
\end{equation}
where $U_{ij}\colon\ZZ^N \to \DD$ are functions defined on $N$-dimensional integer lattice with values in a division ring $\DD$, and the wave function $\bphi\colon \ZZ^N \to \VV(\DD)$ takes values in a right vector space over $\DD$. Here and in all the paper the subscripts in brackets denote shifts in corresponding discrete variables, i.e., $f_{(i)}(n_1, \dots , n_i, \dots, n_N) = f(n_1,\dots , n_i + 1, \dots ,n_N)$.

The compatibility conditions of \eqref{eq:lin-dKP} consist of equations 
\begin{equation} \label{eq:alg-comp-U}
U_{ij} + U_{ji} = 0, \qquad  U_{ij} + U_{jk} + U_{ki} = 0, \qquad 
U_{kj}U_{ki(j)} = U_{ki} U_{kj(i)},
\end{equation}
called in \cite{Nimmo-NCKP} non-Abelian Hirota--Miwa system of equations. Indeed, for commutative $\DD$ it is possible to introduce the $\tau$-function such that
\begin{equation} \label{eq:U-tau}
U_{ij} = \frac{\tau \tau_{(ij)}}{\tau_{(i)} \tau_{(j)}}, \qquad i< j.
\end{equation}
The remaining part of the system
\eqref{eq:alg-comp-U} reduces to Hirota's discrete KP equation \cite{Hirota}
\begin{equation} \label{eq:H-M}
\tau_{(i)}\tau_{(jk)} - \tau_{(j)}\tau_{(ik)} + \tau_{(k)}\tau_{(ij)} =0,
\qquad 1\leq i< j <k \leq N,
\end{equation}
whose pivotal role in the whole KP hierarchy was discovered by Miwa~\cite{Miwa}. We remark that on the level the $\tau$-function formulation the quantum analog of equation~\eqref{eq:H-M}, different from that presented in Section~\ref{sec:Hir-q-P} of the present paper, is given in~\cite{Kashaev-qH,Kashaev-Reshetikhin}.

When values of the linear function $\bphi$ are interpreted as homogeneous coordinates of a map $[\bphi ]\colon \ZZ^N \to \PP(\VV)$ from the integer lattice into the projectivisation of $\VV$ then the linear problem \eqref{eq:lin-dKP} simply states that the points $[ \bphi ]$,  $[ \bphi_{(i)} ]$ and  $[ \bphi_{(j)} ]$ are collinear, see Figure~\ref{fig:Desargues-map}.  
\begin{figure}[h!]
	\begin{center}
		\includegraphics[width=10cm]{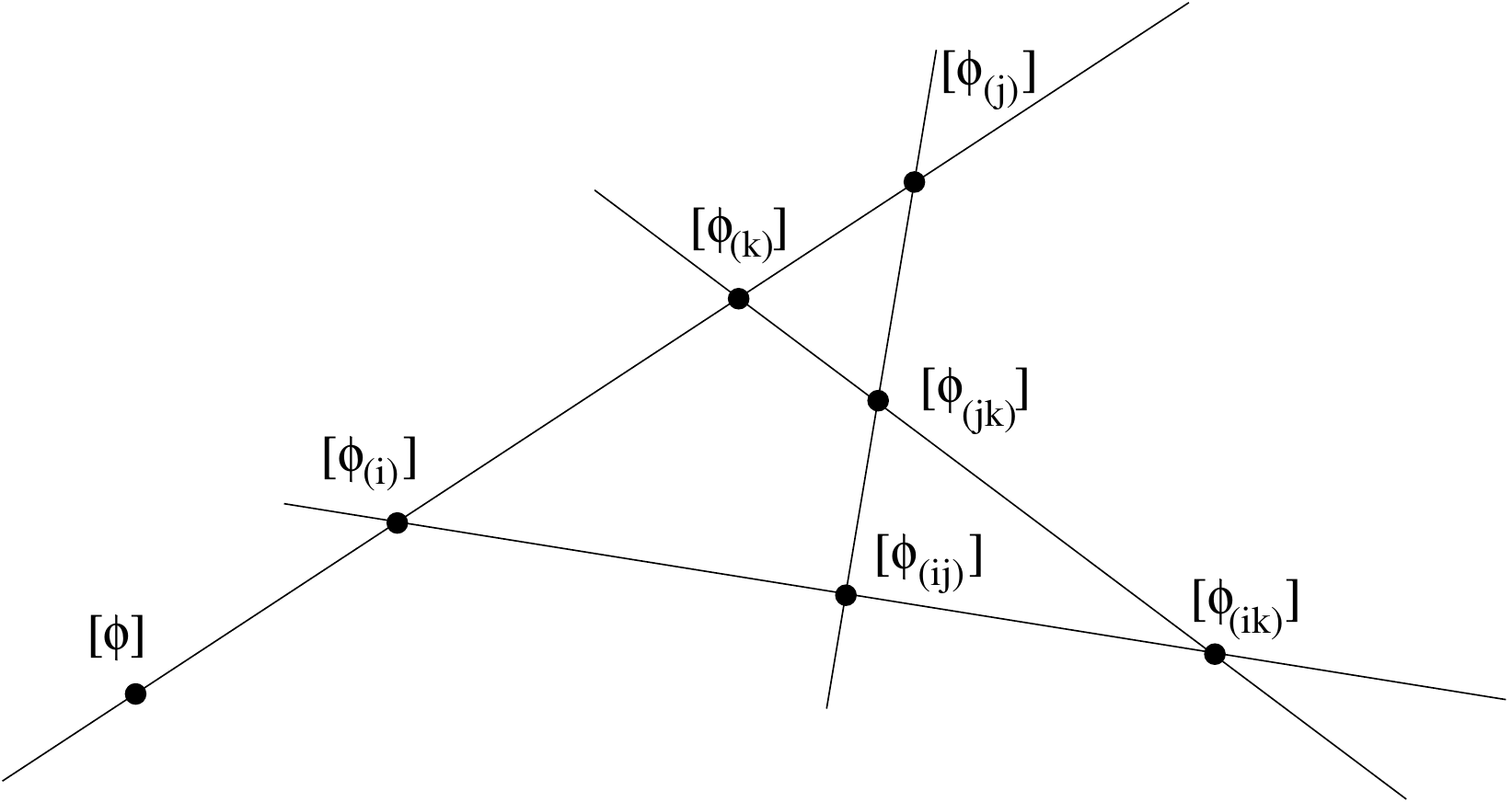}
	\end{center}
	\caption{Desargues map condition}
	\label{fig:Desargues-map}
\end{figure}
Such geometric maps were called Desargues maps in \cite{Dol-Des}, because their four-dimensional consistency, which is crucial for integrability of~\eqref{eq:alg-comp-U},  is encoded in the Desargues theorem of projective geometry~\cite{BeukenhoutCameron-H}.

\subsection{The non-commutative Hirota map} \label{sec:Hir-Zam-H}
Let us present the corresponding solution to the functional Zamolodchikov equation together with its linear problem. We will keep both the geometric meaning and the form of the linear problem.
The following non-commutative Hirota map $H \colon (y_1,y_2,y_3) \dashrightarrow (\tilde{y}_1,\tilde{y}_2,\tilde{y}_3)$ can be extracted from compatibility of the linear system
\begin{align} \nonumber
\bphi_{(1)} - \bphi_{(2)} & = \bphi y_1 , & \bphi_{(13)} - \bphi_{(23)} & = \bphi_{(3)} \tilde{y}_1, \\ \label{eq:lin-H}
\bphi_{(12)} - \bphi_{(23)} & = \bphi_{(2)} y_2 , & \bphi_{(1)} - \bphi_{(3)} & = \bphi \tilde{y}_2 , \\ \nonumber
\bphi_{(2)} - \bphi_{(3)} & = \bphi y_3 , & \bphi_{(12)} - \bphi_{(13)} & = \bphi_{(1)} \tilde{y}_3 ,
\end{align}
and reads
\begin{align} 
\nonumber \tilde{y}_1 & = (y_1 + y_3)^{-1} y_1 y_2 , \\
\label{eq:Hirota-map}
\tilde{y}_2 & = y_1 + y_3 , \\
\nonumber \tilde{y}_3 & = (y_1 + y_3)^{-1} y_3 y_2 .
\end{align}

The  Hirota map is direct consequence of the non-Abelian Hirota--Miwa system, and the identification of the fields is given by
\begin{align*}
y_1 & = U_{12},   & y_2 &= U_{13(2)} ,  & y_3 &= U_{23} ,\\
\tilde{y}_1 & = U_{12(3)} , & \tilde{y}_2 &= U_{13} ,  & \tilde{y}_3 &= U_{23(1)} .
\end{align*}
By simple calculation on can check the following property of the map $H$.
\begin{Prop}
	The Hirota map $H$ given by \eqref{eq:Hirota-map} is birational with inverse being its opposite map $H^\mathrm{op}$, i.e. with all multiplications taken in opposite order
	\begin{align} 
	\nonumber y_1 & = \tilde{y}_2 \tilde{y}_1 (\tilde{y}_1 + \tilde{y}_3)^{-1} , \\
	\label{eq:Hirota-map-inv}
	y_2 & = \tilde{y}_1 + \tilde{y}_3 , \\
	\nonumber y_3 & = \tilde{y}_2 \tilde{y}_3 (\tilde{y}_1 + \tilde{y}_3)^{-1}  .
	\end{align}
\end{Prop}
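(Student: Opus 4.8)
The plan is to verify directly that the composition of $H$ with the opposite map $H^\mathrm{op}$ is the identity, in both orders, exploiting a single factorization identity that collapses the apparently complicated expressions. The pivotal observation is that the first and third components of $H$ share the common left factor $(y_1+y_3)^{-1}$ and the common right factor $y_2$, so that
$$\tilde{y}_1 + \tilde{y}_3 = (y_1+y_3)^{-1}y_1 y_2 + (y_1+y_3)^{-1} y_3 y_2 = (y_1+y_3)^{-1}(y_1+y_3)y_2 = y_2 .$$
This already reproduces the middle formula $y_2 = \tilde{y}_1 + \tilde{y}_3$ of the claimed inverse and, crucially, it identifies the inner inverse $(\tilde{y}_1+\tilde{y}_3)^{-1}$ occurring in $H^\mathrm{op}$ with $y_2^{-1}$.

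First I would feed the output $(\tilde{y}_1,\tilde{y}_2,\tilde{y}_3)$ into $H^\mathrm{op}$ and simplify using the identity above. For the first component, $\tilde{y}_2\tilde{y}_1 = (y_1+y_3)(y_1+y_3)^{-1}y_1 y_2 = y_1 y_2$, whence $\tilde{y}_2\tilde{y}_1(\tilde{y}_1+\tilde{y}_3)^{-1} = y_1 y_2\, y_2^{-1} = y_1$; the third component is handled identically, giving $\tilde{y}_2\tilde{y}_3(\tilde{y}_1+\tilde{y}_3)^{-1}=y_3$. Together with $\tilde{y}_1+\tilde{y}_3=y_2$ this establishes $H^\mathrm{op}\circ H=\mathrm{id}$.

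To confirm that $H^\mathrm{op}$ is a genuine two-sided inverse, I would then carry out the symmetric computation $H\circ H^\mathrm{op}=\mathrm{id}$: starting from $(z_1,z_2,z_3)$ and applying $H^\mathrm{op}$, the same factorization yields $w_1+w_3 = z_2 z_1(z_1+z_3)^{-1}+z_2 z_3(z_1+z_3)^{-1}=z_2$, after which applying $H$ returns $(z_1,z_2,z_3)$ by the mirror image of the steps above. Alternatively one may note that $H^\mathrm{op}\circ H=\mathrm{id}$ is an identity in the division ring of non-commutative rational functions and apply the product-reversing anti-automorphism to it; since this anti-automorphism is an involution that is compatible with substitution, it interchanges $H$ and $H^\mathrm{op}$ while preserving the identity on generators, and therefore delivers $H\circ H^\mathrm{op}=\mathrm{id}$ with no further work.

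There is no genuine obstacle here beyond non-commutative bookkeeping: one must keep the product order intact throughout and never cancel $(y_1+y_3)^{-1}$ against $y_2$ prematurely, the only legitimate cancellations being the two exhibited in the factorization. The implicit genericity hypothesis --- that $y_1+y_3$ and $y_2=\tilde{y}_1+\tilde{y}_3$ are invertible --- is automatic once the variables are treated as free generators of the division ring of rational functions, so that birationality of $H$ and the inversion formula \eqref{eq:Hirota-map-inv} hold as identities of non-commutative rational maps.
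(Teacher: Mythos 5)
Your proof is correct and follows essentially the same route as the paper, which simply asserts the result ``by simple calculation'': your key cancellation $\tilde{y}_1+\tilde{y}_3=(y_1+y_3)^{-1}(y_1+y_3)y_2=y_2$ together with $\tilde{y}_2\tilde{y}_1=y_1y_2$ and $\tilde{y}_2\tilde{y}_3=y_3y_2$ is exactly that computation, carried out with correct non-commutative bookkeeping in both orders. Your closing observation --- that the product-reversing anti-automorphism of the free skew field fixes the generators, respects substitution, and hence converts $H^\mathrm{op}\circ H=\mathrm{id}$ into $H\circ H^\mathrm{op}=\mathrm{id}$ --- is a sound and economical way to dispatch the second direction.
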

The commutative version of the Hirota map~\eqref{eq:Hirota-map} was constructed in~\cite{Sergeev-LMP} as a limiting case of a map defined by a local Yang--Baxter equation. Let us provide such a direct (without taking any limits) interpretation valid also in the non-commutative case. We postpone geometric explanation of the fact to Section~\ref{sec:Hir-h}.
\begin{Prop}
Define $2\times 2$ matrix $L^H(y)=\left( \begin{array}{cc} y & 1 \\ 1 & 0 \end{array} \right)$ and its $3\times 3$ extensions by
	\begin{equation*}
	L^H_{12}(y) = \left( \begin{array}{ccc} y & 1 & 0 \\ 1 & 0 & 0 \\ 0 & 0 & 1\end{array} \right) , \quad
	L^H_{13}(y) = \left( \begin{array}{ccc} y & 0 & 1 \\ 0 & 1 & 0 \\ 1 & 0 & 0\end{array} \right) , \quad
	L^H_{23}(y) = \left( \begin{array}{ccc} 1 & 0 & 0 \\ 0 & y & 1 \\ 0 & 1 & 0\end{array} \right) .
	\end{equation*}
	The corresponding local Yang--Baxter equation 
	\begin{equation} \label{eq:lYB-H}
	L_{12}^H (y_1) L_{13}^H (y_2) L^H_{23}(y_3) = 
	L_{23}^H(\tilde{y}_3) L_{13}^H(\tilde{y}_2) L_{12}^H(\tilde{y}_1),
	\end{equation}
	reads explicitly as follows
	\begin{equation*}
	\left( \begin{array}{ccc} y_1 y_2 & y_3 + y_1 & 1 \\ y_2 & 1 & 0 \\ 
	1 & 0 & 0\end{array} \right) = 
	\left( \begin{array}{ccc} \tilde{y}_2 \tilde{y}_1 & \tilde{y}_2 & 1 \\ 
	 \tilde{y}_1 + \tilde{y}_3 & 1 & 0 \\ 
	1 & 0 & 0\end{array} \right),
	\end{equation*} 
	and is equivalent to the transformation formulas~\eqref{eq:Hirota-map}.
\end{Prop}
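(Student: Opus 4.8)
The plan is to verify the identity~\eqref{eq:lYB-H} by direct matrix multiplication on both sides, taking care to preserve the order of all factors since the entries lie in the non-commutative division ring $\DD$, and then to read off the surviving scalar equations and solve them. First I would compute the left-hand side. Multiplying $L_{12}^H(y_1)$ by $L_{13}^H(y_2)$ produces a matrix whose first row, first column and $(2,3)$-entry already carry the monomials $y_1y_2$, $y_1$, $y_2$ with factors in the stated order; multiplying that result on the right by $L_{23}^H(y_3)$ then yields exactly the left matrix displayed in the proposition, whose only genuinely non-trivial entries are $y_1y_2$ in position $(1,1)$ and $y_3+y_1$ in position $(1,2)$, all others being $0$ or $1$. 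The point to watch is that $y_1$ sits to the \emph{left} of $y_2$ in the $(1,1)$-entry, which matters once the equation is inverted.

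Symmetrically, I would compute the right-hand side: multiplying $L_{23}^H(\tilde{y}_3)$ by $L_{13}^H(\tilde{y}_2)$ and then on the right by $L_{12}^H(\tilde{y}_1)$ gives the right matrix in the proposition, whose only non-trivial entries are $\tilde{y}_2\tilde{y}_1$ in position $(1,1)$, $\tilde{y}_2$ in position $(1,2)$, and $\tilde{y}_1+\tilde{y}_3$ in position $(2,1)$. Equating the two matrices entrywise, all constant entries match automatically and precisely three equations survive:
\begin{equation*}
\tilde{y}_2 = y_1 + y_3, \qquad \tilde{y}_2\tilde{y}_1 = y_1 y_2, \qquad \tilde{y}_1 + \tilde{y}_3 = y_2.
\end{equation*}

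The middle relation immediately gives $\tilde{y}_1 = (y_1+y_3)^{-1} y_1 y_2$, the first line of~\eqref{eq:Hirota-map}; here I would note that the inverse exists in $\DD$ under the genericity assumption $y_1+y_3 \neq 0$ that already underlies the birational nature of $H$. The first relation is the second line of~\eqref{eq:Hirota-map}. The only computation needing care is the third line: substituting the expression for $\tilde{y}_1$ into $\tilde{y}_3 = y_2 - \tilde{y}_1$ and writing $y_2 = (y_1+y_3)^{-1}(y_1+y_3)y_2$, one rearranges $\tilde{y}_3 = (y_1+y_3)^{-1}\bigl[(y_1+y_3)-y_1\bigr]y_2 = (y_1+y_3)^{-1} y_3 y_2$, which is exactly the third line of~\eqref{eq:Hirota-map}.

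I expect the only real subtlety, rather than an obstacle, to be the bookkeeping of non-commutative order throughout the two triple products and in the final rearrangement producing $\tilde{y}_3$. Since every step above is reversible and uses only the single inversion of $y_1+y_3$, the same computation read backwards shows that~\eqref{eq:Hirota-map} conversely implies~\eqref{eq:lYB-H}, which establishes the claimed equivalence.
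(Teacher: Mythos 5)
Your proposal is correct and is essentially the paper's own (implicit) proof: direct multiplication of the two triple products, entrywise comparison yielding exactly the three surviving relations $\tilde{y}_2 = y_1 + y_3$, $\tilde{y}_2\tilde{y}_1 = y_1 y_2$, $\tilde{y}_1 + \tilde{y}_3 = y_2$, and the order-respecting rearrangement $\tilde{y}_3 = (y_1+y_3)^{-1}\bigl[(y_1+y_3)-y_1\bigr]y_2 = (y_1+y_3)^{-1}y_3 y_2$, with reversibility of each step giving the claimed equivalence. One inessential slip in your prose: in the intermediate product $L_{12}^H(y_1)L_{13}^H(y_2)$ the monomial $y_2$ sits at position $(2,1)$ and $y_1$ at position $(1,3)$, while the $(2,3)$-entry is $1$ --- but your final matrices and the equations you extract from them are the correct ones.
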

\begin{Rem}
	In \cite{Kashaev-qH} one can find another matrix factorization description of the (commutative) Hirota map in terms of the Heisenberg group which works also in the non-commutative case
	\begin{equation*}
 \left( \begin{array}{ccc} 1 & y_1 & 0 \\ 0 & 1 & 0 \\ 0 & 0 & 1\end{array} \right)
 \left( \begin{array}{ccc} 1 & 0 & 0 \\ 0 & 1 & y_2 \\ 0 & 0 & 1\end{array} \right)
 \left( \begin{array}{ccc} 1 & y_3 & 0 \\ 0 & 1 & 0 \\ 0 & 0 & 1\end{array} \right)=
 \left( \begin{array}{ccc} 1 & 0 & 0 \\ 0 & 1 & \tilde{y}_3 \\ 0 & 0 & 1\end{array} \right)
 \left( \begin{array}{ccc} 1 & \tilde{y}_2 & 0 \\ 0 & 1 & 0 \\ 0 & 0 & 1\end{array} \right)
  \left( \begin{array}{ccc} 1 & 0 & 0 \\ 0 & 1 & \tilde{y}_1 \\ 0 & 0 & 1\end{array} \right).
	\end{equation*}
\end{Rem}

Let us state the final result of this Section.
\begin{Prop} \label{prop:H-Z}
	The non-commutative Hirota map satisfies functional Zamolodchikov equation
	\begin{equation}
H_{123} \circ H_{145} \circ H_{246}\circ  H_{356} = H_{356} \circ H_{246} \circ H_{145} \circ H_{123}.
	\end{equation}
\end{Prop}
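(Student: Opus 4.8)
The plan is to deduce the tetrahedron equation from the local Yang--Baxter relation \eqref{eq:lYB-H}, rather than by substituting \eqref{eq:Hirota-map} into both sides, following the standard passage from a local Yang--Baxter (``$RLLL$'') identity to the functional tetrahedron equation. First I would enlarge the auxiliary space from three to four dimensions: for each pair $\{a,b\}\subset\{1,2,3,4\}$ with $a<b$ let $L^H_{ab}(y)\in\mathrm{Mat}_4(\DD)$ be the matrix equal to the identity outside rows and columns $a,b$ and carrying the block $\left(\begin{smallmatrix} y & 1 \\ 1 & 0\end{smallmatrix}\right)$ in rows and columns $a,b$, exactly as in the $3\times3$ matrices of \eqref{eq:lYB-H}. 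I would then fix the dictionary between the six tensor slots of \eqref{eq:f-tetrahedron} and the six pairs,
\[ 1\leftrightarrow 12,\quad 2\leftrightarrow 13,\quad 3\leftrightarrow 23,\quad 4\leftrightarrow 14,\quad 5\leftrightarrow 24,\quad 6\leftrightarrow 34, \]
under which the four triples $\{1,2,3\}$, $\{1,4,5\}$, $\{2,4,6\}$, $\{3,5,6\}$ appearing in the equation correspond precisely to the three pairs drawn from each of the four $3$-element subsets of $\{1,2,3,4\}$.

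The two elementary facts I would record are: (i) matrices attached to disjoint pairs commute, $L^H_{ab}(x)\,L^H_{cd}(y)=L^H_{cd}(y)\,L^H_{ab}(x)$ whenever $\{a,b\}\cap\{c,d\}=\emptyset$ (immediate, since the two blocks occupy disjoint rows and columns, so non-commutativity of the entries never enters); and (ii) for every $a<b<c$ in $\{1,2,3,4\}$ one has $L^H_{ab}(x)\,L^H_{ac}(y)\,L^H_{bc}(z)=L^H_{bc}(\tilde z)\,L^H_{ac}(\tilde y)\,L^H_{ab}(\tilde x)$ with $(\tilde x,\tilde y,\tilde z)=H(x,y,z)$, because the fourth row and column remain the identity and the identity reduces to the genuine $3\times3$ statement \eqref{eq:lYB-H}. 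Fact (ii) says that each $H_{abc}$ in \eqref{eq:f-tetrahedron} realizes exactly one local Yang--Baxter move on an ordered product of the $L^H$, while fact (i) lets me use free commutations to bring the relevant triple into consecutive position. Starting from the lexicographically ordered product $L^H_{12}L^H_{13}L^H_{14}L^H_{23}L^H_{24}L^H_{34}$ and reordering it into the fully reversed product $L^H_{34}L^H_{24}L^H_{23}L^H_{14}L^H_{13}L^H_{12}$ can then be accomplished by exactly four local Yang--Baxter moves, and there are precisely two admissible schedulings of those moves --- through the triples $\{1,2,3\},\{1,2,4\},\{1,3,4\},\{2,3,4\}$ and through the reverse sequence $\{2,3,4\},\{1,3,4\},\{1,2,4\},\{1,2,3\}$. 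These are the two maximal chains of the second higher Bruhat order $B(4,2)$ (cf.~\cite{KapranovVoevodsky}), and under the dictionary above they are precisely the right- and left-hand compositions of \eqref{eq:f-tetrahedron}.

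Since every move is a matrix identity, both schedulings transform the fixed lexicographic product into the \emph{same} matrix, namely a reversed product whose parameters are the outputs of the two composite maps. The crux --- and the step I expect to be the main obstacle --- is then the uniqueness of this reversed factorization: I must show that the assignment $(z_1,\dots,z_6)\mapsto L^H_{34}(z_6)L^H_{24}(z_5)L^H_{23}(z_3)L^H_{14}(z_4)L^H_{13}(z_2)L^H_{12}(z_1)$ is injective, so that equality of the resulting matrices forces equality of the two sets of output parameters and hence of the two composite maps. For these specific matrices I would establish injectivity by reading the $z_i$ off suitable entries of the product in a triangular (successively determined) fashion; the non-commutativity of the entries is harmless here, as associativity of matrix multiplication and fact~(i) are the only structural inputs. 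As an independent check on conventions and signs I would also verify a single component (for instance $\tilde{y}_2=y_1+y_3$ propagated through the four maps) by direct substitution into \eqref{eq:Hirota-map}; a full brute-force comparison of all six components is available as a fallback but is considerably less transparent than the factorization argument, which moreover anticipates the geometric local Yang--Baxter picture developed later in the paper.
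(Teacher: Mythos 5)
Your proposal is correct and coincides essentially with the paper's own second, ``more conceptual'' proof of Proposition~\ref{prop:H-Z}: the paper likewise passes to $4\times4$ extensions $L^H_{ij}(y)$, performs the refactorization of $L^H_{12}(y_1)L^H_{13}(y_2)L^H_{23}(y_3)L^H_{14}(y_4)L^H_{24}(y_5)L^H_{34}(y_6)$ into the reversed product in four local Yang--Baxter moves (in the two schedulings you describe, using the free commutation $L^H_{23}L^H_{14}=L^H_{14}L^H_{23}$), and concludes by the uniqueness of the reversed factorization, which it exhibits via the explicit triangular matrix equality from which the $y_i^\prime$ are read off successively, exactly as you propose. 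The only difference is cosmetic: the paper's proof placed inside the proof environment is the brute-force verification of \eqref{eq:H-Z-yy}, with your argument given immediately afterwards as the conceptual alternative.
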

\begin{proof}
	By direct calculation. We provide only the resulting expressions
	\begin{equation} \label{eq:H-Z-yyy}
	(y_1^\prime , y_2^\prime ,y_3^\prime , y_4^\prime , y_5^\prime , y_6^\prime ) = 
	\begin{cases}
(H_{123} \circ H_{145} \circ H_{246}\circ  H_{356})(y_1,y_2,y_3,y_4,y_5,y_6) , \\ 
(H_{356} \circ H_{246} \circ H_{145} \circ H_{123})	(y_1,y_2,y_3,y_4,y_5,y_6) ,
\end{cases}
	\end{equation}
	where
	\begin{equation}
	\begin{aligned}
	y_1^\prime & = \left( y_1 y_2 + y_1 y_5 + y_3 y_5 \right)^{-1} y_1 y_2 y_4 ,\\
	y_2^\prime & = \left( y_1 + y_3 + y_6 \right)^{-1} \left( y_1 y_2 + y_1 y_5 + y_3 y_5\right) ,\\
	y_3^\prime & = y_4 - \left( y_1 y_2 + y_1 y_5 + y_3 y_5 \right)^{-1} y_1 y_2 y_4 
	- \left( y_3 y_2 + y_6 y_2 + y_6 y_5\right)^{-1} y_6 y_5 y_4 ,\\
	y_4^\prime & = y_1 + y_3 + y_6 , \\
	y_5^\prime & = \left( y_1 + y_3 + y_6 \right)^{-1} \left( y_3 y_2 + y_6 y_2 + y_6 y_5\right) , \\
	y_6^\prime & = \left( y_3 y_2 + y_6 y_2 + y_6 y_5\right)^{-1} y_6 y_5 y_4.
	\end{aligned}
	\label{eq:H-Z-yy}
	\end{equation}
\end{proof} 
\begin{Rem}
	Using the above-mentioned identification of the fields entering the  Hirota map and the non-Abelian Hirota--Miwa system we have
	\begin{align*}
	y_1 & = U_{12} ,  & y_2 &= U_{13(2)} ,  & y_3 &= U_{23} , &	y_4 & = U_{14(23)}  , & y_5 &= U_{24(3)}  , & y_6 &= U_{34 ,}\\
	y_1^\prime & = U_{12(34)} ,  & y_2^\prime &= U_{13(4)} ,  & y_3^\prime &= U_{23(14)}, &	y_4^\prime & = U_{14} ,  & y_5^\prime &= U_{24(1)}  , & y_6^\prime &= U_{34(12)}.
	\end{align*}
\end{Rem}
\begin{Cor}
	The inverse of the Hirota map satisfies the Zamolodchikov equation as well.
\end{Cor}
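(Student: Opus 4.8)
The plan is to deduce the statement formally from Proposition~\ref{prop:H-Z} by inverting both sides of the functional equation, with no new computation. Since $H$ is birational with $H^{-1}=H^{\mathrm{op}}$ by the first Proposition of this subsection, the composite maps on both sides of the Zamolodchikov equation are themselves birational, and inversion is a legitimate operation at the level of identities of non-commutative rational functions.

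First I would record the two elementary facts that drive the argument. For invertible maps composition reverses under inversion, $(A\circ B)^{-1}=B^{-1}\circ A^{-1}$, and since $H_{ijk}$ acts as $H$ on the factors $i,j,k$ and trivially on the others, its inverse acts as $H^{-1}$ on the same factors, i.e. $(H_{ijk})^{-1}=(H^{-1})_{ijk}$. Writing $S:=H^{-1}$ for brevity, I would then apply $(\cdot)^{-1}$ to both sides of the equation in Proposition~\ref{prop:H-Z}. The left-hand side $H_{123}\circ H_{145}\circ H_{246}\circ H_{356}$ inverts to $S_{356}\circ S_{246}\circ S_{145}\circ S_{123}$, while the right-hand side $H_{356}\circ H_{246}\circ H_{145}\circ H_{123}$ inverts to $S_{123}\circ S_{145}\circ S_{246}\circ S_{356}$. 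Equating the two gives
\begin{equation*}
S_{356}\circ S_{246}\circ S_{145}\circ S_{123}=S_{123}\circ S_{145}\circ S_{246}\circ S_{356},
\end{equation*}
which is exactly the Zamolodchikov equation~\eqref{eq:f-tetrahedron} for $S=H^{-1}$, with its two sides merely written in the opposite order.

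There is essentially no genuine obstacle: the entire argument is the observation that inverting the Zamolodchikov relation returns the Zamolodchikov relation for the inverse map. The only point needing a word of care is that $H$ is a rational, not everywhere-defined, map, so the identities $(A\circ B)^{-1}=B^{-1}\circ A^{-1}$ and $(H_{ijk})^{-1}=(H^{-1})_{ijk}$ should be read as identities of non-commutative rational functions on suitable dense domains, precisely the sense in which Proposition~\ref{prop:H-Z} itself is stated. Given this, no direct verification of the type carried out in~\eqref{eq:H-Z-yy} is required; alternatively one could substitute the explicit inverse~\eqref{eq:Hirota-map-inv} into~\eqref{eq:H-Z-yy} and confirm the claim by brute force, but the formal argument renders this unnecessary.
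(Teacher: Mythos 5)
Your argument is correct and is precisely the reasoning the paper leaves implicit: the Corollary is stated without proof because inverting both sides of the relation in Proposition~\ref{prop:H-Z}, using $(A\circ B)^{-1}=B^{-1}\circ A^{-1}$ and $(H_{ijk})^{-1}=(H^{-1})_{ijk}$, immediately yields the Zamolodchikov equation for $H^{-1}=H^{\mathrm{op}}$ with the two sides interchanged. Your added remark that these identities hold in the birational/rational-function sense is the right level of care, and no further computation is needed.
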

Let us give more conceptual proof of Proposition~\ref{prop:H-Z} using realization of the Hirota map in terms of the local Yang--Baxter equation \eqref{eq:lYB-H}. For $1\leq i < j \leq 4$ by $L^H_{ij}(y)$ denote the corresponding $4\times 4$ extension of the $2\times 2$ matrix $L^H(y)$. Then, given six non-commuting variables $(y_1, \dots , y_6)$, consider the equation  
\begin{equation*} 
L_{12}^H (y_1) L_{13}^H (y_2) L^H_{23}(y_3) L_{14}^H (y_4) L_{24}^H (y_5) L^H_{34}(y_6) = 
 L^H_{34}(y_6^\prime)  L_{24}^H (y_5^\prime) L_{14}^H (y_4^\prime) L_{23}^H(y^\prime_3) L_{13}^H(y^\prime_2) L_{12}^H(y^\prime_1),
\end{equation*}
which reads explicitly as follows
\begin{equation*}
\left( \begin{array}{cccc} y_1 y_2 y_4 & y_1 y_2 + y_3 y_5 + y_1y_5 & y_1 + y_3 + y_6 & 1 \\  y_2 y_4 & y_2 + y_5  & 1 & 0 \\ 
y_4 & 1 & 0 & 0 \\ 1 & 0 & 0 & 0\end{array} \right) = 
\left( \begin{array}{cccc}  y_4^\prime   y_2^\prime  y_1^\prime  &  y_4^\prime y_2^\prime &  y_4^\prime & 1 \\ y_2^\prime y_1^\prime + y_5^\prime y_1^\prime + y_5^\prime y_3^\prime  & y_2^\prime + y_5^\prime  & 1 & 0 \\ 
y_1^\prime + y_3^\prime + y_6^\prime  & 1 & 0 & 0 \\ 1 & 0 & 0 & 0\end{array} \right).
\end{equation*} 
The above system of equations has unique solution $(y_1^\prime, \dots , y_6^\prime)$ given by the transformation formulas~\eqref{eq:H-Z-yy}. The refactorization can be made in four steps using the local Yang--Baxter equation \eqref{eq:lYB-H} transitions and commutativity of the matrices acting in different spaces. To make formulas more readable define
\begin{align*}
(\tilde{y}_1^{(1)}, \dots , \tilde{y}_6^{(1)}) & = H_{123}(y_1,\dots , y_6) , & 
(\tilde{y}_1^{(2)}, \dots , \tilde{y}_6^{(2)}) & = H_{145}(\tilde{y}_1^{(1)}, \dots , \tilde{y}_6^{(1)}) \\
(\tilde{y}_1^{(3)}, \dots , \tilde{y}_6^{(3)}) & = H_{246}(\tilde{y}_1^{(2)}, \dots , \tilde{y}_6^{(2)}) ,  & (\tilde{y}_1^{(4)}, \dots , \tilde{y}_6^{(4)}) & = H_{356}(\tilde{y}_1^{(3)}, \dots , \tilde{y}_6^{(3)}) .
\end{align*}
The upper equality of \eqref{eq:H-Z-yyy} can be obtained by the following sequence of transitions
\begin{gather*}
\underline{ L_{12}^H (y_1) L_{13}^H (y_2) L^H_{23}(y_3)} L_{14}^H (y_4) L_{24}^H (y_5) L^H_{34}(y_6) = \\ =
L_{23}^H (\tilde{y}_3^{(1)}) L_{13}^H (\tilde{y}_2^{(1)}) \underline{ L^H_{12}(\tilde{y}_1^{(1)}) L_{14}^H (\tilde{y}_4^{(1)}) L_{24}^H (\tilde{y}_5^{(1)})} L^H_{34}(\tilde{y}_6^{(1)}) =\\ =
L_{23}^H (\tilde{y}_3^{(2)}) 
L_{24}^H (\tilde{y}_5^{(2)}) \underline{ L_{13}^H (\tilde{y}_2^{(2)})  L_{14}^H (\tilde{y}_4^{(2)})  L^H_{34}(\tilde{y}_6^{(2)}) } L^H_{12}(\tilde{y}_1^{(2)})  = \\ =
\underline{ L_{23}^H (\tilde{y}_3^{(3)}) 
L_{24}^H (\tilde{y}_5^{(3)}) 
L^H_{34}(\tilde{y}_6^{(3)})}
L_{14}^H (\tilde{y}_4^{(3)})   
 L_{13}^H (\tilde{y}_2^{(3)})   L^H_{12}(\tilde{y}_1^{(3)})  = \\ =
 	L^H_{34}(\tilde{y}_6^{(4)}) 	L_{24}^H (\tilde{y}_5^{(4)}) 
  L_{14}^H (\tilde{y}_4^{(4)})   L_{23}^H (\tilde{y}_3^{(4)})  
 L_{13}^H (\tilde{y}_2^{(4)})   L^H_{12}(\tilde{y}_1^{(4)}) .
\end{gather*}
To obtain the lower equality of \eqref{eq:H-Z-yyy} first commute $L^H_{23}(y_3) L_{14}^H (y_4) = L_{14}^H (y_4)L^H_{23}(y_3) $ and start making transitions from the other side.

\subsection{Desargues maps in the affine gauge}
Geometry of the Desargues maps does not change if we multiply the wave function by an arbitrary gauge function. Various gauge-equivalent forms of the corresponding linear problem were discussed in~\cite{Dol-Des}. In particular, in the affine gauge, where the collinearity condition is expressed as proportionality of vectors along the line, the linear problem gauge-equivalent to \eqref{eq:lin-dKP} reads
\begin{equation} \label{eq:lin-mdKP}
\bPhi_{(j)} - \bPhi  = (\bPhi_{(i)} - \bPhi ) B_{ij},  \qquad i \ne j \leq N.
\end{equation}
It is known that the compatibility condition of the above linear system leads to the 
non-commutative discrete mKP system~\cite{FWN-Capel}. Under identification of the fields given by
\begin{align*}
x_1 & = 1 - B_{12},   & x_2 &= 1 - B_{13(2)} ,  & x_3 &= 1 - B_{23} ,\\
\tilde{x}_1 & = 1 - B_{12(3)} , & \tilde{x}_2 &= 1 - B_{13} ,  & \tilde{x}_3 &= 1 - B_{23(1)} .
\end{align*}
we have then the folowing linear problem gauge-equivalent to
\eqref{eq:lin-H}
\begin{align} \nonumber
\bPhi_{(2)} -  \bPhi_{(1)}& = (\bPhi -  \bPhi_{(1)}) x_1 , & 
\bPhi_{(23)} - \bPhi_{(13)} & = (\bPhi_{(3)} -\bPhi_{(13)}) \tilde{x}_1 , \\
\bPhi_{(23)} - \bPhi_{(12)} & = (\bPhi_{(2)} - \bPhi_{(12)} ) x_2 , & 
\bPhi_{(3)} - \bPhi_{(1)} & = (\bPhi - \bPhi_{(1)} ) \tilde{x}_2, 
\label{eq:lin-aff} \\
\bPhi_{(3)} - \bPhi_{(2)} & = (\bPhi -\bPhi_{(2)} )x_3 , &
\bPhi_{(13)} - \bPhi_{(12)}  & = (\bPhi_{(1)} - \bPhi_{(12)}  ) \tilde{x}_3 .
\nonumber
\end{align}
The following non-commutative affine Hirota  map  $F(x_1,x_2,x_3) = (\tilde{x}_1,\tilde{x}_2,\tilde{x}_3)$, which  can be extracted from compatibility of the above  linear system is given by
\begin{align} \nonumber
\tilde{x}_1 & = \left[ x_3 + x_1(1-x_3)\right]^{-1} x_1 x_2, \\ 
\label{eq:F-map}
\tilde{x}_2 & = x_3 + x_1(1-x_3), \\
\nonumber
\tilde{x}_3 & = 1 + (x_2-1) \left[ (1-x_1) x_3 + x_1(1-x_2)\right]^{-1} \left( x_3 + x_1 (1-x_3) \right).
\end{align}
\begin{Prop}
	The affine Hirota map $F \colon (\tilde{x}_1,\tilde{x}_2,\tilde{x}_3) \dashrightarrow 
	(x_1,x_2,x_3) $ is birational with its inverse being the opposite map $F^\mathrm{op}$ 
	\begin{align*}
	x_1 & = \tilde{x}_2 \tilde{x}_1 \left[ \tilde{x}_3 + (1-\tilde{x}_3)\tilde{x}_1 \right]^{-1}, \\
	x_2 & = \tilde{x}_3 + (1-\tilde{x}_3)\tilde{x}_1, \\
	x_3 & = 1 + \left( \tilde{x}_3 + (1-\tilde{x}_3)\tilde{x}_1 \right) \left[ \tilde{x}_3(1-\tilde{x}_1) + (1-\tilde{x}_2) 
	\tilde{x}_1 \right]^{-1} (\tilde{x}_2-1),
	\end{align*}
	and satisfies the functional Zamolodchikov equation.
\end{Prop}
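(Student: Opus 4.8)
The plan is to deduce both assertions from the corresponding properties of the Hirota map $H$, exploiting that the affine linear problem \eqref{eq:lin-aff} is a gauge form of the Hirota linear problem \eqref{eq:lin-H}. Concretely, writing $\bPhi = \bphi\,\sigma$ for a $\DD$-valued gauge function $\sigma\colon\ZZ^N\to\DD$ and substituting into \eqref{eq:lin-mdKP} by means of \eqref{eq:lin-dKP}, one matches the coefficients of the two independent vectors $\bphi$ and $\bphi_{(i)}$ along each line to obtain $B_{ij} = \sigma_{(i)}^{-1}\sigma_{(j)}$ and $U_{ij} = -\sigma\,(1-B_{ij})\,\sigma_{(j)}^{-1}$. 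In terms of the field variables this is precisely a relation of the form $y = -\sigma_a\,x\,\sigma_b^{-1}$ between each $x$-variable $1-B_{ij}$ and the corresponding $y$-variable $U_{ij}$, the sites $a,b$ being fixed by the index pair carried by that variable. I would first record this dictionary and check its internal consistency; for instance $\tilde x_2 = x_3+x_1(1-x_3)$ should translate into $B_{13}=B_{12}B_{23}$, which is exactly the cocycle identity $\sigma_{(1)}^{-1}\sigma_{(3)} = \sigma_{(1)}^{-1}\sigma_{(2)}\,\sigma_{(2)}^{-1}\sigma_{(3)}$ built into $B_{ij}=\sigma_{(i)}^{-1}\sigma_{(j)}$.

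For the claim $F^{-1}=F^{\mathrm{op}}$ the quickest route is a direct substitution of the stated inverse formulas into \eqref{eq:F-map} and simplification to the identity; the only care required is with the nested non-commutative inverse appearing in $\tilde x_3$. Alternatively this descends from the already established $H^{-1}=H^{\mathrm{op}}$ via the dictionary above, since passage to the opposite ring reverses all products and the gauge relation $y=-\sigma_a x\sigma_b^{-1}$ is intertwined with its reverse.

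For the Zamolodchikov equation the cleanest mechanism is the local Yang--Baxter realization. I would produce matrices $L^F_{ij}(x)$, obtained from the $L^H_{ij}$ of \eqref{eq:lYB-H} by conjugation with suitable diagonal (gauge) matrices built from $\sigma$, and check that the local Yang--Baxter equation $L^F_{12}(x_1)L^F_{13}(x_2)L^F_{23}(x_3)=L^F_{23}(\tilde x_3)L^F_{13}(\tilde x_2)L^F_{12}(\tilde x_1)$ reproduces exactly \eqref{eq:F-map}. Once such $L^F$ are in place and correctly matched to the legs, the four-step refactorization argument given for $H$ applies verbatim: it uses only the local Yang--Baxter transition and the commutativity of matrices acting in disjoint $2$-planes, and the latter is preserved under conjugation by a common diagonal. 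Hence $F$ satisfies the tetrahedron equation, and its inverse does as well.

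The main obstacle is the gauge bookkeeping. The per-variable relation $y=-\sigma_a x\sigma_b^{-1}$ is not a simple per-leg change of variables: the factors $\sigma$ at the various lattice sites are shared among different legs, so the descent from $H$ to $F$ is genuine only if these factors telescope correctly across the whole six-variable configuration. At the level of the $L$-matrices this is the statement that the diagonal conjugations at intermediate sites cancel between neighbouring factors; verifying this cancellation --- equivalently, that the $F$-version of the local Yang--Baxter equation and of the tetrahedron equation are strictly equivalent to their $H$-versions, not merely implied --- is the one place where real work is needed. Should the gauge tracking prove delicate, the fallback is a direct computation of both compositions in the tetrahedron equation for $F$, exactly as in Proposition~\ref{prop:H-Z}; this is guaranteed to succeed but is algebraically heavier because of the nested inverses in \eqref{eq:F-map}.
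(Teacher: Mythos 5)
Your route is correct in outline, but it is genuinely different from the proof the paper actually carries out for this statement. The inverse claim you handle exactly as the paper does (direct substitution; the paper states the formulas without further comment). For the Zamolodchikov property, however, the paper's detailed argument is structural: it factorizes $F=P_{23}\circ V_{12}\circ N_{13}$ as in \eqref{eq:fact-Z}, proves that $N$ and $V$ satisfy the pentagonal and reversed pentagonal equations \eqref{eq:N-pent}, \eqref{eq:V-pent-r}, and then invokes Lemma~\ref{lem:tt}, which reduces the tetrahedron equation for the factorized map to the ten-term relation \eqref{eq:ten-term}; that relation is established in Proposition~\ref{prop:NV-tt} both by computation and geometrically via the star configuration $(10_2,5_4)$. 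Your local Yang--Baxter refactorization is instead the alternative the paper only mentions in passing (``the technique via the local Yang--Baxter equation transitions \dots works also here without modifications''), and it is sound: once one has an $L$-matrix whose local Yang--Baxter equation \eqref{eq:lYB-A} is equivalent to \eqref{eq:F-map}, the four-step transition argument given for $H$ after Proposition~\ref{prop:H-Z} applies, together with the (implicit, and worth stating) uniqueness of the refactorization of the $4\times 4$ product, which pins down the six output variables. What the paper's decomposition buys is the geometric interpretation and the passage to the homogeneous, quantum and Poisson versions later in the paper; what your route buys is brevity.

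The one concretely shaky step is your plan to \emph{derive} $L^F_{ij}$ by conjugating the $L^H_{ij}$ of \eqref{eq:lYB-H} with diagonal gauge matrices. No field-independent diagonal conjugation can work: already over a commutative ring $\det L^H(y)=-1$ is constant while $\det L^A(x)=x-1$ depends on the field, so the conjugating diagonals would have to be built from the dynamical potential $\sigma$ at lattice sites shared between legs --- precisely the telescoping you flag --- and once the gauge matrices are field-dependent, the commutativity of factors acting in disjoint slots, which the refactorization argument relies on, is no longer automatic. The paper avoids this entirely by positing $L^A(x)=\left(\begin{smallmatrix} x & 1\\ 1-x & 0\end{smallmatrix}\right)$ directly (Sergeev's ansatz, stated just after the proposition in question) and verifying by a short matrix computation that \eqref{eq:lYB-A} is equivalent to \eqref{eq:F-map}. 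Since your plan already includes that verification as its final check, you should drop the conjugation heuristic and start from the ansatz; with that adjustment, and with the uniqueness-of-refactorization step made explicit, your argument closes.
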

In~\cite{Sergeev-LMP} one can find also the local Yang--Baxter equation derivation of the affine Hirota map~\eqref{eq:F-map} for commuting variables. It turns out that the ansatz presented there is  valid also in the non-commutative case. Again we only state the result, which can be easily checked, postponing its geometric explanation to Section~\ref{sec:Hir-h}.
\begin{Prop}
	Define $2\times 2$ matrix $L^A(x)=\left( \begin{array}{cc} x & 1 \\ 1 - x& 0 \end{array} \right)$ and its $3\times 3$ extensions by
	\begin{equation*}
	L^A_{12}(x) = \left( \begin{array}{ccc} x & 1 & 0 \\ 1-x & 0 & 0 \\ 0 & 0 & 1\end{array} \right) , \quad
	L^A_{13}(x) = \left( \begin{array}{ccc} x & 0 & 1 \\ 0 & 1 & 0 \\ 1-x & 0 & 0\end{array} \right) , \quad
	L^A_{23}(x) = \left( \begin{array}{ccc} 1 & 0 & 0 \\ 0 & x & 1 \\ 0 & 1-x & 0\end{array} \right) .
	\end{equation*}
	The corresponding local Yang--Baxter equation 
	\begin{equation} \label{eq:lYB-A}
	L_{12}^A (x_1) L_{13}^A  (x_2) L^A_{23}(x_3) = 
	L_{23}^A(\tilde{x}_3) L_{13}^A(\tilde{x}_2) L_{12}^A(\tilde{x}_1),
	\end{equation}
	reads explicitly as follows
	\begin{equation*}
	\left( \begin{array}{ccc} x_1 x_2 & x_1(1 - x_3) & 1 \\ (1-x_1) x_2 & (1 - x_1)(1-x_3) & 0 \\ 
	1 - x_2 & 0 & 0\end{array} \right) = 
	\left( \begin{array}{ccc} \tilde{x}_2 \tilde{x}_1 & \tilde{x}_2 & 1 \\ 
	(1-\tilde{x}_2) \tilde{x}_1 + \tilde{x}_3(1-\tilde{x}_1) & 1 -\tilde{x}_2  & 0 \\ 
	(1-\tilde{x}_3)  (1-\tilde{x}_1) & 0 & 0\end{array} \right),
	\end{equation*} 
	and is equivalent to the transformation formulas~\eqref{eq:F-map}.
\end{Prop}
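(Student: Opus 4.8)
The plan is to prove the identity by direct matrix multiplication, treating the two sides as functions of the input variables $(x_1,x_2,x_3)$ and of the output variables $(\tilde{x}_1,\tilde{x}_2,\tilde{x}_3)$ respectively, and then comparing entries. First I would expand the left-hand side $L^A_{12}(x_1)L^A_{13}(x_2)L^A_{23}(x_3)$. Since each factor is a copy of $L^A$ embedded in a different $2\times2$ block of the $3\times3$ identity, the product is quickly assembled; most entries are $0$ or $1$, so each entry of the triple product reduces to a short sum of monomials, and the non-commutative order is fixed unambiguously by performing the multiplication left to right (for instance $x_1$ comes from the first factor and $x_2$ from the second, so they can only appear as $x_1x_2$). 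This reproduces the first matrix displayed in the statement. I would then expand $L^A_{23}(\tilde{x}_3)L^A_{13}(\tilde{x}_2)L^A_{12}(\tilde{x}_1)$ in the same way to obtain the second displayed matrix as a function of the $\tilde{x}_i$.

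With both matrices in hand, the equivalence with \eqref{eq:F-map} follows by comparing entries. The $(1,2)$ entry gives $\tilde{x}_2$ outright, and the $(1,1)$ entry reads $\tilde{x}_2\tilde{x}_1 = x_1 x_2$, so that $\tilde{x}_1 = \tilde{x}_2^{-1}x_1 x_2$; substituting $\tilde{x}_2$ recovers the first formula of \eqref{eq:F-map}, where the generic invertibility of $\tilde{x}_2 = x_3 + x_1(1-x_3)$ is exactly the condition under which the map is defined. For $\tilde{x}_3$ I would use the $(3,1)$ entry, namely $(1-\tilde{x}_3)(1-\tilde{x}_1) = 1-x_2$. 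Computing $1-\tilde{x}_1 = \tilde{x}_2^{-1}\bigl[(1-x_1)x_3 + x_1(1-x_2)\bigr]$ from the previous step and inverting it (again a generic invertibility assumption) yields precisely the third formula of \eqref{eq:F-map}.

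It remains to check that the two redundant entries are automatically satisfied, which is what makes the over-determined matrix identity genuinely equivalent to the three-component map rather than merely implying it. The $(2,2)$ entry is immediate from the algebraic identity $1-\tilde{x}_2 = (1-x_1)(1-x_3)$. For the $(2,1)$ entry I expect the key step to be the telescoping relation $\tilde{x}_2(1-\tilde{x}_1) = (1-x_1)x_3 + x_1(1-x_2)$, which follows at once from $\tilde{x}_2\tilde{x}_1 = x_1 x_2$; feeding it into the formula for $\tilde{x}_3$ collapses the inverse and gives $\tilde{x}_3(1-\tilde{x}_1) = x_2 - \tilde{x}_1$, while $(1-\tilde{x}_2)\tilde{x}_1 = \tilde{x}_1 - x_1 x_2$, so that their sum is $(1-x_1)x_2$, matching the left-hand side. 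The remaining entries $(1,3),(2,3),(3,2),(3,3)$ are the structural constants $1,0,0,0$ on both sides and agree trivially.

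The computation is elementary, so the only genuine obstacle is the bookkeeping forced by non-commutativity: every inverse must be kept on its correct side, and the cancellation producing $\tilde{x}_3$ hinges on recognizing the telescoping identity above rather than on expanding blindly. Once that pattern is isolated, both implications — the matrix equation forcing \eqref{eq:F-map}, and \eqref{eq:F-map} reproducing the matrix equation — drop out simultaneously from the same entrywise comparison, which is why the result can be asserted to hold also in the totally non-commutative case.
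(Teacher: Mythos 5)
Your direct entrywise verification is correct and is essentially the paper's own argument: the authors offer no proof beyond asserting the identity "can be easily checked" (the conceptual derivation via the homogeneous gauge comes only later), and your expansion, extraction of $\tilde{x}_1,\tilde{x}_2,\tilde{x}_3$ from the $(1,1)$, $(1,2)$, $(3,1)$ entries, and confirmation that the $(2,2)$ and $(2,1)$ entries are automatic via the telescoping identity $\tilde{x}_2(1-\tilde{x}_1)=(1-x_1)x_3+x_1(1-x_2)$ all go through in the non-commutative setting exactly as you describe. One caveat: the actual $(1,2)$ entry of $L^A_{12}(x_1)L^A_{13}(x_2)L^A_{23}(x_3)$ is $x_3+x_1(1-x_3)$, not the $x_1(1-x_3)$ printed in the statement (a typo in the paper, as consistency with the $(2,2)$ entry $1-\tilde{x}_2=(1-x_1)(1-x_3)$ and with \eqref{eq:F-map} confirms), so your claim that the expansion "reproduces the first matrix displayed" is not literally accurate — though your subsequent algebra correctly uses the corrected value $\tilde{x}_2=x_3+x_1(1-x_3)$ throughout.
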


The final result of this Section reads as follows.
\begin{Prop} \label{prop:F-Z}	
	The non-commutative affine Hirota map given by \eqref{eq:F-map} satisfies the functional Zamolodchikov equation.
\end{Prop}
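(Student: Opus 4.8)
The plan is to prove Proposition~\ref{prop:F-Z} by the local Yang--Baxter refactorization argument, repeating the conceptual proof of Proposition~\ref{prop:H-Z} with the matrices $L^A$ in place of $L^H$. For $1\leq i<j\leq 4$ let $L^A_{ij}(x)$ be the $4\times 4$ extension of $L^A(x)$ acting nontrivially in the $i$-th and $j$-th directions and as the identity in the other two. Given six non-commuting variables $(x_1,\dots,x_6)$, I would compare the two orderings
\begin{equation*}
L^A_{12}(x_1)L^A_{13}(x_2)L^A_{23}(x_3)L^A_{14}(x_4)L^A_{24}(x_5)L^A_{34}(x_6)
\end{equation*}
and its reverse-edge counterpart $L^A_{34}(x_6')L^A_{24}(x_5')L^A_{14}(x_4')L^A_{23}(x_3')L^A_{13}(x_2')L^A_{12}(x_1')$, and reduce the Zamolodchikov identity \eqref{eq:f-tetrahedron} for $F$ to the assertion that these two products coincide, the primed variables being determined by the unprimed ones.

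The combinatorial dictionary is the one already used for the Hirota map: the six copies of $\mathcal{X}$ are identified with the six edges of the tetrahedron on vertices $\{1,2,3,4\}$, via $1\leftrightarrow(1,2)$, $2\leftrightarrow(1,3)$, $3\leftrightarrow(2,3)$, $4\leftrightarrow(1,4)$, $5\leftrightarrow(2,4)$, $6\leftrightarrow(3,4)$, so that each factor $F_{123}$, $F_{145}$, $F_{246}$, $F_{356}$ is a single local Yang--Baxter move \eqref{eq:lYB-A} on one of the four triangular faces $\{1,2,3\}$, $\{1,2,4\}$, $\{1,3,4\}$, $\{2,3,4\}$. I would then rewrite the first product into the second in four steps, one per face, exactly as in the chain displayed for $H$: each step applies \eqref{eq:lYB-A} to the three mutually adjacent factors attached to the relevant face, while between steps factors indexed by disjoint pairs are slid past one another using $L^A_{ij}(x)L^A_{kl}(x')=L^A_{kl}(x')L^A_{ij}(x)$ whenever $\{i,j\}\cap\{k,l\}=\emptyset$. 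Performing the four moves in the two opposite orders and invoking the uniqueness of the reverse-edge factorization gives the two sides of the Zamolodchikov equation for $F$.

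Beyond the single-face identity \eqref{eq:lYB-A}, which is already established, only two ingredients are required, and both carry over unchanged from the Hirota setting. First, the commutation of $L^A$-matrices on disjoint index pairs is immediate, since in the $4\times 4$ form each such matrix is block diagonal with its nontrivial $2\times 2$ block supported on complementary coordinate subspaces. Second, one needs the injectivity of the reverse-edge parametrization $(x_1',\dots,x_6')\mapsto L^A_{34}(x_6')\cdots L^A_{12}(x_1')$ over the division ring $\DD$: only then does the common value of the two rewritten products force the two tuples of primed variables, and hence the two compositions, to agree. This injectivity is the substantive point, because the commutative shortcut of comparing determinants is unavailable; instead one recovers the parameters from the matrix entries by reading off the normalized corner entry and back-substituting through a cascade of one-sided linear equations, exactly as the birationality of $F$ --- equivalently, the equivalence of \eqref{eq:lYB-A} with \eqref{eq:F-map} --- guarantees at the level of a single face.

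I expect this recovery step to be the main obstacle, not because it is deep but because every inversion must be organized on the correct side in $\DD$; once it is in place the four-step chain is purely formal. As a fallback the identity can be proved by brute force, as was done for $H$, by computing both compositions and matching them against the affine analogue of \eqref{eq:H-Z-yy}; I would avoid this since the affine map \eqref{eq:F-map} is markedly heavier than the Hirota map. Finally I would record that the tempting shortcut of deducing the statement from Proposition~\ref{prop:H-Z} through a factor-wise substitution $x=y(1+y)^{-1}$ fails: the gauge relating $L^A$ to $L^H$ carries an extra $x$-dependent diagonal factor that does not cancel through the refactorization, so $F$ is genuinely not conjugate to $H$, and the self-contained local Yang--Baxter argument above is the cleaner route.
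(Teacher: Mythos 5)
Your proof is correct, and it develops precisely the option the paper flags in passing: immediately after stating Proposition~\ref{prop:F-Z} the authors remark that the local Yang--Baxter transition technique of the previous section works here ``without modifications,'' but the proof they actually carry out is a different, geometro-combinatorial one. The paper's own argument runs through Section~\ref{sec:pentagonal-decomposition}: factor $F=P_{23}\circ V_{12}\circ N_{13}$ as in \eqref{eq:fact-Z}, establish the pentagon equation \eqref{eq:N-pent} for the normalization map and the reversed pentagon equation \eqref{eq:V-pent-r} for the Veblen map, reduce the Zamolodchikov equation for any map of this factored form to the ten-term relation \eqref{eq:ten-term} (Lemma~\ref{lem:tt}), and then verify that relation for $N$ and $V$ both by computation and geometrically via the star configuration $(10_2,5_4)$ (Proposition~\ref{prop:NV-tt}). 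Your refactorization argument with $L^A$ in place of $L^H$ is sound as written: the single-face identity \eqref{eq:lYB-A} is already established and equivalent to \eqref{eq:F-map}, factors on disjoint index pairs commute for exactly the block-structure reason you give, and you correctly isolate the one substantive point --- unique recoverability of $(x_1^\prime,\dots,x_6^\prime)$ from the reverse-ordered $4\times 4$ product over a division ring --- which goes through by the entry-reading cascade you sketch (the top row yields $x_4^\prime$, then $x_2^\prime$, then $x_1^\prime$ by successive one-sided division, and the remaining variables follow by back-substitution, generically in the free field). The trade-off between the two routes is real: yours is shorter and self-contained given Section~\ref{sec:Des-Hir-Zam}, whereas the paper's decomposition route exposes the incidence-geometric content (normalization and Veblen flips, the star configuration) and is the one that scales in Sections~\ref{sec:NP-hom}--\ref{sec:Zam-h} to the homogeneous gauge with functional parameters and to the quantum/Poisson reductions, where a naive local Yang--Baxter refactorization would require nontrivial gauge bookkeeping. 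Your closing caveat that $F$ is not conjugate to $H$ by a factor-wise substitution is consistent with the paper, which derives the two maps from genuinely different gauges and proves their Zamolodchikov properties separately.
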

This result can be proven by direct verification again, but the calculations become more involved then in the case of the Hirota map. Notice however that the geometric content of both maps is the same. Similarly, the technique via the local Yang--Baxter equation transitions, which was applied in the previous Section, works also here without modifications. 
In next Sections we provide more conceptual geometric approach to Zamolodchikov property of the affine Hirota map via its decomposition into simpler maps.

\section{Non-commutative pentagonal maps and the ten-term relation}
\label{sec:pentagonal-decomposition}

\subsection{The normalization map}
Following~\cite{DoliwaSergeev-pentagon} consider four collinear points 
$AB$, $AC$, $AD$ and $AE$ on a line $A$. 
Their non-homogeneous coordinates $\bPhi_{AK}$, $K=B,C,D,E$ satisfy two pairs of linear relations
\begin{equation} \label{eq:lin-N}
\begin{array}{ll}
\bPhi_{AB}- \bPhi_{AC} = (\bPhi_{AD} - \bPhi_{AC})x_1 \\
\bPhi_{AE} -\bPhi_{AB} = (\bPhi_{AD} - \bPhi_{AB})x_2 
\end{array}
\qquad \text{and} \qquad
\begin{array}{ll}
\bPhi_{AB} - \bPhi_{AC} = (\bPhi_{AE} - \bPhi_{AC}) \hat{x}_1  \\
\bPhi_{AE} - \bPhi_{AB} = (\bPhi_{AD} - \bPhi_{AB} ) \hat{x}_2 .
\end{array}
\end{equation}
The normalization map $N\colon (x_1,x_2) \dashrightarrow  (\hat{x}_1, \hat{x}_2) $ is defined as a consequence of that change of basis
\begin{equation} \label{eq:N}
\begin{array}{ll}
\hat{x}_1 = & \left[ x_2 + x_1(1-x_2)\right]^{-1}x_1, \\
 \hat{x}_2 = & x_2 + x_1(1-x_2),
\end{array}
\end{equation}
i.e. equations \eqref{eq:lin-N} can be considered as a linear problem for the map.
\begin{Rem}
	In \cite{DoliwaSergeev-pentagon} the normalization map was described in homogeneous coordinates which resulted in different form of the transformation, but the geometric meaning is the same (see Section~\ref{sec:NP-hom} for more information).
\end{Rem}
\begin{Prop} \label{prop:N-p}
The normalization map satisfies the functional pentagonal equation
\begin{equation} \label{eq:N-pent}
N_{12} \circ N_{13} \circ  N_{23} = N_{23} \circ  N_{12} .
\end{equation}
\end{Prop}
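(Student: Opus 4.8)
The plan is to verify the pentagonal identity \eqref{eq:N-pent} directly on $\mathcal{X}^3$, organizing the non-commutative computation around two algebraic identities satisfied by $N$ that keep the bookkeeping manageable. Writing $N(u,v)=(\hat u,\hat v)$ with $\hat v = v + u(1-v)$ and $\hat u = \hat v^{-1} u$, I would first record the two facts
\[
1 - \hat v = (1-u)(1-v), \qquad \hat v\,\hat u = u,
\]
both immediate from \eqref{eq:N}. The first turns the nonlinear combination defining the second output coordinate into a purely multiplicative expression in the shifted variables $1-u$, $1-v$; the second says that the product of the two output coordinates recovers the first input coordinate. These are exactly the tools needed to tame the nested inverses appearing on the two sides of \eqref{eq:N-pent}.

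Next I would introduce names for all intermediate quantities produced along each side of \eqref{eq:N-pent} applied to a generic triple $(a,b,c)$. For the right-hand side $N_{23}\circ N_{12}$ this gives $(A,B',c''')$ with $B=b+a(1-b)$, $A=B^{-1}a$, then $c'''=c+B(1-c)$, $B'=c'''^{-1}B$. For the left-hand side $N_{12}\circ N_{13}\circ N_{23}$ one passes through the triples $(a,b',c')$, $(a',b',c'')$ and finally $(a'',b'',c'')$, where $c'=c+b(1-c)$, $b'=c'^{-1}b$, then $c''=c'+a(1-c')$, $a'=c''^{-1}a$, and then $b''=b'+a'(1-b')$, $a''=b''^{-1}a'$. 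The assertion \eqref{eq:N-pent} then reduces to the three scalar identities $c''=c'''$, $a''=A$ and $b''=B'$.

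The equality of third coordinates is the easy step: applying the first recorded identity twice telescopes both $1-c''$ and $1-c'''$ to the same product $(1-a)(1-b)(1-c)$, whence $c''=c'''$. The heart of the argument, and the step I expect to be the main obstacle, is matching the first two coordinates in the correct non-commutative order, where one cannot simply rearrange factors. Here I would establish the single product relation $c''\,b''=B$, which I expect to follow from a short telescoping computation resting on the elementary cancellation $c'\,b'=b$ supplied by $\hat v\,\hat u = u$. Granting $c''\,b''=B$, the second recorded identity yields $a''=b''^{-1}c''^{-1}a=(c''b'')^{-1}a=B^{-1}a=A$ and, since $c'''=c''$, also $B'=c'''^{-1}B=c''^{-1}(c''b'')=b''$, which completes the proof.

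I would finally remark that the same conclusion admits a conceptual reading: since $N$ encodes the passage between two coordinate descriptions of the collinear points governed by \eqref{eq:lin-N}, the pentagonal identity expresses the consistency of iterated re-normalizations, and an alternative proof through the matrix/homogeneous realization of $N$ developed in Section~\ref{sec:NP-hom} should reduce \eqref{eq:N-pent} to associativity of a product of elementary matrices. The direct route above, built on the two multiplicative identities, is however the most economical and is the one I would carry out in detail.
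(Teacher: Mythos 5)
Your proposal is correct and is essentially the paper's own approach: the paper likewise proves the pentagon identity by direct verification on a generic triple, recording only the resulting expressions, which coincide with your $(A,B',c''')=\bigl(B^{-1}a,\;c'''^{-1}B,\;c+B(1-c)\bigr)$, and its combinatorial argument via Figure~\ref{fig:pentagon-N} matches your closing conceptual remark. The one step you leave as expected, $c''\,b''=B$, does telescope exactly as you predict: $c''b''=c''b'+c''a'(1-b')=b+a(1-c')b'+a(1-b')=b+a(1-c'b')=b+a(1-b)=B$, using $c''a'=a$ and $c'b'=b$, so nothing is missing.
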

\begin{proof}
	By direct verification. We provide again only the final expressions
	\begin{equation*}
	(x_1^\prime , x_2^\prime ,x_3^\prime  ) = 
	\begin{cases}
	(N_{12} \circ N_{13} \circ  N_{23})(x_1,x_2,x_3) ,\\ 
	\; \; \; (N_{23} \circ  N_{12})	(x_1,x_2,x_3),
	\end{cases} 
	\end{equation*}
	where
	\begin{align*}
	x_1^\prime & = 
		\left[ x_2 + x_1(1-x_2)\right]^{-1}x_1 , \\
	x_2^\prime & = \left[x_3 + x_2 (1 - x_3) + x_1(1-x_2)(1-x_3) \right]^{-1}
	\left[ x_2 + x_1(1-x_2)\right] ,
	\\
	x_3^\prime & = x_3 + x_2 (1 - x_3) + x_1(1-x_2)(1-x_3).
	\end{align*}
\end{proof}
\begin{Rem}
	The pentagonal condition \eqref{eq:N-pent} looks like a restricted version of the more famous Yang--Baxter condition~\eqref{eq:f-YB}.  However, in modern theory of quantum groups \cite{BaajSkandalis,Woronowicz-P,Timmermann} the quantum pentagon equation seems to play more profound role. See also~\cite{Zakrzewski,STS} for discussion of pentagonal property of Poisson maps.
\end{Rem}

\begin{Cor}
The normalization map $N$ is birational with its inverse given by
\begin{equation} \label{eq:N-inv}
\begin{array}{ll}
x_1 & = \hat{x}_2 \hat{x}_1 , \\ 
x_2 & = (1-\hat{x}_2 \hat{x}_1)^{-1} \hat{x}_2 (1-\hat{x}_1).
\end{array}
\end{equation}
\end{Cor}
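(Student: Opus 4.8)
The plan is to invert the defining relations \eqref{eq:N} directly, exploiting that both components of $N$ are built from the single quantity $x_2 + x_1(1-x_2)$. First I would observe that the second formula of \eqref{eq:N} identifies $\hat{x}_2$ with $x_2 + x_1(1-x_2)$, so the first formula reads simply $\hat{x}_1 = \hat{x}_2^{-1} x_1$. Multiplying on the left by $\hat{x}_2$ yields at once
\[
x_1 = \hat{x}_2 \hat{x}_1,
\]
which is the first line of \eqref{eq:N-inv} and requires no further work.

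Next I would recover $x_2$. The essential algebraic step is the regrouping
\[
x_2 + x_1(1-x_2) = x_1 + (1-x_1)x_2,
\]
valid in any ring since both sides equal $x_1 + x_2 - x_1 x_2$; this is the one place where non-commutativity must be handled with care, as the factor $(1-x_1)$ has to be kept on the left of $x_2$. It gives $\hat{x}_2 = x_1 + (1-x_1)x_2$, hence $(1-x_1)x_2 = \hat{x}_2 - x_1$ and $x_2 = (1-x_1)^{-1}(\hat{x}_2 - x_1)$. Substituting the already-established $x_1 = \hat{x}_2 \hat{x}_1$ turns the right-hand side into $(1-\hat{x}_2\hat{x}_1)^{-1}(\hat{x}_2 - \hat{x}_2\hat{x}_1) = (1-\hat{x}_2\hat{x}_1)^{-1}\hat{x}_2(1-\hat{x}_1)$, which is precisely the second line of \eqref{eq:N-inv}.

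Finally I would confirm that these formulas really define a two-sided inverse rather than merely a left inverse. Working in the division ring of rational functions in $\hat{x}_1,\hat{x}_2$, it suffices to substitute \eqref{eq:N-inv} back into \eqref{eq:N} and verify that $x_2 + x_1(1-x_2)$ collapses to $\hat{x}_2$ and that $[x_2+x_1(1-x_2)]^{-1}x_1$ collapses to $\hat{x}_1$; this is a short computation once one tracks the inverse factor $(1-\hat{x}_2\hat{x}_1)^{-1}$ carefully. The only genuine subtlety is the bookkeeping of non-commutative inverses together with the implicit genericity that $\hat{x}_2$, $1-x_1$ and $1-\hat{x}_2\hat{x}_1$ are invertible, which is automatic in the rational-function setting. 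I expect this back-substitution check, rather than the derivation itself, to be the most error-prone step, purely because of the ordering of factors.
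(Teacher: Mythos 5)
Your derivation is correct, and it matches the paper's (implicit) treatment: the corollary is stated there as a direct algebraic check of the formulas \eqref{eq:N}, which is exactly what you carry out, with the key non-commutative regrouping $x_2 + x_1(1-x_2) = x_1 + (1-x_1)x_2$ handled properly and the back-substitution confirming a two-sided inverse in the division ring of rational functions. Your formulas agree with \eqref{eq:N-inv}, consistent with the paper's later remark that $N^{-1}=V^{\mathrm{op}}$.
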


Let us present a combinatorial interpretation, slightly different from that given in \cite{DoliwaSergeev-pentagon}, of the pentagonal property of the normalization map. A single linear relation
\begin{equation*}
\bPhi_{AB} - \bPhi_{AD} = (\bPhi_{AC} - \bPhi_{AD} )x  
\end{equation*}
is represented graphically as in Figure~\ref{fig:single}a.
\begin{figure}
\begin{center}
\includegraphics[width=14cm]{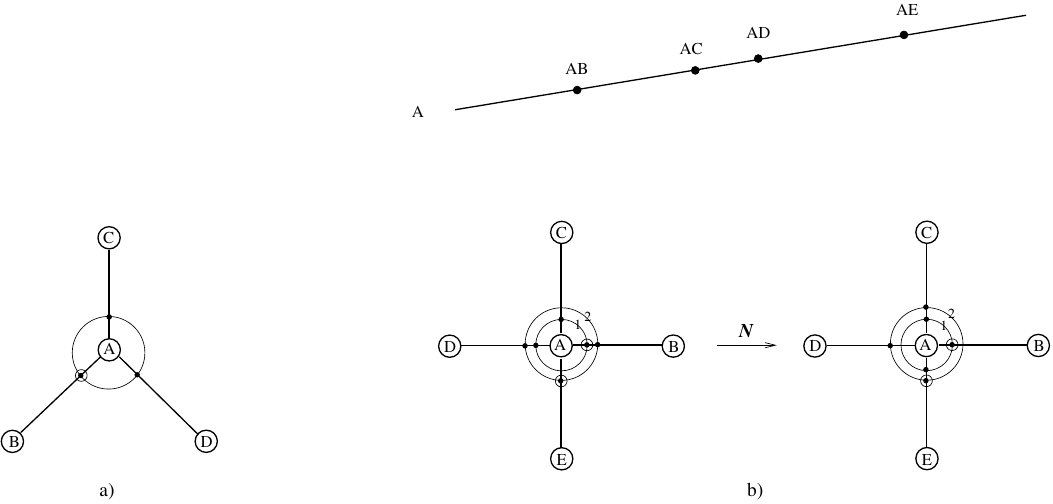}
\end{center}
\caption{Visualization of a single normalized linear relation (a);  graphical representation of the normalization map (b)}
\label{fig:single}
\end{figure}
The collinear points are labeled by two letters and correspond to edges of the graph linked by a circle. The first point $AB$ of the linear equation is marked, and the clockwise orientation $(AB,AC,AD)$ along the circle determines the form of the equation.
The transition from two linear equations \eqref{eq:lin-N} is visualized in Figure~\ref{fig:single}b. 

\begin{figure}[h!]
	\begin{center}
		\includegraphics[width=14cm]{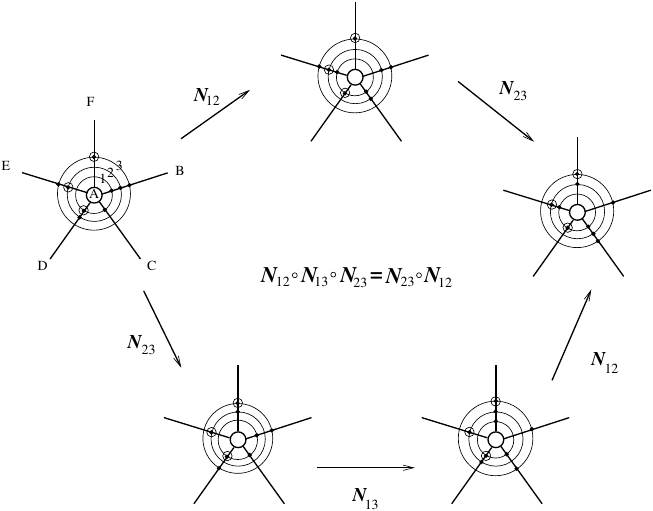}
	\end{center}
	\caption{Pentagonal property of the normalization map}
	\label{fig:pentagon-N}
\end{figure}
The essence of the pentagonal property of the normalization map follows from the observation that given five collinear points and three initial linear relations 
\begin{equation*} 
\begin{array}{ll}
\bPhi_{AD}- \bPhi_{AC} = (\bPhi_{AB} - \bPhi_{AC})x_1 , \\
\bPhi_{AE} -\bPhi_{AD} = (\bPhi_{AB} - \bPhi_{AD})x_2 , \\
\bPhi_{AF} -\bPhi_{AE} = (\bPhi_{AB} - \bPhi_{AE})x_3 ,
\end{array}
\end{equation*}
there are two different but consistent ways to obtain three other relations
\begin{equation*} 
\begin{array}{ll}
\bPhi_{AD} - \bPhi_{AC} = (\bPhi_{AE} - \bPhi_{AC}) x_1^\prime  , \\
\bPhi_{AE} - \bPhi_{AC} = (\bPhi_{AF} - \bPhi_{AC} ) x_2^\prime , \\
\bPhi_{AF} - \bPhi_{AC} = (\bPhi_{AB} - \bPhi_{AC} ) x_3^\prime ,
\end{array}
\end{equation*} 
as visualized on Figure~\ref{fig:pentagon-N}.
Notice that because of the linear equations, which determine the normalization map, are fixed by the combinatorics of the above diagrams, the combinatorial consistence of Figure~\ref{fig:pentagon-N} provides another proof of Proposition~\ref{prop:N-p}.

\subsection{The Veblen map}
In~\cite{DoliwaSergeev-pentagon} another geometric map was presented as well. The map is related to the Veblen configuration $(6_2,4_3)$ consisting of six points and four lines with two lines through each point and three points on each line, see Figure~\ref{fig:Veblen}.  \begin{Rem}
The Veblen configuration of projective geometry~\cite{BeukenhoutCameron-H} was called the Menelaus configuration in \cite{KoSchief-Men}, where its connection with the discrete Schwarzian KP equation (another algebraic description of Desargues lattices) has been found. The name of the configuration reflects its metric geometry meaning described in the Menelaus theorem~\cite{Coxeter-IG}.
\begin{figure}[h!]
	\begin{center}
		\includegraphics[width=11cm]{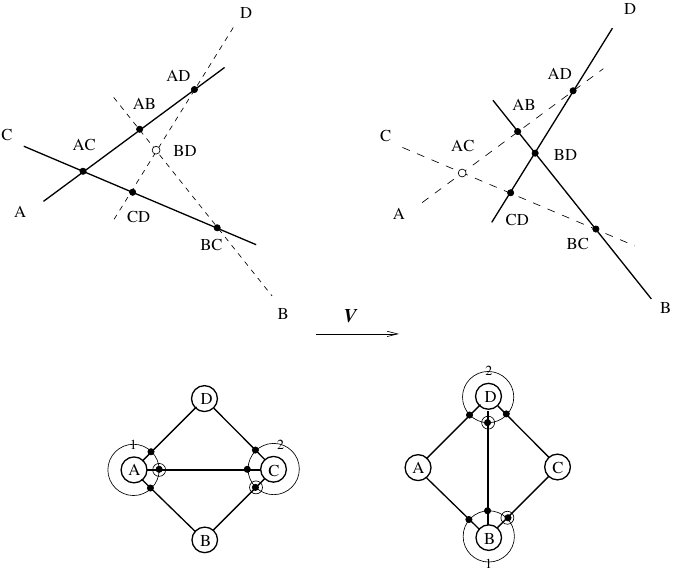}
	\end{center}
	\caption{The Veblen flip and graphical representation of the Veblen map}
	\label{fig:Veblen}
\end{figure}
\end{Rem}	
 Starting from two lines $A$ and $C$ intersecting in one point $AC$ and two additional (labeled) points on each of the line --- $AB$ and $AD$ on the line $A$, and $BC$ and $CD$ on $C$ --- one can construct uniquely two other lines and the sixth point $BD$ of the configuration. In terms of the non-homogeneous coordinates of the points we have
\begin{equation} \label{eq:lin-V}
\begin{array}{cc}
\bPhi_{AC} - \bPhi_{AD} = (\bPhi_{AB} - \bPhi_{AD}) x_1  \\
\bPhi_{BC} - \bPhi_{CD} = (\bPhi_{AC} -\bPhi_{CD}) x_2 
\end{array}
\qquad \text{and} \qquad
\begin{array}{cc}
\bPhi_{BC} -\bPhi_{BD} = (\bPhi_{AB} - \bPhi_{BD})\bar{x}_1  \\
\bPhi_{BD}- \bPhi_{CD} = (\bPhi_{AD}- \bPhi_{CD} )\bar{x}_2 .
\end{array}
\end{equation}
Solving these equation for $\bar{x}_1, \bar{x}_2$ (and $\bPhi_{BD}$) we obtain the Veblen map $V\colon (x,y) \dashrightarrow (\bar{x}, \bar{y}) $, where
\begin{equation} \label{eq:V}
\begin{array}{ll}
\bar{x}_1 & =x_1 x_2, \\ \bar{x}_2 &= (1-x_1) x_2 (1-x_1 x_2)^{-1} .
\end{array}
\end{equation}
The graphical representation of the Veblen map, given also in Figure~\ref{fig:Veblen}, is similar to that used in~\cite{DoliwaSergeev-pentagon}. We provide additional information on the precise form of the linear relations, as given in the corresponding graphical representation of the normalization map, in order to be able to discuss in Section~\ref{sec:tt} simultaneously both maps, whose geometric meanings are different.

\begin{Prop} \label{prop:V-p}
	The Veblen map $V\colon (x_1,x_2) \dashrightarrow (\bar{x}_1, \bar{x}_2) $
	satisfies the reversed pentagonal condition
	\begin{equation} \label{eq:V-pent-r}
	V_{23} \circ V_{13} \circ  V_{12} = V_{12} \circ  V_{23} .
	\end{equation}
\end{Prop}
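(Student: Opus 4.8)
The plan is to prove the reversed pentagon \eqref{eq:V-pent-r} by direct verification, exactly in the style of Proposition~\ref{prop:N-p}, but arranged so that the non-commutative cancellations are visible. Abbreviating the Veblen map \eqref{eq:V} as $V(a,b)=\bigl(ab,\ (1-a)b(1-ab)^{-1}\bigr)$, I would act with both sides of \eqref{eq:V-pent-r} on a triple $(x_1,x_2,x_3)$ and compare the outputs coordinate by coordinate. The right-hand side $V_{12}\circ V_{23}$ is immediate: $V_{23}$ sends $(x_1,x_2,x_3)$ to $\bigl(x_1,\ x_2x_3,\ (1-x_2)x_3(1-x_2x_3)^{-1}\bigr)$, and a subsequent $V_{12}$ yields the triple with entries $x_1x_2x_3$, $(1-x_1)x_2x_3(1-x_1x_2x_3)^{-1}$ and $(1-x_2)x_3(1-x_2x_3)^{-1}$.

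For the left-hand side $V_{23}\circ V_{13}\circ V_{12}$ I would apply the three maps in turn. After $V_{12}$ and $V_{13}$ the second coordinate is $x_2^{(2)}=(1-x_1)x_2(1-x_1x_2)^{-1}$ and the third is $x_3^{(2)}=(1-x_1x_2)x_3(1-x_1x_2x_3)^{-1}$; the tail $(1-x_1x_2)^{-1}$ of the former telescopes against the head $(1-x_1x_2)$ of the latter when $V_{23}$ multiplies them, and the first two output coordinates collapse to $x_1x_2x_3$ and $(1-x_1)x_2x_3(1-x_1x_2x_3)^{-1}$, matching the right-hand side. The only step requiring care is the third coordinate $x_3'=(1-x_2^{(2)})x_3^{(2)}(1-x_2^{(2)}x_3^{(2)})^{-1}$, for which I would first establish the two factorisations $1-x_2^{(2)}=(1-x_2)(1-x_1x_2)^{-1}$ and $1-x_2^{(2)}x_3^{(2)}=(1-x_2x_3)(1-x_1x_2x_3)^{-1}$. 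Substituting these, every internal factor cancels against a neighbouring inverse and one is left with precisely $(1-x_2)x_3(1-x_2x_3)^{-1}$, completing the verification.

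A more conceptual route, which I would record as a cross-check and which avoids all computation, exploits the relation between $V$ and the normalization map. Comparing \eqref{eq:V} with the inverse \eqref{eq:N-inv} of $N$ shows that $V=(N^{-1})^{\mathrm{op}}$, the opposite of $N^{-1}$ with every product reversed. Now inverting the pentagon identity \eqref{eq:N-pent} of Proposition~\ref{prop:N-p} as a composition of maps turns it into the reversed pentagon for $N^{-1}$; and since the defining formulas are universal rational identities, valid over every division ring, evaluating them over $\DD^{\mathrm{op}}$ instead of $\DD$ replaces any map by its opposite while leaving the form of the functional equation unchanged. Hence $(N^{-1})^{\mathrm{op}}=V$ again satisfies the reversed pentagon. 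One could equally argue geometrically, as in Figure~\ref{fig:pentagon-N}: the two composites assemble the same enlarged Veblen-type configuration from the same initial points in two combinatorially consistent orders, so the prescribed linear relations force the maps to agree.

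I expect the main obstacle to be purely a matter of bookkeeping rather than of substance. In the direct approach the algebra is routine once the two factorisations of $1-x_2^{(2)}$ and $1-x_2^{(2)}x_3^{(2)}$ are in hand, but one must resist any temptation to commute factors and keep the left/right placement of every inverse correct. In the conceptual approach the delicate point is to justify carefully that passing to $\DD^{\mathrm{op}}$ sends each $V_{ij}$ to $(V^{\mathrm{op}})_{ij}$ and preserves, rather than reverses, the pentagon type, so that the chain ``pentagon for $N$ $\Rightarrow$ reversed pentagon for $N^{-1}$ $\Rightarrow$ reversed pentagon for $V$'' is airtight.
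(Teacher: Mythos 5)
Your proposal is correct and follows essentially the paper's own route: a direct verification of \eqref{eq:V-pent-r}, and your computed outputs $x_1x_2x_3$, $(1-x_1)x_2x_3(1-x_1x_2x_3)^{-1}$, $(1-x_2)x_3(1-x_2x_3)^{-1}$ agree exactly with the final expressions in the paper's proof of Proposition~\ref{prop:V-p}, with the two factorisations $1-x_2^{(2)}=(1-x_2)(1-x_1x_2)^{-1}$ and $1-x_2^{(2)}x_3^{(2)}=(1-x_2x_3)(1-x_1x_2x_3)^{-1}$ correctly handling the non-commutative cancellations. Your secondary argument via $V=(N^{-1})^{\mathrm{op}}$ and evaluation over $\DD^{\mathrm{op}}$ is also sound (inverting the pentagon \eqref{eq:N-pent} yields the reversed pentagon for $N^{-1}$ with unchanged index pattern, and passing to the opposite ring is an identity of set maps), and is a legitimate computation-free cross-check consistent with the paper's remark that $N^{-1}=V^{\mathrm{op}}$.
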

\begin{proof}
	By direct verification. We provide again only the final expressions
	\begin{equation*}
	(x_1^\prime , x_2^\prime ,x_3^\prime  ) = 
	\begin{cases}
	(V_{23} \circ V_{13} \circ  V_{12})(x_1,x_2,x_3) , \\ 
	\; \; \; (V_{12} \circ  V_{23})	(x_1,x_2,x_3) ,
	\end{cases}
	\end{equation*}
	where
	\begin{align*}
	x_1^\prime & = 
	x_1 x_2 x_3 , \\
	x_2^\prime & = (1 - x_1) x_2 x_3 (1-x_1 x_2 x_3)^{-1} ,
	\\
	x_3^\prime & =  (1 - x_2) x_3(1-x_2 x_3)^{-1}.
	\end{align*}
\end{proof}
\begin{Cor}
	The inverse of $V$ is given by
	\begin{equation} \label{eq:V-inv}
	\begin{array}{ll}
	x_1 &= \bar{x}_1 [ \bar{x}_2 + (1 - \bar{x}_2)\bar{x}_1 ]^{-1} , \\ 
	x_2 & =  \bar{x}_2 + (1 - \bar{x}_2)\bar{x}_1,
	\end{array}
	\end{equation}
	and satisfies the pentagon condition.
\end{Cor}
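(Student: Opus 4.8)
The statement has two parts: confirming the explicit inverse formula \eqref{eq:V-inv} and then establishing the pentagonal property of that inverse. For the first part, the plan is to substitute the proposed expressions directly into the defining equations \eqref{eq:V} of $V$ and check that one recovers $(\bar{x}_1,\bar{x}_2)$. Setting $x_2 = \bar{x}_2 + (1-\bar{x}_2)\bar{x}_1$ and $x_1 = \bar{x}_1 x_2^{-1}$, the first component is immediate, since $x_1 x_2 = \bar{x}_1 x_2^{-1} x_2 = \bar{x}_1$. For the second component I would compute $(1-x_1)x_2 = x_2 - \bar{x}_1$ and simplify $x_2 - \bar{x}_1 = \bar{x}_2 + \bar{x}_1 - \bar{x}_2\bar{x}_1 - \bar{x}_1 = \bar{x}_2(1-\bar{x}_1)$, while $1 - x_1 x_2 = 1 - \bar{x}_1$; together these give
\begin{equation*}
(1-x_1)x_2(1-x_1x_2)^{-1} = \bar{x}_2(1-\bar{x}_1)(1-\bar{x}_1)^{-1} = \bar{x}_2.
\end{equation*}
This shows $V\circ V^{-1}=\mathrm{id}$ on generic points; the reverse composition follows since both maps are dominant rational maps of the same space. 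The only care required is to keep the non-commutative factors in their correct order and to track the single inversion $x_2^{-1}$, so no real difficulty arises.

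For the pentagonal property, the key observation is that it follows formally by inverting the reversed pentagon identity of Proposition~\ref{prop:V-p}, with no further direct computation. Writing $W = V^{-1}$, I would first note that since $V_{ij}$ acts as $V$ on the factors $i,j$ and as the identity on the remaining factor, its inverse acts as $V^{-1}$ on those same factors and as the identity elsewhere, i.e. $(V_{ij})^{-1} = (V^{-1})_{ij} = W_{ij}$. Applying $(\,\cdot\,)^{-1}$ to both sides of \eqref{eq:V-pent-r} and using $(A\circ B)^{-1} = B^{-1}\circ A^{-1}$ then transforms
\begin{equation*}
V_{23} \circ V_{13} \circ V_{12} = V_{12} \circ V_{23}
\end{equation*}
into
\begin{equation*}
W_{12} \circ W_{13} \circ W_{23} = W_{23} \circ W_{12},
\end{equation*}
which is exactly the pentagonal equation in the same form as \eqref{eq:N-pent} satisfied by the normalization map. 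Hence $V^{-1}$ satisfies the forward pentagon condition.

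I do not expect any substantial obstacle in this argument. The inverse formula is confirmed by the short substitution above, and the pentagonal property is obtained purely formally by inverting the reversed pentagon already established for $V$. The one point genuinely demanding attention is the elementary remark that inversion commutes with the localized action on factors, namely $(V_{ij})^{-1} = (V^{-1})_{ij}$; once this is granted, the passage from the reversed pentagon for $V$ to the forward pentagon for $W$ is automatic, and in particular no analogue of the direct verification used in Proposition~\ref{prop:V-p} is needed.
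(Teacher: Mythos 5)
Your proposal is correct and follows essentially the same route the paper intends: the inverse formula is confirmed by direct substitution (compare also the paper's subsequent Remark identifying $V^{-1}=N^{\mathrm{op}}$), and the pentagon property of $V^{-1}$ is obtained purely formally by inverting both sides of the reversed pentagon relation \eqref{eq:V-pent-r}, using $(V_{ij})^{-1}=(V^{-1})_{ij}$. The one soft spot is your appeal to dominance for the composition $V^{-1}\circ V=\mathrm{id}$: in the non-commutative rational-function setting of the paper this notion is not developed, and it is cleaner to verify that composition by the same short substitution, e.g.\ using $1-\bar{x}_2=(1-x_2)(1-x_1x_2)^{-1}$ one checks directly that $\bar{x}_2+(1-\bar{x}_2)\bar{x}_1=x_2$.
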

\begin{Rem}
	By comparison of formulas \eqref{eq:N} with \eqref{eq:V-inv}, and by comparison of  
	formulas \eqref{eq:V} with \eqref{eq:N-inv} we see that 
	\begin{equation}
	V^{-1} = N^\mathrm{op} \qquad \text{and} \qquad N^{-1} = V^\mathrm{op}.
	\end{equation}
\end{Rem}

The geometric content of the pentagon property of the normalization map is rather trivial because it reflects various ways of organizing five (collinear) points into ordered triplets. 
As it was discussed in \cite{DoliwaSergeev-pentagon} the geometric meaning of the pentagonal property of the Veblen map is provided by the Desargues theorem, which defines configuration $(10_3,10_3)$ consisting of ten points and ten lines with three lines through each point and three points on each line, see Figure~\ref{fig:Desargues-pent}. 
\begin{figure}[h!]
\begin{center}
\includegraphics[width=10cm]{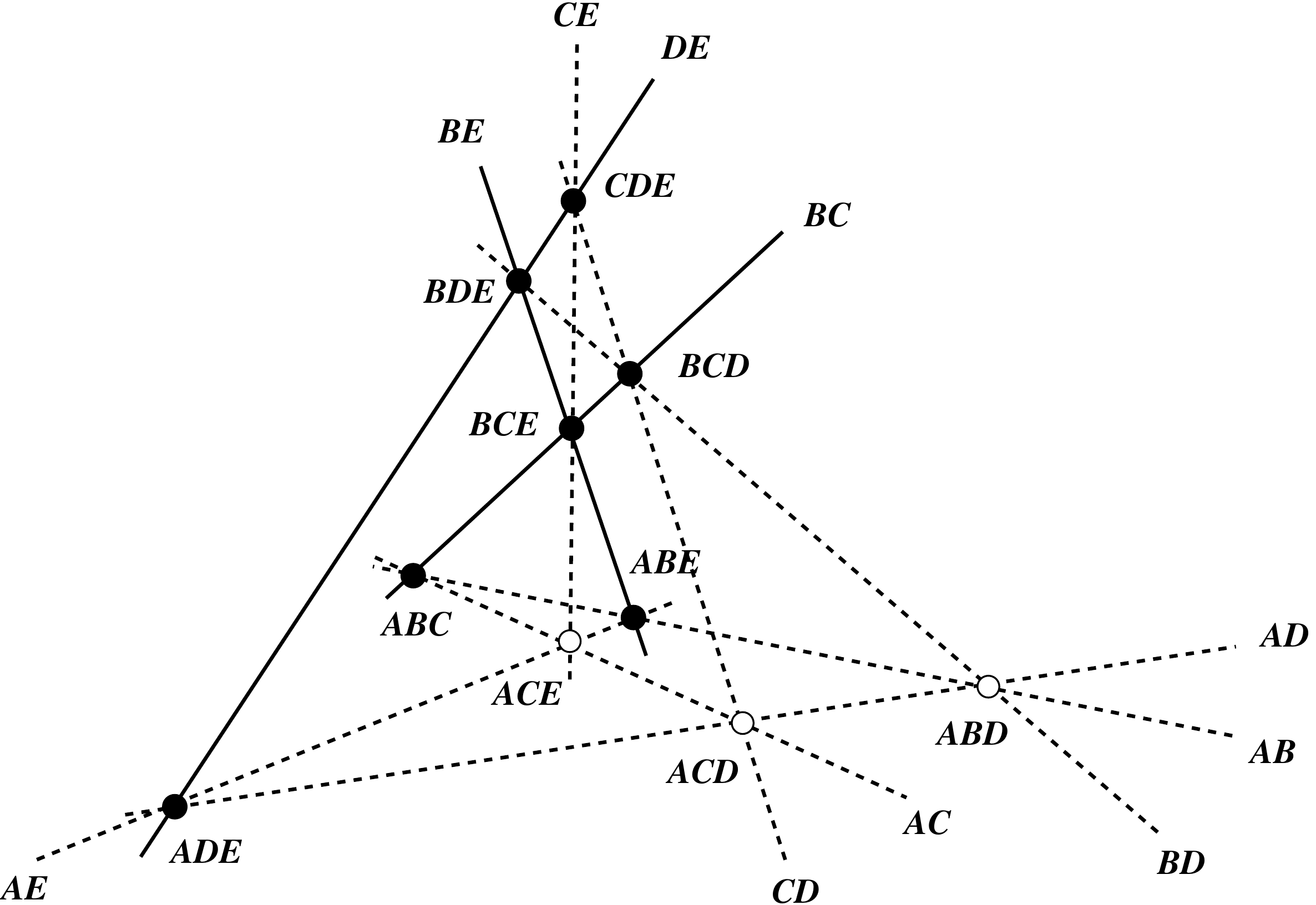}
\end{center}
\caption{Desargues configuration with the initial subconfiguration}
\label{fig:Desargues-pent}
\end{figure}
Its lines are labeled by two-element subsets out of the five-letter set $\{ A,B,C,D,E \}$, while points of the configuration are labeled by three-element subsets (a point belongs to a line when the line-subset is contained in the point-subset). 
We start from initial subconfiguration of three lines (the solid lines on Figure~\ref{fig:Desargues-pent}) and seven (black) points, which gives three linear relations. Using Veblen flips we construct a similar subconfiguration in two different ways. 

\subsection{The ten-term relation} \label{sec:tt}
It is easy to observe that the geometric configuration behind the Desargues map (Figure~\ref{fig:Desargues-map}) is composed with the geometric configurations describing the normalization map and the Veblen map (Figures~\ref{fig:single} and~\ref{fig:Veblen}). 
Below we present the corresponding decomposition formula, which can be verified by direct calculation.
\begin{Prop}
	The  map $F\colon (x_1,x_2,x_3) \dashrightarrow (\tilde{x}_1, \tilde{x}_2 , \tilde{x}_3)$ given by \eqref{eq:F-map}, which  describes algebraically the geometric Desargues map in non-homogeneous coordinates, can be factorized in terms of the normalization and Veblen maps as follows
	\begin{equation}
	\label{eq:fact-Z}
	F = P_{23} \circ V_{12} \circ N_{13},
	\end{equation}
	where $P_{23}$ is the transposition of the second and third arguments.
\end{Prop}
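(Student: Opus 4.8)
The plan is to verify the factorization \eqref{eq:fact-Z} by composing the three maps on its right-hand side in order (first $N_{13}$, then $V_{12}$, then $P_{23}$) and checking componentwise that the result coincides with $F$ of \eqref{eq:F-map}. Since each of $N_{13}$, $V_{12}$, $P_{23}$ touches only part of the triple, most components will match almost immediately, and the entire substance of the proof will be concentrated in a single non-commutative identity for the last component.

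First I would apply $N_{13}$ to $(x_1,x_2,x_3)$. By \eqref{eq:N} with arguments $(x_1,x_3)$ this leaves the middle slot untouched and gives
\begin{equation*}
N_{13}(x_1,x_2,x_3) = \big( d^{-1}x_1,\; x_2,\; d \big), \qquad d := x_3 + x_1(1-x_3).
\end{equation*}
Applying $V_{12}$ next, with \eqref{eq:V} acting on the first two slots $(d^{-1}x_1, x_2)$, produces first component $d^{-1}x_1 x_2$ and second component the Veblen output $\bar a_2$ defined below, while the third slot is unchanged and equals $d$. Finally $P_{23}$ swaps the last two entries, so the three outputs become $d^{-1}x_1x_2$, then $d$, then $\bar a_2$. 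The first two are manifestly $\tilde{x}_1$ and $\tilde{x}_2$ of \eqref{eq:F-map}, so it only remains to identify the third with $\tilde{x}_3$.

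The hard part is this third component, where non-commutativity must be tracked carefully. Writing $a_1 = d^{-1}x_1$, the Veblen map \eqref{eq:V} gives $\bar a_2 = (1-a_1)x_2(1-a_1x_2)^{-1}$. I would first record the two auxiliary simplifications
\begin{equation*}
1 - a_1 = d^{-1}(1-x_1)x_3, \qquad 1 - a_1 x_2 = d^{-1}e, \qquad e := (1-x_1)x_3 + x_1(1-x_2),
\end{equation*}
the second of which rests on the key observation $d - x_1 x_2 = e$. Hence $(1-a_1x_2)^{-1} = e^{-1}d$ and $\bar a_2 = d^{-1}(1-x_1)x_3\,x_2\,e^{-1}d$. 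Comparing with the target $\tilde{x}_3 = 1 + (x_2-1)e^{-1}d$, the desired equality $\bar a_2 = \tilde{x}_3$ reduces, after left-multiplication by $d$ and right-multiplication by $e$, to the polynomial identity
\begin{equation*}
(1-x_1)\,x_3\,x_2 = e + d\,(x_2 - 1).
\end{equation*}

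I would close the proof by expanding both sides while keeping every factor in its place; the cross terms cancel pairwise and both sides equal $(1-x_1)x_3x_2$. The only genuine subtlety, and the step most prone to error, is maintaining the left/right placement of the inverses $d^{-1}$ and $e^{-1}$ throughout, since the naive commutative cancellation of the conjugating factors in $d^{-1}(\cdots)d$ is unavailable; once the two denominator identities $1-a_1 = d^{-1}(1-x_1)x_3$ and $d - x_1 x_2 = e$ are in hand, however, the reduction to a genuinely polynomial (inverse-free) identity makes the final verification routine.
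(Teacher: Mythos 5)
Your proposal is correct and follows essentially the same route as the paper's proof, which likewise composes $N_{13}$, then $V_{12}$, then $P_{23}$ and identifies the middle Veblen output $(1-\hat{x}_1)\hat{x}_2(1-\hat{x}_1\hat{x}_2)^{-1}$ with $\tilde{x}_3$; your extra step of reducing that identification, via $1-a_1=d^{-1}(1-x_1)x_3$ and $d-x_1x_2=e$, to the inverse-free identity $(1-x_1)x_3x_2 = e + d(x_2-1)$ is a valid and welcome elaboration of what the paper leaves implicit. One tiny bookkeeping slip: after left-multiplying by $d$, the clearing operation is right-multiplication by $d^{-1}e$ (not by $e$ alone), though the polynomial identity you arrive at is the correct one and checks out.
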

\begin{proof}
	\begin{gather*}
	\left( \begin{array}{c} x_1 \\ x_2 \\ x_3 \end{array} \right) \stackrel{N_{13}}{\longrightarrow}
	\left( \begin{array}{l} \hat{x}_1 = \left[ x_3 + x_1(1-x_3)\right]^{-1}x_1 \\
	\hat{x}_2 = x_2\\ 
	\hat{x}_3  =  x_3 + x_1(1-x_3) \end{array} \right) \stackrel{V_{12}}{\longrightarrow}
\left( \begin{array}{l} 
\hat{x}_1 \hat{x}_2  \\  
(1-\hat{x}_1) \hat{x}_2 (1-\hat{x}_1 \hat{x}_2)^{-1} 
\\  \hat{x}_3  
\end{array} \right) = \\
\left( \begin{array}{r} 
\left[ x_3 + x_1(1-x_3)\right]^{-1}x_1 x_2  = \tilde{x}_1\\  
1 + (x_2-1) \left[ (1-x_1) x_3 + x_1(1-x_2)\right]^{-1} \left( x_3 + x_1 (1-x_3) \right)
= \tilde{x}_3
\\   x_3 + x_1(1-x_3) = \tilde{x}_2
\end{array} \right)
\stackrel{P_{23}}{\longrightarrow}	
\left( \begin{array}{c} \tilde{x}_1 \\ \tilde{x}_2 \\ \tilde{x}_3 \end{array} \right)
	\end{gather*}
\end{proof}
Construction of solutions of the Zamolodchikov equation using solutions of the pentagonal equation was given in~\cite{Maillet}.
Following~\cite{Kashaev-Sergeev} we state (and recall the proof for completeness) the so called ten-term relation which is an important step in proving that the map $F$ given by the decomposition \eqref{eq:fact-Z} satisfies the functional Zamolodchikov equation.
\begin{Lem} \label{lem:tt}
	Assume that the invertible map $N$ satisfies the pentagonal equation, and the invertible map $V$ satisfies the reversed pentagonal equation. Then the map $F$, given by the decomposition \eqref{eq:fact-Z} satisfies the functional Zamolodchikov equation, if and only if the maps $N$ and $V$ are paired by the following ten-term relation
\begin{equation} \label{eq:ten-term}
V_{13}\circ N_{12}\circ V_{14}\circ N_{34}\circ V_{24} = N_{34}\circ V_{24}\circ N_{14}\circ V_{13}\circ N_{12} . 
\end{equation}	
\end{Lem}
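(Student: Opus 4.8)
The statement to prove is an equivalence: given that $N$ satisfies the pentagonal equation \eqref{eq:N-pent} and $V$ satisfies the reversed pentagonal equation \eqref{eq:V-pent-r}, the composite $F=P_{23}\circ V_{12}\circ N_{13}$ satisfies the functional Zamolodchikov equation \eqref{eq:f-tetrahedron} if and only if $N$ and $V$ obey the ten-term relation \eqref{eq:ten-term}. The natural approach is to substitute the factorization \eqref{eq:fact-Z} into both sides of the Zamolodchikov equation and reduce everything to a single relation among the elementary maps $N$, $V$ and the transpositions $P$. The whole computation is purely combinatorial bookkeeping on indices: the maps act on factors of $\mathcal{X}^6$, and the only tools needed are the two pentagonal identities, the fact that maps acting on disjoint sets of indices commute, and the way transpositions $P_{ij}$ conjugate indices of the elementary maps (e.g.\ $P_{23}\,N_{13}=N_{12}\,P_{23}$ and similar rewriting rules).

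\emph{First step.} I would expand $F_{ijk}=P^{(jk)}\circ V_{ij}\circ N_{ik}$ for each of the eight occurrences of $F$ appearing in \eqref{eq:f-tetrahedron}, being careful that the index triples attached to the three factors on which $F$ acts get distributed correctly to its constituent $N$, $V$ and $P$ pieces. This turns each side of the Zamolodchikov equation into a word of length twelve in the letters $N_{\bullet\bullet}$, $V_{\bullet\bullet}$, $P_{\bullet\bullet}$ over six indices. The transpositions can then be pushed to one end of each word using the conjugation rules, since $P$ acts only by permuting labels; after conjugating all of them out, the two sides become two words purely in $N$'s and $V$'s, multiplied by the same residual permutation on both sides (which therefore cancels when comparing). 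This is the step that requires the most care, because tracking which indices land where through four successive $F$'s is error-prone, but it is mechanical.

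\emph{Second step.} Once both sides are reduced to permutation-free words in $N$ and $V$, I would repeatedly apply the pentagonal relation \eqref{eq:N-pent} for $N$ and the reversed pentagonal relation \eqref{eq:V-pent-r} for $V$ to simplify each word, cancelling the ``trivial'' coincidences on each side. After all applications of the two pentagon identities that do not already match the opposite side, what survives is precisely the assertion that a certain $N$--$V$ word equals another, and this residual identity is exactly the ten-term relation \eqref{eq:ten-term}. Because every manipulation used (commuting disjoint maps, conjugating by $P$, applying the pentagon relations) is reversible, the reduction is an equivalence rather than a one-way implication, which is what gives the ``if and only if.''

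\emph{Main obstacle.} The genuine difficulty is not conceptual but organizational: correctly assigning the six index slots and verifying that after clearing all transpositions the leftover identity is \eqref{eq:ten-term} on the nose, with the correct index pattern $(13),(12),(14),(34),(24)$, and not some relabelled variant. I expect that the cleanest way to control this is to fix at the outset an explicit convention for which argument of $F$ feeds $N$ and which feeds $V$ (dictated by \eqref{eq:fact-Z}), then to reduce each of the two length-twelve words separately to normal form and observe that they differ exactly by one application of \eqref{eq:ten-term}. Since the statement already tells us the target relation, the proof amounts to verifying that the two normal forms match up to that single ten-term move, and no use of the concrete formulas \eqref{eq:N}, \eqref{eq:V} is needed beyond invertibility and the two pentagon identities assumed in the hypothesis.
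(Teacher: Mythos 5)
Your proposal is correct and matches the paper's own proof essentially step for step: the paper likewise substitutes the factorization \eqref{eq:fact-Z} into the Zamolodchikov equation, shifts all permutations to the right, and then uses invertibility together with the pentagonal identity for $N$ (applied in slots $1,2,4$) and the reversed pentagonal identity for $V$ (applied in slots $1,3,6$) to cancel the residual prefix $V_{16}^{-1}\circ V_{36}^{-1}\circ V_{13}\circ V_{36}$ and suffix $N_{24}\circ N_{12}\circ N_{24}^{-1}\circ N_{14}^{-1}$, leaving the ten-term relation in slots $(1,2,3,5)$ --- a harmless relabelling of \eqref{eq:ten-term}, which answers the one worry you raise about getting the index pattern ``on the nose.'' Since every manipulation is reversible, the equivalence follows exactly as you argue.
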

\begin{proof}
	Assuming the factorization \eqref{eq:fact-Z} in the Zamolodchikov equation
			\begin{equation*}
	F_{123} \circ F_{145} \circ F_{246}\circ  F_{356} = F_{356} \circ F_{246} \circ F_{145} \circ F_{123},
	\end{equation*}
	and shifting all permutations to the right we obtain the relation
	\begin{equation*}
	\left( V_{16}^{-1} \circ V_{36}^{-1} \circ V_{13} \circ V_{36} \right) 
	\circ N_{12} \circ V_{15} \circ N_{35} \circ V_{25} = N_{35} \circ V_{25} \circ N_{15} \circ V_{13} \circ N_{12} \circ
	\left( N_{24} \circ N_{12} \circ N_{24}^{-1} \circ N_{14}^{-1} \right) ,
	\end{equation*}
	which gives the ten-term relation \eqref{eq:ten-term} in variables $(x_1,x_2,x_3,x_5)$ due to the fact that $N$ satisfies --- here in variables $(x_1,x_2,x_4)$ ---  the pentagonal equation~\eqref{eq:N-pent}, and  $V$ satisfies --- here in variables $(x_1,x_3,x_6)$--- the reversed pentagonal equation~\eqref{eq:V-pent-r}.  
\end{proof}
\begin{Prop} \label{prop:NV-tt}
The normalization map \eqref{eq:N} and the Veblen map \eqref{eq:V} are paired by the ten-term relation \eqref{eq:ten-term}.
\end{Prop}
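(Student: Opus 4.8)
The plan is to give two complementary arguments and to lead with the conceptual one, since almost all the required machinery is already in place. Observe that Lemma~\ref{lem:tt} is an \emph{equivalence}: under the hypotheses that $N$ satisfies the pentagon equation and $V$ the reversed pentagon equation, the map $F$ of the factorization~\eqref{eq:fact-Z} satisfies the Zamolodchikov equation \emph{if and only if} $N$ and $V$ are paired by the ten-term relation~\eqref{eq:ten-term}. All the hypotheses are at hand: Proposition~\ref{prop:N-p} gives the pentagon property of $N$, Proposition~\ref{prop:V-p} gives the reversed pentagon property of $V$, the factorization~\eqref{eq:fact-Z} has been verified directly, and Proposition~\ref{prop:F-Z} already establishes --- by an independent direct computation that never invokes~\eqref{eq:ten-term} --- that $F$ satisfies the Zamolodchikov equation. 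Hence the forward implication of Lemma~\ref{lem:tt} yields~\eqref{eq:ten-term} at once. I would stress that this is not circular precisely because the proof of Proposition~\ref{prop:F-Z} is self-contained; the only point to confirm is that the reduction carried out in the proof of Lemma~\ref{lem:tt} --- shifting the permutations to one side and cancelling the pentagon and reversed-pentagon blocks --- is genuinely reversible, so that the implication ``$F$ Zamolodchikov $\Rightarrow$ ten-term'' really is available.

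If instead one wants a direct, self-standing proof of~\eqref{eq:ten-term} (so that the Zamolodchikov property of $F$ can afterwards be \emph{deduced} conceptually rather than assumed), I would verify the identity by tracking a generic quadruple $(x_1,x_2,x_3,x_4)$ of non-commuting variables through both composite maps. Recording the elementary actions $N\colon (a,b)\mapsto\bigl([b+a(1-b)]^{-1}a,\ b+a(1-b)\bigr)$ and $V\colon (a,b)\mapsto\bigl(ab,\ (1-a)b(1-ab)^{-1}\bigr)$ on the indicated coordinate pair, one applies the five factors of each side in the correct order (rightmost first): $V_{24},N_{34},V_{14},N_{12},V_{13}$ on the left of~\eqref{eq:ten-term} and $N_{12},V_{13},N_{14},V_{24},N_{34}$ on the right. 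The key bookkeeping observation that makes this tractable is that the same composite blocks recur on the two sides --- for instance the normalization denominators $[x_j+x_i(1-x_j)]^{-1}$ and the Veblen products $x_ix_j$ reassemble identically --- so that the two final quadruples can be matched expression by expression without fully expanding every nested inverse.

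The main obstacle in the direct route is purely the non-commutative algebra: the maps introduce nested inverses of sums such as $(1-ab)^{-1}$ and $[b+a(1-b)]^{-1}$, and after five compositions each coordinate is a rational expression in which the order of all factors matters, so one cannot rely on any commutation to simplify. I would therefore organize the computation so as never to expand an inverse unless it cancels against a neighbouring factor, and I would exploit the identities $V^{-1}=N^{\mathrm{op}}$ and $N^{-1}=V^{\mathrm{op}}$ from the Remark after~\eqref{eq:V-inv} to shorten the right-hand side wherever a map meets the opposite of its partner. As in the earlier propositions, the cleanest presentation is simply to display the common value of the two sides; the geometric reading promised via the star configuration then supplies the conceptual explanation of \emph{why} the ten terms match, the two sequences of rebasings and Veblen flips building one and the same final incidence figure out of the initial data.
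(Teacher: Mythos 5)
Your proposal is correct, but it bundles two routes, and only one of them matches the paper. The fallback direct verification is exactly the paper's first proof: the authors push a generic quadruple $(x_1,x_2,x_3,x_4)$ through both sides of \eqref{eq:ten-term} and display only the common final expressions of the two composites; your bookkeeping tactics (delaying expansion of inverses, matching recurring blocks such as $x_2+x_1(1-x_2)$) are consistent with the shape of those expressions, and your closing gesture toward geometry corresponds to the paper's second proof, which is actually carried out there: the star configuration $(10_2,5_4)$, with seven initial points and four linear relations transformed in two combinatorially consistent ways into the final seven-point subconfiguration. Your lead argument, however, genuinely differs from --- indeed inverts --- the paper's logic. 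The paper proves \eqref{eq:ten-term} independently and then, via Lemma~\ref{lem:tt}, harvests the ``geometro-combinatorial proof'' of Proposition~\ref{prop:F-Z} (that is the content of the Corollary immediately following Proposition~\ref{prop:NV-tt}); you instead run the biconditional backwards, from Proposition~\ref{prop:F-Z} to \eqref{eq:ten-term}. This is legitimate in principle: Lemma~\ref{lem:tt} is stated as an equivalence, its proof proceeds by conjugating with permutations, composing with invertible maps, and substituting via the pentagon identities, so every step reverses, and all hypotheses (invertibility of $N$ and $V$, Propositions~\ref{prop:N-p} and~\ref{prop:V-p}, the factorization \eqref{eq:fact-Z}) are in place. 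The one soft spot is your claim that the proof of Proposition~\ref{prop:F-Z} is ``self-contained'': the paper never writes that verification out for the affine map --- it only asserts that direct computation works and that the local Yang--Baxter refactorization argument with $L^A$ from \eqref{eq:lYB-A} carries over ``without modifications''. So to keep your deduction non-circular you must actually supply one of those independent proofs (the $L^A$ refactorization-uniqueness argument is the cheapest), at which point your route obtains \eqref{eq:ten-term} essentially for free; the paper's order, by contrast, makes the ten-term relation the primitive fact, which pays off later when the homogeneous and quantum Hirota maps are shown to satisfy the Zamolodchikov equation purely through the decomposition, where no direct verification is offered at all.
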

\begin{proof}
We will demonstrate the result in two ways. The first one is by direct (but slightly more tedious then before) calculation. As usual we provide the final expressions only
\begin{equation*}
(x_1^\prime , x_2^\prime ,x_3^\prime , x_4^\prime ) = 
\begin{cases}
(V_{13}\circ N_{12}\circ V_{14}\circ N_{34}\circ V_{24})(x_1,x_2,x_3,x_4) ,\\ 
(N_{34}\circ V_{24}\circ N_{14}\circ V_{13}\circ N_{12})	(x_1,x_2,x_3,x_4),
\end{cases}
\end{equation*}
where
\begin{align*}
x_1^\prime & = 
\left( (x_2 + x_1(1-x_2))x_4 + x_1 x_3 (1-x_4)\right)^{-1}x_1 x_3 ,\\
x_2^\prime & =  (x_2 + x_1(1-x_2))x_4 + x_1 x_3 (1-x_4) ,
\\
x_3^\prime & = (1-x_2 x_4) ((1-x_2)x_4 + x_3  (1 - x_4))^{-1} 
\left( 1 + (x_3 - 1) \left(  (1-x_1) x_2 + x_1 (1-x_3)   \right)^{-1}x_1
\right)x_3 ,
 \\
x_4^\prime & =  ((1-x_2)x_4 + x_3  (1 - x_4))(1-x_2 x_4)^{-1} .
\end{align*}

The second way exploits the geometric meaning of the ten-term pairing between the 
normalization and the Veblen maps. Consider the (star) configuration  consisting of ten points and five lines with two lines through each point and four points on each line, as visualized on Figure~\ref{fig:star}. 
\begin{figure}
	\begin{center}
		\includegraphics[width=7cm]{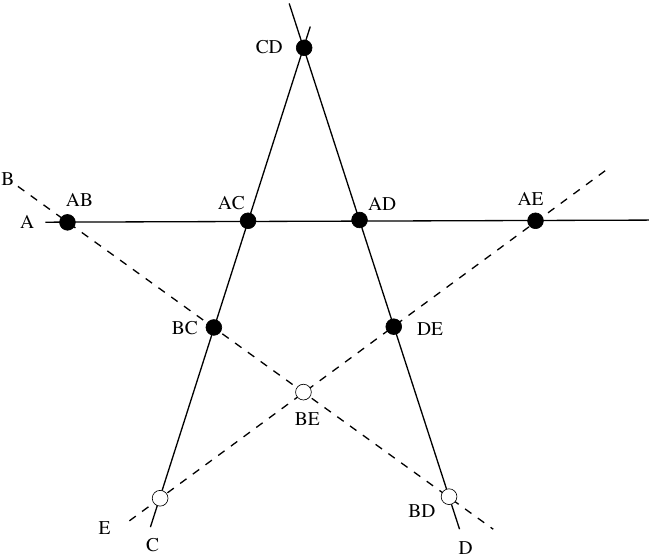}
	\end{center}
	\caption{The star  configuration $(10_2,5_4)$ as geometric origin of the ten-term relation between the normalization and Veblen maps}
	\label{fig:star}
\end{figure}
Let us start with seven points $AB, AC, AD, AE, BC, CD, DE$ of the configuration (black circles on Figure~\ref{fig:star}) and four corresponding linear relations 
\begin{equation*} 
\begin{array}{ll}
\bPhi_{AD}- \bPhi_{AE} = (\bPhi_{AB} - \bPhi_{AE})x_1 ,\\
\bPhi_{AC} -\bPhi_{AD} = (\bPhi_{AB} - \bPhi_{AD})x_2 , \\
\bPhi_{CD} -\bPhi_{DE} = (\bPhi_{AD} - \bPhi_{DE})x_3 , \\
\bPhi_{BC} -\bPhi_{CD} = (\bPhi_{AC} - \bPhi_{CD})x_4 ,
\end{array}
\end{equation*}
as described combinatorially on the leftmost pentagon of Figure~\ref{fig:ten-term}.
There are two distinct ways to obtain (using the normalization and Veblen flips) another such subconfiguration formed by seven points $AB, AE, BC, BE, CD, CE, DE$ whose non-homogeneous coordinates satisfy four other linear relations
\begin{equation*} 
\begin{array}{ll}
\bPhi_{CD} - \bPhi_{CE} = (\bPhi_{BC} - \bPhi_{CE}) x_1^\prime , \\
\bPhi_{BC} - \bPhi_{BE} = (\bPhi_{AB} - \bPhi_{BE} ) x_2^\prime , \\
\bPhi_{CE} - \bPhi_{DE} = (\bPhi_{BE} - \bPhi_{DE} ) x_3^\prime , \\
\bPhi_{BE} - \bPhi_{DE} = (\bPhi_{AE} - \bPhi_{DE} ) x_4^\prime ,
\end{array}
\end{equation*} 
as described combinatorially on the rightmost pentagon of Figure~\ref{fig:ten-term}.
The consecutive transformations, whose combinatorial description is visualized on Figures~\ref{fig:single} and~\ref{fig:Veblen}, are presented on Figure~\ref{fig:ten-term} and the combinatorial consistency of the two different transitions  provides another proof of ten-term relation~\eqref{eq:ten-term}.
\end{proof}
\begin{Rem}
	The geometric configurations behind the pentagonal property of the normalization map, the ten-term relation pairing the normalization and Veblen map, and the pentagonal property of the Veblen map, exhibit $\mathcal{S}_5$ symmetry of the permutation group of five elements. In the first case of five collinear points this observation is trivial. In the second case of the star configuration $(10_2,5_4)$, and the third case of the Desargues configuration $(10_3,10_3)$, as we have already mentioned, the points and lines are labeled by subsets of the five element set. 
	
	It turns out~\cite{Dol-AN} that it is more convenient to define $N$-dimensional Desargues maps on the $A_N$ root lattice, instead of the standard $N$-dimensional integer lattice $\ZZ^N$. The corresponding combinatorial and group structures of the root lattice allow to see the symmetry structure of the Hirota--Miwa system from the very beginning. In particular, the first ``configuration'' of five collinear points  is the image of vertices of the Delunay tile $P(1,4)$ of the lattice under the Desargues map, and the star configuration and the Desargues configuration are the images of vertices of the tiles $P(2,4)$ and $P(3,4)$, respectively (see~\cite{Dol-AN} for more details). 
\end{Rem}
\begin{Cor}
	We concluded therefore the geometro-combinatorial proof of Proposition~\ref{prop:F-Z} using decomposition of the map $F$ into the normalization and Veblen maps.	
\end{Cor}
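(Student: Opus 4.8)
The plan is to establish the ten-term relation \eqref{eq:ten-term} on its own, without invoking Proposition~\ref{prop:F-Z}, so that it may serve as the genuine input to Lemma~\ref{lem:tt}; one could collapse the statement by assuming that $F$ already satisfies the Zamolodchikov equation and reading Lemma~\ref{lem:tt} in reverse, but this would empty the decomposition of content, so I set it aside. I would pursue two mutually reinforcing routes: a conceptual geometric argument based on the star configuration $(10_2,5_4)$, and a direct computation in the skew field of non-commutative rational functions.

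For the geometric route, the starting point is that both $N$ and $V$ are nothing but transition maps between systems of linear relations attached to an incidence configuration --- $N$ re-expresses four collinear points and $V$ performs a Veblen flip. I would therefore realise the four slots $1,2,3,4$ inside one configuration whose ten points carry the two-element subsets of $\{A,B,C,D,E\}$ and whose five lines carry the single letters, the point $KL$ lying on the lines $K$ and $L$; this is exactly the star configuration of Figure~\ref{fig:star}, with four points on every line and two lines through every point. Beginning from the seven points $AB,AC,AD,AE,BC,CD,DE$ and their four displayed relations, I would read each elementary factor on either side of \eqref{eq:ten-term} as a single normalization or Veblen flip, check that the two composite sequences both terminate at the seven points $AB,AE,BC,BE,CD,CE,DE$ with the four final relations, and conclude. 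Because each flip is uniquely fixed by the combinatorics of its input relations --- the marked point together with the clockwise ordering encodes the precise form of every equation, as in Figures~\ref{fig:single} and~\ref{fig:Veblen} --- the equality of the two outputs reduces to the combinatorial consistency of the two paths through the configuration.

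The direct route is to apply both composites to a generic quadruple $(x_1,x_2,x_3,x_4)$, read the compositions from right to left, and substitute the closed forms \eqref{eq:N} and \eqref{eq:V} stage by stage until each side is reduced to the same four expressions for $(x_1^\prime,\dots,x_4^\prime)$. Here the main obstacle is purely the non-commutative bookkeeping: since the multiplicative $\tau$-function substitution \eqref{eq:U-tau} available in the commutative theory has no non-commutative counterpart, one cannot clear denominators symmetrically, and every inverse of a sum such as $(1-x_1x_2)^{-1}$ or $[x_2+x_1(1-x_2)]^{-1}$ must be transported through the calculation without reordering factors. I expect the genuinely hard point to be reconciling the two different nestings of inverses produced by the two sides; rather than expanding blindly, the cleanest resolution is to verify that both outputs obey the \emph{same} defining linear relations of the target subconfiguration, which lets the geometry of the first route certify the algebra of the second and ties the two arguments together.
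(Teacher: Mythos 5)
Your proposal follows essentially the same route as the paper: the paper likewise proves the ten-term relation \eqref{eq:ten-term} independently of Proposition~\ref{prop:F-Z} --- once by direct non-commutative computation and once via the combinatorial consistency of the two flip sequences in the star configuration $(10_2,5_4)$ (Proposition~\ref{prop:NV-tt}, Figures~\ref{fig:star} and~\ref{fig:ten-term}, starting from the seven points $AB,AC,AD,AE,BC,CD,DE$ and ending at $AB,AE,BC,BE,CD,CE,DE$) --- and then concludes through Lemma~\ref{lem:tt} combined with the factorization \eqref{eq:fact-Z} and the previously established pentagon properties of $N$ and $V$. Your closing observation, that one certifies the algebraic identity by checking both composites yield coefficients satisfying the same defining linear relations of the target subconfiguration, is precisely the mechanism behind the paper's geometro-combinatorial proof.
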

\begin{figure}
	\begin{center}
		\includegraphics[width=16cm]{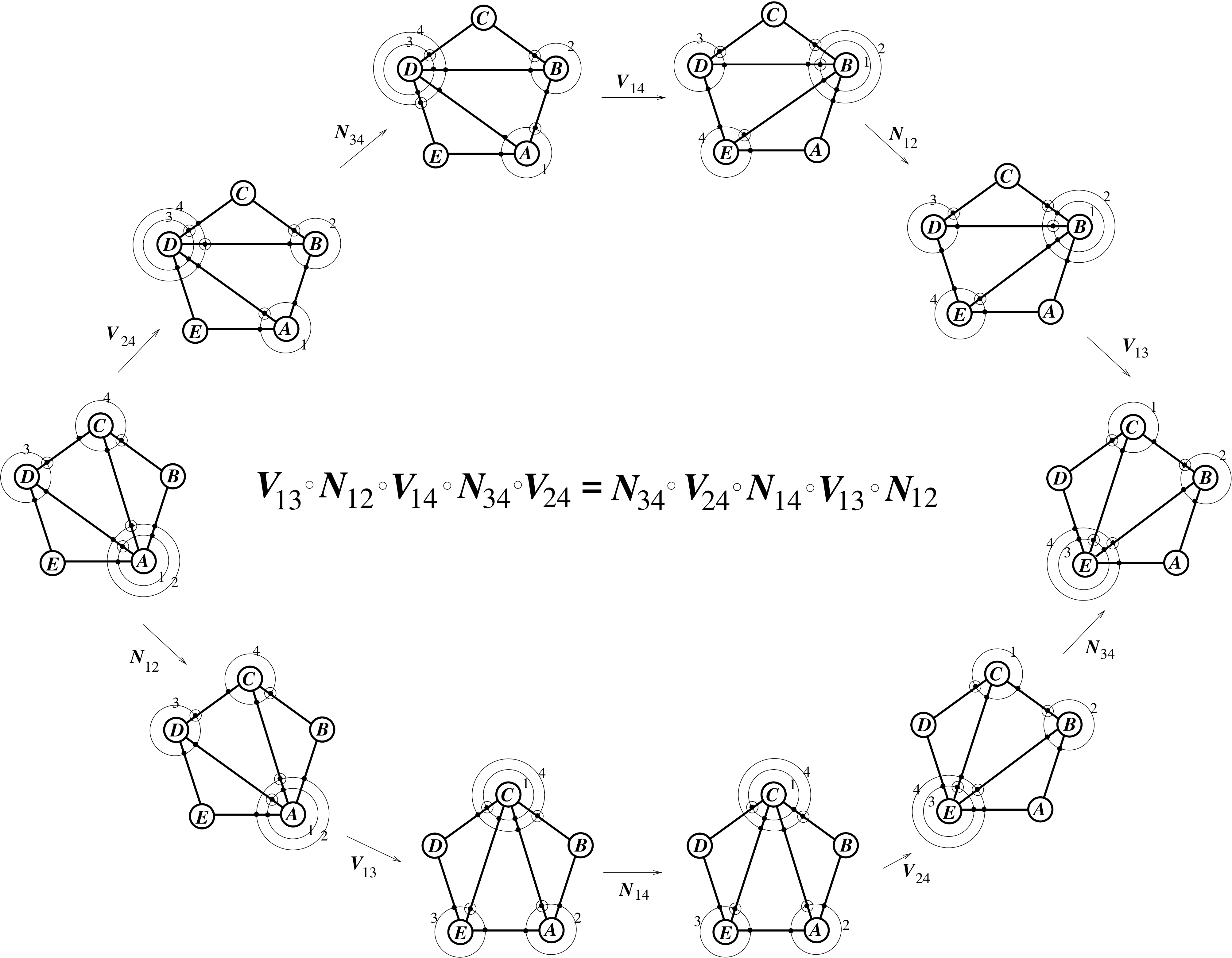}
	\end{center}
	\caption{The graphical representation of the ten-term relation}
	\label{fig:ten-term}
\end{figure}
\begin{Rem}
	The Hirota map~\eqref{eq:Hirota-map} also allows for the decomposition $H = P_{23} \circ T_{12} \circ S_{23}$ where
	\begin{align*}
	S(y_1,y_2) & = ( (y_1 + y_2)^{-1}y_1, y_1 + y_2), \\
	T(y_1,y_2) & = ( y_1 y_2, (1- y_1) y_2) .
	\end{align*}
The maps $S$ and $T$ can be given an interpretation as a normalization map and Veblen map, respectively. However the linear problems for the constituent maps mix the affine and Hirota gauges, what makes this decomposition less useful then the previous one.
\end{Rem}

\section{Quantum and Poisson reductions of the normalization and Veblen maps}
\label{sec:NP-hom}
We first recall description of the normalization and Veblen maps in arbitrary homogeneous coordinates, as it was done in~\cite{DoliwaSergeev-pentagon}. This approach allows to give the ultra-local quantum reductions of the maps. Their classical limit gives the Poisson reductions, also studied in ~\cite{DoliwaSergeev-pentagon}. Then we apply these results to present the corresponding maps which satisfy the Zamolodchikov equation.  
Because in our presentation we take sometimes different ordering of points of the geometric configurations behind the normalization and Veblen maps our formulas may differ slightly from those given in~\cite{DoliwaSergeev-pentagon}. We remark that an example of pentagonal map in non-commuting variables was given for the first time in~\cite{Kashaev-Reshetikhin-P}.

\subsection{The normalization map in homogeneous coordinates}
Consider the full homogeneous version of the linear problem~\eqref{eq:lin-N} 
\begin{equation} \label{eq:lin-N-h}
\begin{array}{ll}
\bpsi_{AB} = \bpsi_{AD}x_1 + \bpsi_{AC} y_1 \\
\bpsi_{AE} = \bpsi_{AD}x_2 + \bpsi_{AB} y_2 
\end{array}
\qquad \text{and} \qquad
\begin{array}{ll}
\bpsi_{AB}  = \bpsi_{AE}\hat{x}_1  + \bpsi_{AC} \hat{y}_1 ,  \\
\bpsi_{AE} = \bpsi_{AD} \hat{x}_2 + \bpsi_{AC} \hat{y}_2 .
\end{array}
\end{equation}
which gives the corresponding normalization map $\mathcal{N}\colon [(x_1,y_1),(x_2,y_2)] \dashrightarrow  [(\hat{x}_1,\hat{y}_1), (\hat{x}_2,\hat{y}_2)] $ 
\begin{equation} \label{eq:N-h}
\begin{array}{ll}
\hat{x}_1 = & \left( x_2 + x_1y_2\right)^{-1}x_1, \\
\hat{x}_2 = & x_2 + x_1 y_2,
\end{array} \qquad
\begin{array}{ll}
\hat{y}_1 = & y_1 x_1^{-1} x_2 \left( x_2 + x_1 y_2 \right)^{-1}x_1 \\
\hat{y}_2 = & y_1 y_2
\end{array}
\end{equation}
Notice that the combinatorial description of Figure~\ref{fig:single} of the linear problems fixes uniquely the form of the equations. Any linear equation corresponds to a circle (its label gives indices of the fields entering the equation) surrounding a vertex of the graph. There are three intersection points of the circle with edges incident with the vertex: the encircled one gives the label of the wave function on the left hand side of the equation while the right and the left points give labels of the wave functions multiplied by the fields $x$ and $y$, respectively.
\begin{Prop}
The homogeneous normalization map \eqref{eq:N-h} is birational with the inverse given by
\begin{equation} \label{eq:N-h-inv}
\begin{array}{ll}
x_1 = & \hat{x}_2 \hat{x}_1, \\
x_2 = & \hat{x}_2 \hat{y}_2^{-1} \hat{y}_1 (\hat{y}_2 \hat{x}_1 + \hat{y}_1)^{-1} \hat{y}_2 ,
\end{array} \qquad
\begin{array}{ll}
y_1 = &  \hat{y}_2 \hat{x}_1 + \hat{y}_1 ,\\
y_2 = & (\hat{y}_2 \hat{x}_1 + \hat{y}_1)^{-1} \hat{y}_2 .
\end{array}
\end{equation}
\end{Prop}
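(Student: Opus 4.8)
The plan is to establish birationality by direct substitution, checking that the map $\mathcal{N}$ of \eqref{eq:N-h} and the candidate inverse \eqref{eq:N-h-inv} compose to the identity. Everything reduces to cancellations in the free division ring, so the only real work is to arrange them so the non-commutative orderings line up. Throughout I abbreviate $S = x_2 + x_1 y_2$, so the forward map reads $\hat x_1 = S^{-1}x_1$, $\hat x_2 = S$, $\hat y_2 = y_1 y_2$ and $\hat y_1 = y_1 x_1^{-1} x_2 S^{-1} x_1$.

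First I would verify $\mathcal{N}^{-1}\circ\mathcal{N} = \mathrm{id}$, i.e. feed the outputs $(\hat x_1,\hat y_1,\hat x_2,\hat y_2)$ into \eqref{eq:N-h-inv} and recover the inputs. The components $x_1$ and $y_1$ come out almost immediately: $\hat x_2\hat x_1 = S\,S^{-1}x_1 = x_1$, while
\[
\hat y_2\hat x_1+\hat y_1 = y_1\bigl(y_2 + x_1^{-1}x_2\bigr)S^{-1}x_1 = y_1 x_1^{-1}\bigl(x_1 y_2 + x_2\bigr)S^{-1}x_1 = y_1,
\]
the key point being that the common right factor $S^{-1}x_1$ and left factor $y_1$ can be pulled out and that $x_1 y_2 + x_2 = S$. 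Then $y_2 = (\hat y_2\hat x_1+\hat y_1)^{-1}\hat y_2 = y_1^{-1}(y_1 y_2)$ follows for free. The one component requiring a genuine (though still elementary) manipulation is $x_2$: after the trivial cancellations one is left with
\[
\hat x_2\hat y_2^{-1}\hat y_1(\hat y_2\hat x_1+\hat y_1)^{-1}\hat y_2 = S\,(x_1 y_2)^{-1}x_2\,S^{-1}(x_1 y_2),
\]
and writing $u = x_2$, $v = x_1 y_2$ (so $S = u+v$) this is $(u+v)v^{-1}u(u+v)^{-1}v$; since $(u+v)v^{-1}u = (uv^{-1}u+u) = uv^{-1}(u+v)$, the central inverse cancels and the expression returns $u = x_2$.

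The reverse direction $\mathcal{N}\circ\mathcal{N}^{-1} = \mathrm{id}$ is where I expect the only real obstacle. Three of the four components again collapse by the same cancellations (in particular one first checks that $x_2 + x_1 y_2 = \hat x_2$ holds under \eqref{eq:N-h-inv}), but recovering $\hat y_1$ reduces, after setting $a = \hat y_2\hat x_1$ and $b = \hat y_1$ with $T = a+b$, to showing $T a^{-1} b T^{-1} a = b$, i.e. to the non-commutative parallel-sum identity
\[
a(a+b)^{-1}b = b(a+b)^{-1}a .
\]
This is the step I would isolate and prove separately: it follows from $a^{-1}+b^{-1} = a^{-1}(a+b)b^{-1} = b^{-1}(a+b)a^{-1}$, whence both sides equal $(a^{-1}+b^{-1})^{-1}$.

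Finally, I would note that the reverse direction is strictly speaking optional. Because \eqref{eq:N-h} and \eqref{eq:N-h-inv} are honest maps of the relevant free division ring, the one-sided identity $\mathcal{N}^{-1}\circ\mathcal{N}=\mathrm{id}$ already forces $\mathcal{N}$ to be dominant with rational section, hence birational, and makes \eqref{eq:N-h-inv} its genuine two-sided inverse. Thus the clean first direction suffices, and the parallel-sum identity is needed only if one insists on exhibiting the reverse composition explicitly.
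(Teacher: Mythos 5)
Your verification is correct and follows essentially the same route as the paper, which states this proposition without proof and implicitly leaves it as a direct computation of both compositions: your cancellations are all valid, and isolating the identity $(u+v)v^{-1}u=uv^{-1}(u+v)$, respectively $a(a+b)^{-1}b=b(a+b)^{-1}a=(a^{-1}+b^{-1})^{-1}$, is exactly the manipulation needed in the two nontrivial components. The only shaky point is your closing remark that the one-sided identity $\mathcal{N}^{-1}\circ\mathcal{N}=\mathrm{id}$ alone would force birationality --- the ``dominant with rational section'' argument is commutative algebraic geometry and does not transfer automatically to maps of free skew fields --- but this is harmless since you in fact verify both compositions explicitly.
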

\begin{Rem}
	Notice that in the homogeneous case to obtain the inverse of the normalization map we exchange both the letters $x\leftrightarrow y$ and indices $1\leftrightarrow 2$.
\end{Rem}
\begin{Prop}
	The homogeneous normalization map allows for the affine/non-homogeneous reduction $x_i+y_i=1$, $i=1,2$, giving the previous formulas~\eqref{eq:N}.
\end{Prop}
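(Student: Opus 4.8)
The plan is to impose the two affine constraints $y_1 = 1 - x_1$ and $y_2 = 1 - x_2$ directly on the homogeneous map~\eqref{eq:N-h} and to check two complementary facts: that the first components $\hat{x}_1,\hat{x}_2$ collapse to the affine formulas~\eqref{eq:N}, and that the constraints are preserved by the map, i.e.\ $\hat{x}_1 + \hat{y}_1 = 1$ and $\hat{x}_2 + \hat{y}_2 = 1$. The first fact is immediate: both $\hat{x}_1$ and $\hat{x}_2$ depend on the data only through the combination $D := x_2 + x_1 y_2$, so substituting $y_2 = 1 - x_2$ turns $D$ into $x_2 + x_1(1-x_2)$ and reproduces the formulas $\hat{x}_1 = [x_2 + x_1(1-x_2)]^{-1}x_1$ and $\hat{x}_2 = x_2 + x_1(1-x_2)$ of~\eqref{eq:N} verbatim. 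Thus the content of the proposition is really the consistency of the reduction, namely the invariance of the affine locus.

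For the second component the computation is still elementary: I would expand $\hat{x}_2 + \hat{y}_2 = (x_2 + x_1 y_2) + y_1 y_2 = x_2 + x_1(1-x_2) + (1-x_1)(1-x_2)$, and observe that the only quadratic contribution is the single product $x_1 x_2$, which occurs once with each sign while the linear terms $x_1, x_2$ also cancel pairwise, so the expression telescopes to $1$ with no appeal to commutativity.

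The genuine content, and the step I expect to be the main obstacle, is the identity $\hat{x}_1 + \hat{y}_1 = 1$, since this is the only place where the inverse $x_1^{-1}$ enters. Writing $\hat{x}_1 = D^{-1} x_1$ and $\hat{y}_1 = y_1 x_1^{-1} x_2 D^{-1} x_1$, I would multiply the sum on the right by $x_1^{-1}$ to isolate the common factor, obtaining $D^{-1} + y_1 x_1^{-1} x_2 D^{-1}$, then substitute $y_1 = 1 - x_1$ and use $(1-x_1)x_1^{-1} = x_1^{-1} - 1$, and finally factor $D^{-1}$ out on the right. The desired relation is then equivalent to $(1 + x_1^{-1} x_2 - x_2)\, D^{-1} = x_1^{-1}$, i.e.\ after left multiplication by $x_1$, to the polynomial identity $x_1 + x_2 - x_1 x_2 = D$; since $D = x_2 + x_1(1-x_2) = x_2 + x_1 - x_1 x_2$, this holds. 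The only care required is to never reorder non-commuting factors and to track the lone $x_1^{-1}$, but once the common right factor is isolated the verification reduces to an identity that needs no cancellation of non-trivial commutators, which is why the argument goes through in the totally non-commutative setting.
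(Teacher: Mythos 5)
Your verification is correct, and it is complete in the non-commutative setting: reducing $\hat{x}_1,\hat{x}_2$ is immediate since they depend only on $D=x_2+x_1y_2$, the identity $\hat{x}_2+\hat{y}_2=(x_2+x_1(1-x_2))+(1-x_1)(1-x_2)=1$ telescopes without any reordering, and your treatment of $\hat{x}_1+\hat{y}_1$ — pulling out the common right factor $x_1^{-1}$, using $(1-x_1)x_1^{-1}=x_1^{-1}-1$, and reducing to $x_1+x_2-x_1x_2=D$ — is valid in any division ring, since every step multiplies by invertible elements on a fixed side and the final identity involves only addition, which is commutative. Where you differ from the paper is that the paper does not carry out this computation at all: its proof states the same proof obligation (the final fields must again satisfy $\hat{x}_i+\hat{y}_i=1$), declares it ``obvious from geometric point of view,'' and explicitly leaves the direct check to the reader. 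The geometric argument is that the affine constraint is just the statement that the coordinates are genuinely non-homogeneous (barycentric) coordinates of points, i.e.\ a gauge choice, and the normalization map is defined by a gauge-independent incidence construction, so the constraint is automatically preserved; this buys brevity and a conceptual explanation of \emph{why} the reduction must work. Your calculation buys what the paper omits: an explicit, self-contained proof requiring no geometric input, and it makes visible the one genuinely non-trivial point — the lone $x_1^{-1}$ in $\hat{y}_1$ — confirming that no hidden commutator cancellation is needed. Both are sound; yours is exactly the ``direct verification'' the paper invites.
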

\begin{proof}
	We have to check that if the initial fields satisfy the affine reduction condition then also the final fields given by \eqref{eq:N-h} satisfy the condition as well. This is obvious from geometric point of view, but the interested Reader can verify it directly.
\end{proof}
\begin{Prop} \label{prop:N-p-h}
	The homogeneous normalization map $\mathcal{N}$ satisfies the functional pentagonal equation
	\begin{equation} \label{eq:N-pent-h}
	\mathcal{N}_{12} \circ \mathcal{N}_{13} \circ  \mathcal{N}_{23} = \mathcal{N}_{23} \circ  \mathcal{N}_{12} .
	\end{equation}
\end{Prop}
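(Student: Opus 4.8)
The plan is to mirror the two-pronged treatment already used for the affine map in Proposition~\ref{prop:N-p}: a conceptual proof based on the geometry of five collinear points, backed up by a direct algebraic verification. Since the homogeneous map $\mathcal{N}$ of \eqref{eq:N-h} is extracted from exactly the same collinearity data as the affine map, only written in arbitrary homogeneous coordinates $\bpsi_{AK}$ rather than in the normalized representatives $\bPhi_{AK}$, I expect the pentagonal identity \eqref{eq:N-pent-h} to reflect the identical combinatorics, so that the real work lies in making the homogeneous linear algebra explicit rather than in any new geometric phenomenon.

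First I would set up the geometric picture. Collinearity of the points $[\bpsi_{AK}]$ means that all the relevant vectors lie in a fixed two-dimensional subspace $W\subset\VV(\DD)$. Each of the three slots on which the pentagonal equation acts is a single homogeneous relation of the form \eqref{eq:lin-N-h}, expressing one of five vectors $\bpsi_{AB},\bpsi_{AC},\bpsi_{AD},\bpsi_{AE},\bpsi_{AF}$ as a right-linear combination of two others, with the coefficient pair $(x_i,y_i)$. The combinatorial rule of Figure~\ref{fig:single} fixes, for each circle, which vector sits on the left and which coefficient plays the role of $x$ and which of $y$. With this dictionary, $\mathcal{N}$ is nothing but the change of the chosen reference pair for a vector, computed inside $W$.

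Next I would run both sides of \eqref{eq:N-pent-h} through this dictionary. Starting from three initial relations among the five vectors (the analogue of the leftmost configuration in Figure~\ref{fig:pentagon-N}), each elementary factor $\mathcal{N}_{ij}$ re-expresses one vector in a new basis, and the two composite sequences $\mathcal{N}_{12}\circ\mathcal{N}_{13}\circ\mathcal{N}_{23}$ and $\mathcal{N}_{23}\circ\mathcal{N}_{12}$ describe two admissible routes to the very same final triple of relations. Because coordinates of a vector with respect to a fixed basis of $W$ are unique, the two routes must return identical coefficient pairs, which is precisely \eqref{eq:N-pent-h}. As a sanity check I would confirm that the affine reduction $x_i+y_i=1$ is respected throughout, so that the identity specializes correctly to the already-established Proposition~\ref{prop:N-p}.

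The hard part will be bookkeeping rather than anything conceptual. In the direct-verification route the obstacle is the second components $\hat{y}_1 = y_1 x_1^{-1} x_2 (x_2+x_1y_2)^{-1}x_1$ and $\hat{y}_2 = y_1 y_2$: their nested inverses and the strict non-commutativity mean that checking agreement of the two sides on the $y$-slots demands careful ordering of products, with no cancellation one may assume a priori. In the geometric route the corresponding delicacy is to verify that both sequences of normalization flips preserve not merely the underlying collinear points but the precise combinatorial labels of Figure~\ref{fig:single}, so that the two output triples are literally the same relations and not two relations differing by an admissible relabeling; once that orientation-and-marking data is shown to match, uniqueness of coordinates in $W$ closes the argument.
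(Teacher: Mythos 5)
Your proposal is correct and follows essentially the same route as the paper: the paper proves Proposition~\ref{prop:N-p-h} by direct verification (listing the final expressions for both sides), and then remarks that the combinatorial proof of Figure~\ref{fig:pentagon-N} ``keeps its value also in the homogeneous gauge,'' which is precisely your geometric argument via uniqueness of coefficients of a vector with respect to a fixed pair of reference vectors in the two-dimensional subspace. Your observation that the normalization map, unlike the Veblen map, involves no gauge freedom (all five homogeneous vectors are fixed), so that the two routes must literally return the same coefficient pairs, is the correct reason the homogeneous case goes through unchanged.
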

\begin{proof}
	To make our presentation complete we provide the final expressions
	\begin{equation*}
	[(x_1^\prime , y_1^\prime) , (x_2^\prime, y_2^\prime) , (x_3^\prime , y_3^\prime )] = 
	\begin{cases}
	(\mathcal{N}_{12} \circ \mathcal{N}_{13} \circ  \mathcal{N}_{23})[(x_1,y_1),(x_2,y_2), (x_3,y_3)] ,\\ 
	\; \; \; (\mathcal{N}_{23} \circ  \mathcal{N}_{12})	[(x_1,y_1),(x_2,y_2), (x_3,y_3)] ,
	\end{cases}
	\end{equation*}
	where
	\begin{align*}
	x_1^\prime & = (x_2 + x_1 y_2)^{-1} x_1 , \\
	y_1^\prime & =  y_1 x_1^{-1} x_2 (x_2 + x_1 y_2)^{-1} x_1 , \\
	x_2^\prime & = \left[x_3 + (x_2 + x_1 y_2) y_3\right]^{-1} (x_2 + x_1 y_2), \\
	y_2^\prime & = y_1 y_2 \left[ 1 + y_3 x_3^{-1}(x_2 + x_1 y_2)\right]^{-1}, \\
	x_3^\prime & = x_3 + (x_2 + x_1 y_2) y_3,\\
	y_3^\prime & = y_1 y_2 y_3 .
	\end{align*}
\end{proof}
\begin{Rem}
The combinatorial proof visualized on Figure~\ref{fig:pentagon-N} keeps its value also in the homogeneous gauge.
\end{Rem}

In \cite{DoliwaSergeev-pentagon} the ultra-local reduction of the (direct analogue) of the functional pentagon map \eqref{eq:N-h} was given. The ultra-locality means that the fields with different indices commute, i.e. in our case we assume that
\begin{equation} \label{eq:u-l}
x_1 x_2 = x_2 x_1, \qquad y_1 y_2 = y_2 y_1, \qquad
x_1 y_2 = y_2 x_1, \qquad y_1 x_2 = x_2 y_1. 
\end{equation}
By direct calculation one shows~\cite{DoliwaSergeev-pentagon} the following result.
\begin{Prop}
	The homogeneous normalization map \eqref{eq:N-h} allows for the ultra-local reduction
	provided that there exists central non-vanishing constant $q$ such that the fields $[(x_1,y_1),(x_2,y_2)]$ satisfy, apart from the ultra-locality requirements \eqref{eq:u-l}, also the Weyl commutation relations
	\begin{equation}
	\label{eq:Weyl-c-r}
	x_1 y_1 = q y_1 x_1, \qquad x_2 y_2 = q y_2 x_2.
	\end{equation}
	In that case the resulting fields  $[(\hat{x}_1,\hat{y}_1), (\hat{x}_2,\hat{y}_2)]$ given by~\eqref{eq:N-h} satisfy not only the reduction conditions \eqref{eq:u-l}, but also the Weyl commutation relations
	\eqref{eq:Weyl-c-r} with the same constant $q$.
\end{Prop}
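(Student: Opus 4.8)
The plan is to verify the six commutation relations required of the output fields --- the four ultra-locality identities for $[(\hat{x}_1,\hat{y}_1),(\hat{x}_2,\hat{y}_2)]$ and the two Weyl relations \eqref{eq:Weyl-c-r} with the same $q$ --- directly from the defining formulas \eqref{eq:N-h}, using only the input relations \eqref{eq:u-l} and \eqref{eq:Weyl-c-r} and the centrality of $q$. Since the sole obstruction to a clean computation is the inverse of the ``denominator'' $A:=\hat{x}_2 = x_2 + x_1 y_2$, I would first isolate a short list of auxiliary commutation rules for $A$ at the level of $A$ itself (so they transfer to $A^{-1}$), and only then feed them into each verification.

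First I would record the simplest rule: expanding $x_1 A$ and $A x_1$ and using $x_1 x_2 = x_2 x_1$ and $x_1 y_2 = y_2 x_1$ gives $x_1 A = A x_1$, hence $A^{-1} x_1 = x_1 A^{-1}$. This immediately yields the first ultra-locality relation, since $\hat{x}_1 \hat{x}_2 = A^{-1} x_1 A = x_1 = A A^{-1} x_1 = \hat{x}_2 \hat{x}_1$, and --- crucially --- it lets me simplify the awkward expression for $\hat{y}_1$: moving $A^{-1}$ past $x_1$ and using $x_2 x_1 = x_1 x_2$ collapses $\hat{y}_1 = y_1 x_1^{-1} x_2 A^{-1} x_1$ into the much more manageable form $\hat{y}_1 = y_1 x_2 A^{-1}$.

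Next I would establish the two remaining rules. A direct expansion using $x_2 y_2 = q y_2 x_2$ and the commutations \eqref{eq:u-l} gives $A\,y_1 y_2 = q\, y_1 y_2 A$; this is at once the Weyl relation $\hat{x}_2 \hat{y}_2 = q \hat{y}_2 \hat{x}_2$ and, after multiplying through by $A^{-1}$ on both sides, the rule $A^{-1} y_1 y_2 = q^{-1} y_1 y_2 A^{-1}$. A second expansion shows that $A$ commutes with the product $y_1 x_2$, i.e.\ $A\, y_1 x_2 = y_1 x_2 A$, both sides reducing to $y_1 x_2^2 + q^{-1} x_2 x_1 y_1 y_2$. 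With these three rules in hand the remaining identities fall out mechanically upon repeated use of \eqref{eq:u-l} and \eqref{eq:Weyl-c-r}: $\hat{y}_1 \hat{x}_2 = y_1 x_2 = \hat{x}_2 \hat{y}_1$ from the commutation of $A$ with $y_1 x_2$; the two relations $\hat{x}_1 \hat{y}_2 = \hat{y}_2 \hat{x}_1$ and $\hat{y}_1 \hat{y}_2 = \hat{y}_2 \hat{y}_1$ from the $q$-rule for $y_1 y_2$; and the Weyl relation for index $1$, $\hat{x}_1 \hat{y}_1 = q \hat{y}_1 \hat{x}_1$, from $A^{-1}$ commuting with both $x_1$ and $y_1 x_2$ together with $x_1 y_1 = q y_1 x_1$, both sides reducing to a scalar multiple of $y_1 x_1 x_2 A^{-2}$.

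I expect the only genuine difficulty to be the bookkeeping with $A^{-1}$: because $A$ is a sum rather than a monomial one cannot push inverses through term by term, so the entire computation hinges on having first proved the commutation rules for $A$ itself and then transferring them to $A^{-1}$. The simplification $\hat{y}_1 = y_1 x_2 A^{-1}$ is precisely what makes these three rules sufficient; without it the index-$1$ Weyl relation would be considerably more painful to check. Throughout, all inverses are taken in the ambient division ring, so their existence is not in question.
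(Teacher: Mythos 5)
Your proof is correct and takes essentially the same route as the paper, whose proof is simply the assertion ``by direct calculation'' (with a citation to \cite{DoliwaSergeev-pentagon}); your organization of that calculation via the three commutation rules for the denominator $A=x_2+x_1y_2$, transferred to $A^{-1}$, is a clean and complete way of carrying it out. Note also that your simplification $\hat{y}_1=y_1x_2A^{-1}$ is exactly the ultra-local form \eqref{eq:N-q} that the paper records in the Remark immediately following the proposition, which confirms your reduction.
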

\begin{Rem}
	In the ultra-local/quantum reduction the normalization map \eqref{eq:N-h} simplifies slightly and can be rewritten in the form
	\begin{equation} \label{eq:N-q}
	\begin{array}{ll}
	\hat{x}_1 = & x_1\left( x_2 + x_1y_2\right)^{-1}, \\
	\hat{x}_2 = & x_2 + x_1 y_2,
	\end{array} \qquad
	\begin{array}{ll}
	\hat{y}_1 = & y_1 x_2 \left( x_2 + x_1 y_2 \right)^{-1} \\
	\hat{y}_2 = & y_1 y_2.
	\end{array}
	\end{equation}
\end{Rem}
\begin{Rem}
	The above procedure allows to find a transition to solutions of the quantum pentagon (and then also Zamolodchikov's) equations. The important step, which is outside of the scope of the present work (see however~\cite{Sergeev-book,Sergeev-q3w,Sergeev-finite,DoliwaSergeev-pentagon}), is  to consider an irreducible (preferably) representation of the division algebra generated by relations~\eqref{eq:u-l}-\eqref{eq:Weyl-c-r} and find a realization of the  automorphism~\eqref{eq:N-q} in terms of an inner map.  
\end{Rem}
\begin{Cor}
	By taking quasi-classical $q\to 1$ limit of the above commutation relations we obtain that for commuting variables the functional pentagon map \eqref{eq:N-h} is a Poisson map with the brackets
	\begin{equation} \label{eq:Poisson-Weyl}
	\{ x_1, y_1 \} = x_1 y_1, \quad \{ x_2, y_2 \} = x_2 y_2, \quad \{ x_1, x_2 \} = \{ y_1, y_2 \} = \{ x_1, y_2 \} = \{ y_1, x_2 \} =0.
	\end{equation}
	\end{Cor}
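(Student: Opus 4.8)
The plan is to deduce the statement from the preceding Proposition, which guarantees that the ultra-local map \eqref{eq:N-q} (equal to \eqref{eq:N-h} once the variables commute) is an automorphism of the algebra generated by $x_1,y_1,x_2,y_2$ subject to \eqref{eq:u-l}--\eqref{eq:Weyl-c-r}, preserving the Weyl relations with the \emph{same} constant $q$. The quasi-classical limit is governed by the standard correspondence $\{a,b\}=\lim_{q\to 1}(q-1)^{-1}(ab-ba)$ evaluated in the commutative limit. First I would check on generators that this prescription reproduces exactly \eqref{eq:Poisson-Weyl}: from $x_iy_i=qy_ix_i$ one has $x_iy_i-y_ix_i=(q-1)y_ix_i$, so $(q-1)^{-1}(x_iy_i-y_ix_i)\to x_iy_i$ and hence $\{x_i,y_i\}=x_iy_i$, while all remaining pairs commute by ultra-locality \eqref{eq:u-l} and therefore have vanishing bracket.

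Since the quantum map preserves the defining relations with the same $q$, the commutators of the images $\hat{x}_i,\hat{y}_i$ inherit the identical $(q-1)$-dependence as those of the generators; dividing by $(q-1)$ and letting $q\to 1$ then yields the same brackets on the image variables, namely $\{\hat{x}_1,\hat{y}_1\}=\hat{x}_1\hat{y}_1$, $\{\hat{x}_2,\hat{y}_2\}=\hat{x}_2\hat{y}_2$, and zero for the four mixed pairs. This is precisely the Poisson-map property with respect to \eqref{eq:Poisson-Weyl}. The main obstacle is making this limiting argument rigorous: one must ensure that the family \eqref{eq:N-q} is defined uniformly in $q$, that its entries admit a first-order expansion in $(q-1)$, and that the leading term of each commutator genuinely coincides with the classically computed Poisson bracket, with no $O((q-1)^2)$ contamination at the relevant order.

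To sidestep these convergence subtleties I would also record a purely classical direct verification, treating \eqref{eq:N-h} as rational functions in commuting variables and using that the bracket \eqref{eq:Poisson-Weyl} is a biderivation. Writing $D=x_2+x_1y_2$, the only nontrivial generator brackets with $D$ are $\{x_2,D\}=x_1x_2y_2$, $\{y_1,D\}=-x_1y_1y_2$ and $\{y_2,D\}=-x_2y_2$, while $\{x_1,D\}=0$; combined with $\{f,D^{-1}\}=-D^{-2}\{f,D\}$ these reduce each of the six target brackets to an elementary rational identity. For instance $\{\hat{x}_1,\hat{x}_2\}=\{x_1D^{-1},D\}=0$ is immediate, and $\{\hat{x}_1,\hat{y}_1\}$ collapses to $x_1y_1x_2D^{-2}=\hat{x}_1\hat{y}_1$ after clearing the common factor $D$. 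The bookkeeping for the four mixed brackets $\{\hat{x}_1,\hat{x}_2\}$, $\{\hat{y}_1,\hat{y}_2\}$, $\{\hat{x}_1,\hat{y}_2\}$ and $\{\hat{y}_1,\hat{x}_2\}$, all of which must vanish, is the most laborious part, but presents no conceptual difficulty: in each case the two Leibniz contributions cancel after substituting $D=x_2+x_1y_2$.
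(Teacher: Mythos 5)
Your proposal is correct, and its first paragraph is essentially the paper's own argument: the Corollary carries no separate proof precisely because it is read off from the preceding Proposition --- the map \eqref{eq:N-q} preserves the Weyl relations \eqref{eq:u-l}--\eqref{eq:Weyl-c-r} with the \emph{same} central $q$, so dividing commutators by $(q-1)$ and letting $q\to 1$ turns the preserved relations into the preserved brackets \eqref{eq:Poisson-Weyl}, exactly as you describe. Your supplementary direct classical verification is a genuine addition that the paper does not spell out, and it checks out: with $D=x_2+x_1y_2$ one indeed has $\{x_1,D\}=0$, $\{x_2,D\}=x_1x_2y_2$, $\{y_1,D\}=-x_1y_1y_2$, $\{y_2,D\}=-x_2y_2$; from these, $\{D,y_1x_2\}=y_1\{D,x_2\}+x_2\{D,y_1\}=-x_1y_1x_2y_2+x_1y_1x_2y_2=0$ kills the mixed contributions in $\{\hat{x}_1,\hat{y}_1\}$ and $\{\hat{y}_1,\hat{x}_2\}$, while $\{D,y_1y_2\}=y_1y_2D$ gives $\{\hat{x}_2,\hat{y}_2\}=\hat{x}_2\hat{y}_2$ and forces the cancellations in $\{\hat{x}_1,\hat{y}_2\}$ and $\{\hat{y}_1,\hat{y}_2\}$; I confirmed all six brackets close as claimed. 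What this second route buys is exactly what you say: it removes the analytic caveats of the quasi-classical limit (uniform definition of the family in $q$, first-order expansion in $(q-1)$), which the paper leaves implicit, at the cost of a routine but slightly laborious Leibniz-rule computation.
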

\begin{Rem}
	The form~\eqref{eq:N-q} of the quantum normalization map does not change in the quasi-classical limit, therefore we will use the name ultra-local/quantum/Poisson map. The difference --- the Weyl commutation relations~\eqref{eq:Weyl-c-r} versus the Poisson brackets~\eqref{eq:Poisson-Weyl} --- will be clear from the context.
\end{Rem}

\subsection{The Veblen map in homogeneous coordinates}
Following~\cite{DoliwaSergeev-pentagon} let us present the Veblen map in the generic homogeneous gauge. As in the previous case of the normalization map we discuss then various gauge specifications of the map, and finally we will move on to the ultra-local and Poisson reductions. 

The homogeneous version of the linear problem~\eqref{eq:lin-V} reads as follows
\begin{equation} \label{eq:lin-V-h}
\begin{array}{ll}
\bpsi_{AC} = \bpsi_{AB}x_1 + \bpsi_{AD} y_1 \\
\bpsi_{BC} = \bpsi_{AC}x_2 + \bpsi_{CD} y_2 
\end{array}
\qquad \text{and} \qquad
\begin{array}{ll}
\bpsi_{BC}  = \bpsi_{AB}\bar{x}_1  + \bpsi_{BD} \bar{y}_1 , \\
\bpsi_{BD} = \bpsi_{AD} \bar{x}_2 + \bpsi_{CD} \bar{y}_2 .
\end{array}
\end{equation}
Its combinatorial description, as given on Figure~\ref{fig:Veblen}, remains valid exactly like in the case of homogeneous normalization map.

Notice that even if the point $BD$ is uniquely constructed from other five points of the Veblen configuration its homogeneous coordinates are given up to a right non-zero factor. This implies that the value of $\bar{y}_1$  can be given arbitrary. To underline this freedom, without loosing generality we put $\bar{y}_1=G \neq 0$. The corresponding Veblen map $\mathcal{V}^G\colon [(x_1,y_1),(x_2,y_2)] \dashrightarrow  [(\bar{x}_1,\bar{y}_1), (\bar{x}_2,\bar{y}_2)] $ is given by
\begin{equation} \label{eq:V-h}
\begin{array}{ll}
\bar{x}_1 = & x_1 x_2, \\
\bar{x}_2 = & y_1 x_2 G^{-1},
\end{array} \qquad
\begin{array}{ll}
\bar{y}_1 = &G , \\
\bar{y}_2 = & y_2 G^{-1}.
\end{array}
\end{equation}
\begin{Prop}
	The homogeneous normalization map allows for the affine/non-homogeneous reduction $x_i+y_i=1$, $i=1,2$, giving the previous formulas~\eqref{eq:V}.
\end{Prop}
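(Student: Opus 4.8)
The plan is to impose the affine constraint $y_i = 1 - x_i$, $i=1,2$, on the homogeneous normalization map \eqref{eq:N-h} and to verify two complementary facts: that the surviving $x$-components reproduce the non-homogeneous normalization formulas \eqref{eq:N}, and that the output fields again obey $\hat{x}_i + \hat{y}_i = 1$, so that the reduction is self-consistent. Geometrically the constraint singles out the affine gauge in which the two coefficients of each collinearity relation of \eqref{eq:lin-N-h} sum to one; since any admissible change of basis among collinear points respects this normalization, consistency is clear from the outset, and the computation below merely confirms it.

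First I would substitute $y_2 = 1 - x_2$ into \eqref{eq:N-h}. The denominator $x_2 + x_1 y_2$ becomes $x_2 + x_1(1-x_2)$, whence $\hat{x}_1 = [x_2 + x_1(1-x_2)]^{-1} x_1$ and $\hat{x}_2 = x_2 + x_1(1-x_2)$ coincide term by term with the non-homogeneous normalization map \eqref{eq:N}. This already pins down the reduced map; it remains only to confirm that the affine constraint propagates to the image.

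For the second index the check is immediate: writing $D = x_2 + x_1 y_2$, one has $\hat{x}_2 + \hat{y}_2 = D + y_1 y_2$, which after the substitution $y_i = 1-x_i$ expands to $x_2 + x_1(1-x_2) + (1-x_1)(1-x_2) = 1$. For the first index the verification demands care with the non-commutative ordering: with $\hat{x}_1 = D^{-1} x_1$ and $\hat{y}_1 = (x_1^{-1} - 1)x_2 D^{-1} x_1$, one right-multiplies the target identity $\hat{x}_1 + \hat{y}_1 = 1$ by $x_1^{-1}$ and then left-multiplies by $x_1$, whereupon it collapses to $(x_1 + x_2 - x_1 x_2) D^{-1} = 1$, that is to $D D^{-1} = 1$, since $D = x_1 + x_2 - x_1 x_2$ under the constraint. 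I expect this reordering bookkeeping in the $\hat{y}_1$ term to be the only delicate point of the argument; it is minor, and the cleanest resolution is to fall back on the geometric observation that the affine normalization is an intrinsic property of the collinear configuration, automatically preserved under the change of basis that defines $\mathcal{N}$, so that the preservation of $\hat{x}_i + \hat{y}_i = 1$ holds without any algebra at all.
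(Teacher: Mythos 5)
Your algebra is sound, but it verifies the wrong proposition. The statement you were given refers to \eqref{eq:V} --- the word ``normalization'' is evidently a slip in the paper's wording --- and it sits in the Veblen subsection: it asserts that the homogeneous \emph{Veblen} map \eqref{eq:V-h} admits the affine reduction $x_i+y_i=1$ and then reproduces the affine Veblen formulas \eqref{eq:V}. The claim you proved, namely that \eqref{eq:N-h} reduces under $y_i=1-x_i$ to \eqref{eq:N} with the constraint propagating to the image, is a separate proposition stated earlier in the same section (the paper leaves that check to the reader, and your non-commutative verification of $\hat{x}_1+\hat{y}_1=1$ via $x_1+(1-x_1)x_2 = x_2+x_1(1-x_2)$ is a correct way to do it). But nothing in your proposal touches \eqref{eq:V-h} or \eqref{eq:V}.

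The missing idea is the one that constitutes the entire content of the paper's proof: for the Veblen map the reduction is \emph{not} automatic, because the image field $\bar{y}_1=G$ is a free gauge parameter reflecting the multiplicative freedom in the homogeneous coordinates of the newly constructed point $BD$; for generic $G$ the constraint $\bar{x}_1+\bar{y}_1=1$ simply fails. One must fix the gauge by setting $G=1-\bar{x}_1=1-x_1x_2$, and only then check that the constraint propagates to the second index: $\bar{x}_2+\bar{y}_2=(y_1x_2+y_2)G^{-1}$, and under $y_i=1-x_i$ one has $(1-x_1)x_2+(1-x_2)=1-x_1x_2=G$, so $\bar{x}_2+\bar{y}_2=1$ and \eqref{eq:V-h} collapses exactly to \eqref{eq:V}. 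This also shows why your closing geometric fallback --- that the affine normalization is intrinsic to the configuration and hence ``preserved without any algebra'' --- is precisely the point that breaks down here: for the normalization map all four points are given in advance and only the basis of the linear relations changes, whereas the Veblen flip creates a new point whose homogeneous coordinates are defined only up to a right factor, so preservation of the affine gauge is a choice of representative, not an automatic consequence.
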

\begin{proof}
	Assume that the initial fields satisfy the affine reduction condition and fix the gauge function requiring
	\begin{equation}
	G= 1 - \bar{x}_1 = 1- x_1 x_2,
	\end{equation}
	Then the remaining parts of equations \eqref{eq:V-h} reduce to those of~\eqref{eq:V}.
\end{proof}
\begin{Rem}
	By solving the linear problem \eqref{eq:lin-V-h} in opposite direction we can analogously multiply the wave function $\bpsi_{AC}$ by an arbitrary right factor, what results in the freedom $x_2=\bar{G}$ which gives
\begin{equation} \label{eq:V-h-inv}
\begin{array}{ll}
x_1 = & \bar{x}_1 \bar{G}^{-1}, \\
x_2 = & \bar{G},
\end{array} \qquad
\begin{array}{ll}
y_1 = & \bar{x}_2 \bar{y}_1  \bar{G}^{-1},  \\
y_2 = & \bar{y}_2 \bar{y}_1  .
\end{array}
\end{equation}	
\end{Rem}

The gauge parameters enter also in the corresponding pentagonal condition.
\begin{Prop} \label{prop:V-p-h}
	The homogeneous Veblen map \eqref{eq:V-h}
	satisfies the reversed pentagonal condition
	\begin{equation} %\label{eq:V-pent-r-h}
	\mathcal{V}_{23}^C \circ \mathcal{V}_{13}^B \circ  \mathcal{V}_{12}^A = \mathcal{V}_{12}^H \circ  \mathcal{V}_{23}^G ,
	\end{equation}
	provided the gauge parameters of the maps satisfy conditions
	\begin{equation} \label{eq:V-pent-g-ab}
	H = B, \qquad G = CB .
	\end{equation}
\end{Prop}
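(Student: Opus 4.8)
The plan is to verify the identity by direct computation, treating both sides as birational maps on a triple of pairs $[(x_1,y_1),(x_2,y_2),(x_3,y_3)]$, where $\mathcal{V}^G_{ij}$ acts through the formulas \eqref{eq:V-h} on the $i$-th and $j$-th pairs (the $i$-th in the role of the first argument, the $j$-th in the role of the second) and as the identity elsewhere. Since both compositions are explicit, it suffices to compute the image of a generic triple under each side and to match the six resulting coordinates; the conditions on the gauge parameters will then fall out of this matching.

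First I would evaluate the shorter right-hand side $\mathcal{V}^H_{12}\circ\mathcal{V}^G_{23}$. Applying $\mathcal{V}^G_{23}$ sends the triple to $[(x_1,y_1),(x_2x_3,G),(y_2x_3G^{-1},y_3G^{-1})]$, after which $\mathcal{V}^H_{12}$ produces
\[
[(x_1x_2x_3,\,H),\ (y_1x_2x_3H^{-1},\,GH^{-1}),\ (y_2x_3G^{-1},\,y_3G^{-1})].
\]
Next I would evaluate the left-hand side $\mathcal{V}^C_{23}\circ\mathcal{V}^B_{13}\circ\mathcal{V}^A_{12}$ in three successive steps. The step to watch is the last one: the initial map $\mathcal{V}^A_{12}$ injects the gauge factor $A$ into the second and third slots of the intermediate state, but when $\mathcal{V}^C_{23}$ is finally applied the relevant products each contain a cancelling pair $A^{-1}A$, so that $A$ disappears from the final answer. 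After these cancellations one obtains
\[
[(x_1x_2x_3,\,B),\ (y_1x_2x_3B^{-1},\,C),\ (y_2x_3B^{-1}C^{-1},\,y_3B^{-1}C^{-1})].
\]

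Comparing the two outputs coordinate by coordinate, the frozen gauge entry of the first pair forces $H=B$; the frozen entry of the second pair gives $GH^{-1}=C$, hence $G=CB$ after substituting $H=B$; and the remaining coordinates of the second and third pairs reduce, under these two conditions, to the single identity $G^{-1}=B^{-1}C^{-1}$, which is then automatic. This shows that \eqref{eq:V-pent-g-ab} is both necessary and sufficient. The only real obstacle is bookkeeping in the non-commutative setting: one must keep every factor in its correct left–right order through the triple composition and verify the $A^{-1}A$ cancellation, which is precisely what makes the composite independent of the intermediate gauge $A$ and yields the clean matching conditions $H=B$ and $G=CB$.
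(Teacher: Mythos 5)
Your proposal is correct and follows essentially the same route as the paper: both sides are evaluated by direct composition of the homogeneous Veblen map \eqref{eq:V-h} on a generic triple, the intermediate gauge $A$ cancels via the $A^{-1}A$ pairs, and matching the outputs yields exactly the paper's final expressions $[(x_1x_2x_3,B),(y_1x_2x_3B^{-1},C),(y_2x_3B^{-1}C^{-1},y_3B^{-1}C^{-1})]$, forcing $H=B$ and $G=CB$. Your intermediate states and the necessity-and-sufficiency of \eqref{eq:V-pent-g-ab} check out against the paper's computation.
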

\begin{proof}
	We provide the final expressions
	\begin{equation*}
	[(x_1^\prime , y_1^\prime) , (x_2^\prime, y_2^\prime) , (x_3^\prime , y_3^\prime )] = 
	\begin{cases}
	\mathcal{V}_{23}^C \circ \mathcal{V}_{13}^B \circ  \mathcal{V}_{12}^A[(x_1,y_1),(x_2,y_2), (x_3,y_3)] , \\ 
	\; \; \; (\mathcal{V}_{12}^H \circ  \mathcal{V}_{23}^G)	[(x_1,y_1),(x_2,y_2), (x_3,y_3)] 
	\end{cases}
	\end{equation*}
	where
	\begin{align*}
	x_1^\prime & = x_1 x_2 x_3 , \\
	y_1^\prime & =  H = B , \\
	x_2^\prime & = y_1 x_2 x_3 H^{-1} = y_1 x_2 x_3 B^{-1} , \\
	y_2^\prime & =GH^{-1} = C, \\
	x_3^\prime & = y_2 x_3 G^{-1} = y_2 x_3 B^{-1}C^{-1},\\
	y_3^\prime & = y_3 G^{-1} = y_3 B^{-1} C^{-1} .
	\end{align*}
\end{proof}
\begin{Ex}
	Let us fix the gauge function $G$ by the condition $\bar{y}_1=y_1$, which gives the corresponding specification of the Veblen map
	\begin{equation} \label{eq:V-h-ex}
	\begin{array}{l}
	\bar{x}_1 =  x_1 x_2, \\
	\bar{x}_2 =  y_1 x_2 y_1^{-1},
	\end{array} \qquad
	\begin{array}{l}
	\bar{y}_1 =  y_1 \\
	\bar{y}_2 =  y_2 y_1^{-1}.
	\end{array}
	\end{equation}	
	Such a map satisfies the pentagonal condition.
\end{Ex}

\begin{Prop}
The homogeneous Veblen map \eqref{eq:V-h} allows for the ultra-local reduction
provided that there exists central non-vanishing constant $q$ such that the initial fields $[(x_1,y_1),(x_2,y_2)]$ satisfy, apart from the ultra-locality requirements \eqref{eq:u-l}, also the Weyl commutation relations \eqref{eq:Weyl-c-r},
and when the gauge parameter has the form
\begin{equation} \label{eq:G-f}
G = f(y_1x_1 y_2^{-1})x_1 y_2,
\end{equation}
where $f$ is an arbitrary function/series of/in a single variable.
In that case the resulting fields  $[(\bar{x}_1,\bar{y}_1), (\bar{x}_2,\bar{y}_2)]$ satisfy not only the reduction conditions \eqref{eq:u-l}, but also the Weyl commutation relations
\eqref{eq:Weyl-c-r} with the same constant $q$.	
\end{Prop}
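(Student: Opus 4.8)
The plan is to route everything through the commutation behaviour of the single element $w=y_1x_1y_2^{-1}$ appearing in \eqref{eq:G-f}. First I would compute, from the ultra-locality relations \eqref{eq:u-l} and the Weyl relations \eqref{eq:Weyl-c-r}, how $w$ commutes with each of the four generators; each such commutation has the form $g\,w=q^{\,\varepsilon(g)}w\,g$ with $\varepsilon(g)\in\{-1,0,1\}$, and in particular $w$ commutes with $y_2$. Since $f$ is a formal series in one variable, these relations promote termwise to intertwining rules $g\,f(w)=f(q^{\,\varepsilon(g)}w)\,g$, valid equally for $f(w)^{-1}$. Before using them I would check that $f(w)$ and $f(w)^{-1}$ are well defined in the division ring, so that this termwise manipulation is legitimate.

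Next I would treat the constant gauge $f\equiv 1$, i.e.\ $G=x_1y_2$, as a base case. Here \eqref{eq:V-h} collapses to the explicit fields $\bar x_1=x_1x_2$, $\bar y_1=x_1y_2$, $\bar x_2=y_1x_2y_2^{-1}x_1^{-1}$ and $\bar y_2=x_1^{-1}$, for which a short finite computation confirms all four relations \eqref{eq:u-l} together with both relations \eqref{eq:Weyl-c-r} for the barred fields. This settles the statement when $f$ is constant.

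For general $f$ I would then observe that replacing $G=x_1y_2$ by $G=f(w)x_1y_2$ leaves $\bar x_1=x_1x_2$ unchanged, multiplies $\bar y_1$ on the left by $f(w)$, and multiplies $\bar x_2$ and $\bar y_2$ on the right by $f(w)^{-1}$. Substituting this into the six target relations and simplifying with the base case cancels the inserted factors and reduces each relation to one commutation statement for $f(w)$: the three relations $[\bar x_1,\bar x_2]=[\bar x_1,\bar y_2]=0$ and $\bar x_1\bar y_1=q\bar y_1\bar x_1$ all collapse to $[f(w),x_1x_2]=0$; the relation $[\bar y_1,\bar y_2]=0$ to $[f(w),y_2]=0$; and $[\bar y_1,\bar x_2]=0$ to $[f(w),y_1x_2]=0$. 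The last relation $\bar x_2\bar y_2=q\bar y_2\bar x_2$, in which $f(w)^{-1}$ sits to the right of both factors, reduces after commuting one copy through to $[f(w),y_1x_2y_2^{-1}]=0$, which follows from the two preceding commutations. Each surviving commutation I would then verify directly from the $w$–generator rules of the first paragraph.

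I expect the main obstacle to be exactly this final reduction and verification. The commutators of $w$ are delicate because $w$ contains $y_2^{-1}$, which does \emph{not} commute with $x_2$, so the naive reading ``$w$ commutes with everything of index $2$'' is false and the $q$-signs must be tracked with care; they are also what governs whether $f(w)$ is central against the mixed combinations $x_1x_2$ and $y_1x_2$ and whether the accumulated $q$-shifts cancel in the second Weyl relation. In effect the whole statement rests on confirming that $f(w)$ commutes with the three invariants $x_1x_2$, $y_2$ and $y_1x_2$ thrown up by the base case, and this is the crux, since it is controlled entirely by the exponents appearing in $w$; once it is in place, the ultra-locality relations and the first Weyl relation follow immediately.
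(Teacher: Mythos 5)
Your reduction scheme is correct, and it is essentially equivalent to the paper's route: the paper extracts from the barred relations the necessary and sufficient conditions $Gy_1x_2=y_1x_2G$, $Gy_2=y_2G$ and $qGx_1x_2=x_1x_2G$, and for $G=f(w)x_1y_2$ these are exactly your three commutations $[f(w),x_1x_2]=[f(w),y_2]=[f(w),y_1x_2]=0$ (since $x_1y_2$ commutes with $y_1x_2$ and $y_2$ and $q$-commutes with $x_1x_2$, which is your base case $f\equiv 1$; that base-case computation of yours is correct).

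The genuine gap is precisely the step you deferred --- ``each surviving commutation I would then verify directly'' --- because for the element $w=y_1x_1y_2^{-1}$ literally appearing in \eqref{eq:G-f} the decisive commutation fails. From \eqref{eq:u-l} and \eqref{eq:Weyl-c-r} one finds $x_1w=qwx_1$, $y_1w=q^{-1}wy_1$, $x_2w=q^{-1}wx_2$, $y_2w=wy_2$, so $w$ indeed commutes with $x_1x_2$ and $y_2$, but the two $q^{-1}$-shifts add up rather than cancel in the third invariant: $(y_1x_2)\,w=q^{-2}\,w\,(y_1x_2)$, hence $[f(w),y_1x_2]=0$ forces $f$ to be constant. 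Concretely, $f(z)=z$ gives $G=y_1x_1y_2^{-1}x_1y_2=y_1x_1^2$, for which $\bar y_1\bar x_2=q^{2}\,\bar x_2\bar y_1$, violating \eqref{eq:u-l}. What an honest execution of your final step would have uncovered is a misprint in \eqref{eq:G-f}: the intended argument of $f$ is $y_1(x_1y_2)^{-1}=y_1x_1^{-1}y_2^{-1}$, as Example~\ref{ex:G-ab} confirms --- there $f(z)=\alpha z+\beta$ yields $G=\alpha y_1+\beta x_1y_2=f\left(y_1x_1^{-1}y_2^{-1}\right)x_1y_2$, whereas the printed formula would give $\alpha y_1x_1^2+\beta x_1y_2$. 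For $w'=y_1x_1^{-1}y_2^{-1}$ the exponents become $y_1w'=qw'y_1$ and $x_2w'=q^{-1}w'x_2$, the shifts cancel, and $w'$ commutes with all three invariants $x_1x_2$, $y_1x_2$, $y_2$; with this correction your argument closes exactly as planned. So the missing verification is not routine bookkeeping: as the statement is printed, it refutes rather than completes the proof, and your proposal is only salvaged by replacing $x_1$ with $x_1^{-1}$ inside $f$.
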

\begin{proof}
	The commutation of $\bar{x}_2 \bar{y}_2^{-1}$ with $\bar{x}_1$ leads to the condition
	\begin{equation}
	y_1^{-1} x_1^{-1} y_1 x_1 = x_2 y_2^{-1} x_2^{-1} y_2 ,
	\end{equation}
fulfilled due to \eqref{eq:Weyl-c-r}. Then the commutation of $\bar{y}_1$ with $\bar{x}_2$ and $\bar{y}_2$ gives, correspondingly, 
\begin{equation} \label{eq:G-1}
G y_1 x_2 = y_1 x_2 G \qquad \text{and} \quad G y_2 = y_2 G.
\end{equation}
Finally, the commutation of of $\bar{x}_1$ with $\bar{x}_2$ leads to
\begin{equation}\label{eq:G-2}
G x_1 x_2 y_2 = y_2 x_1 x_2 G \qquad \text{or} \quad q G x_1 x_2 = x_1 x_2 G .
\end{equation}
Conditions \eqref{eq:G-1} and \eqref{eq:G-2} are satisfied by $G$ of the form \eqref{eq:G-f}, and imply the Weyl commutation relations for the transformed fields.
\end{proof}
\begin{Cor}
	For commuting variables the homogeneous Veblen map \eqref{eq:V-h} with the gauge function given by ~\eqref{eq:G-f} is a Poisson map with the brackets \eqref{eq:Poisson-Weyl}.
\end{Cor}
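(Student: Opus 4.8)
The plan is to obtain this statement as the quasi-classical $q\to 1$ limit of the preceding Proposition, exactly as the Poisson form of the normalization map was deduced from its ultra-local/quantum reduction. Concretely, I would set $q=e^{\hbar}$ and regard the ultra-local algebra generated by $[(x_1,y_1),(x_2,y_2)]$ subject to \eqref{eq:u-l} and \eqref{eq:Weyl-c-r} as a one-parameter deformation of the commutative algebra of functions. Expanding the Weyl relations $x_iy_i=qy_ix_i$ and the ultra-locality \eqref{eq:u-l} to first order in $\hbar$, the commutators reproduce precisely the log-canonical brackets \eqref{eq:Poisson-Weyl} via $\{a,b\}=\lim_{\hbar\to0}\hbar^{-1}(ab-ba)$; in particular $[x_i,y_i]=(q-1)y_ix_i=\hbar\,x_iy_i+O(\hbar^2)$ gives $\{x_i,y_i\}=x_iy_i$, while the mixed relations give vanishing brackets.

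Next I would argue that the classical Veblen map is a Poisson morphism because it is the $\hbar\to0$ limit of a family of algebra morphisms. The preceding Proposition guarantees that, for every admissible $q$, the transformed fields $[(\bar{x}_1,\bar{y}_1),(\bar{x}_2,\bar{y}_2)]$ obey the same relations \eqref{eq:u-l} and \eqref{eq:Weyl-c-r}; hence the substitution \eqref{eq:V-h} with the gauge \eqref{eq:G-f} intertwines the deformation for all $\hbar$ in a neighbourhood of $0$. Differentiating the preserved relations $\bar{x}_i\bar{y}_i=q\bar{y}_i\bar{x}_i$ at $\hbar=0$ then yields $\{\bar{x}_i,\bar{y}_i\}=\bar{x}_i\bar{y}_i$, and the preserved ultra-locality yields $\{\bar{x}_1,\bar{x}_2\}=\{\bar{y}_1,\bar{y}_2\}=\{\bar{x}_1,\bar{y}_2\}=\{\bar{y}_1,\bar{x}_2\}=0$, which is exactly the assertion that \eqref{eq:V-h} is Poisson with respect to \eqref{eq:Poisson-Weyl}.

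The delicate point is the gauge function. In the commutative limit \eqref{eq:G-f} becomes $G=f(y_1x_1y_2^{-1})x_1y_2$ with $f$ a function of the single combination $w=y_1x_1y_2^{-1}$, and the whole construction hinges on $w$ having been chosen so that the compatibility conditions \eqref{eq:G-1}--\eqref{eq:G-2} hold. I expect the main obstacle to be checking that these conditions survive the limit in the correct order, i.e. that the $O(\hbar)$ part of \eqref{eq:G-1}--\eqref{eq:G-2} is precisely the Poisson-compatibility of $G$ with the brackets \eqref{eq:Poisson-Weyl}. As an independent cross-check I would verify the six brackets of the transformed fields directly: writing each field as a product of the monomials $x_1,y_1,x_2,y_2$ times $f(w)$, one computes the brackets from the Leibniz rule together with $\{x_i,y_i\}=x_iy_i$ and the elementary brackets $\{w,\,\cdot\,\}$, so that in each of the six cases the terms proportional to $f'(w)$ must cancel. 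This cancellation is the computational heart of the result and the place where the special form of $w$ entering \eqref{eq:G-f} is genuinely used; isolating and confirming it is, I anticipate, where the real work lies.
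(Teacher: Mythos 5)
Your proposal is correct and follows exactly the paper's (implicit) route: the paper states this Corollary without separate proof, as the quasi-classical $q\to 1$ limit of the preceding Proposition on the ultra-local/Weyl reduction, precisely mirroring the earlier Corollary for the normalization map, and your semiclassical expansion $ab-ba=\hbar\{a,b\}+O(\hbar^2)$ applied to the relations preserved by \eqref{eq:V-h} with gauge \eqref{eq:G-f} (which contain no explicit $q$) is the intended argument. Your additional direct verification of the six brackets and the $f'(w)$ cancellations is a sound cross-check but goes beyond what the paper requires, since preservation of the Weyl and ultra-locality relations for all $q$ already forces the Poisson property in the limit.
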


\begin{Ex} \label{ex:G-ab}
	In the simplest case $f(z) = \alpha z + \beta $, where $\alpha$ and $\beta$ are central parameters (which we assume do not vanish simultaneously), the  ultra-local/quantum/Poisson Veblen map   $\mathcal{V}^{(\alpha,\beta)}$ is given by
\begin{equation} \label{eq:V-h-ab}
\begin{array}{ll}
\bar{x}_1 = & x_1 x_2, \\
\bar{x}_2 = & y_1 x_2 (\alpha y_1 + \beta x_1 y_2 )^{-1},
\end{array} \qquad
\begin{array}{ll}
\bar{y}_1 = &\alpha y_1 + \beta x_1 y_2 ,\\
\bar{y}_2 = & y_2 (\alpha y_1 + \beta x_1 y_2)^{-1},
\end{array}
\end{equation}
and its inverse map reads
\begin{equation} \label{eq:V-h-inv-ab}
\begin{array}{ll}
x_1 = & \bar{x}_1 (\alpha \bar{x}_2 + \beta \bar{x}_1 \bar{y}_2  )^{-1}, \\
x_2 = & \alpha \bar{x}_2 + \beta \bar{x}_1 \bar{y}_2 ,
\end{array} \qquad
\begin{array}{ll}
y_1 = &  \bar{y}_1 \bar{x}_2 (\alpha \bar{x}_2 + \beta \bar{x}_1 \bar{y}_2 )^{-1},  \\
y_2 = & \bar{y}_1 \bar{y}_2  .
\end{array}
\end{equation}	
\end{Ex}

The corresponding analogue of Proposition~\ref{prop:V-p-h} shows how the central parameters of the ultra-local/quantum/Poisson Veblen map~\eqref{eq:V-h-ab} should be matched in the pentagonal equation.
\begin{Prop} \label{prop:V-p-h-ab}
	The homogeneous ultra-local/quantum/Poisson Veblen map \eqref{eq:V-h-ab}
	satisfies the reversed pentagonal condition
	\begin{equation} %\label{eq:V-pent-r-h}
	\mathcal{V}_{23}^{(\alpha_C,\beta_C)} \circ \mathcal{V}_{13}^{(\alpha_B,\beta_B)} \circ  \mathcal{V}_{12}^{(\alpha_A,\beta_A)} = \mathcal{V}_{12}^{(\alpha_H,\beta_H)} \circ  \mathcal{V}_{23}^{(\alpha_G,\beta_G)} ,
	\end{equation}
	provided the central parameters of the maps satisfy conditions
	\begin{equation} \label{eq:V-pent-h-ab}
	\alpha_G = \alpha_B \alpha_C, \qquad  \alpha_H = \alpha_A \alpha_B, \qquad
	\beta_A = \alpha_C \beta_H, \qquad \beta_B = \beta_G \beta_H, \qquad \beta_C = \alpha_A \beta_G	.
	\end{equation}
\end{Prop}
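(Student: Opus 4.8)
The plan is to deduce this proposition from the general, gauge-dependent reversed pentagon relation of Proposition~\ref{prop:V-p-h} rather than re-deriving everything by brute force. The ultra-local/quantum/Poisson map $\mathcal{V}^{(\alpha,\beta)}$ of~\eqref{eq:V-h-ab} is nothing but the homogeneous Veblen map $\mathcal{V}^G$ with the particular gauge $G=\alpha y_1+\beta x_1 y_2$, i.e.\ the specialization~\eqref{eq:G-f} with $f(z)=\alpha z+\beta$. By Proposition~\ref{prop:V-p-h} the composition $\mathcal{V}_{23}^{C}\circ\mathcal{V}_{13}^{B}\circ\mathcal{V}_{12}^{A}=\mathcal{V}_{12}^{H}\circ\mathcal{V}_{23}^{G}$ holds precisely when the two operator identities $H=B$ and $G=CB$ are satisfied, since the six output components collapse onto exactly these two conditions ($x_1$ is automatic, $x_2$ and $y_1$ depend only on $B=H$, and $y_2$, $x_3$, $y_3$ only on $G=CB$). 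So the whole problem reduces to computing the five induced gauges $A,B,C,G,H$ for the central-parameter family and determining for which assignments $(\alpha_\bullet,\beta_\bullet)$ these identities hold, with all intermediate fields legitimately obeying~\eqref{eq:u-l}--\eqref{eq:Weyl-c-r} because each $\mathcal{V}^{(\alpha,\beta)}$ preserves the Weyl relations.

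First I would track the gauges through both compositions. On the left side $A=\alpha_A y_1+\beta_A x_1 y_2$ is read off the initial data, after which $\mathcal{V}_{12}^{A}$ sends the first slot to $(x_1 x_2,\,A)$; feeding this into $\mathcal{V}_{13}^{B}$ gives $B=\alpha_B A+\beta_B x_1 x_2 y_3$, and a further application of $\mathcal{V}_{23}^{C}$ yields $C=\alpha_C y_2 A^{-1}+\beta_C y_1 x_2 A^{-1} y_3 B^{-1}$. On the right side $G=\alpha_G y_2+\beta_G x_2 y_3$ is immediate and $H=\alpha_H y_1+\beta_H x_1 G$. Expanding $B$ and $H$ in the monomials $y_1$, $x_1 y_2$, $x_1 x_2 y_3$, which are linearly independent over the center, the identity $H=B$ splits into the three scalar relations $\alpha_H=\alpha_A\alpha_B$, $\alpha_B\beta_A=\alpha_G\beta_H$ and $\beta_B=\beta_G\beta_H$.

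The main obstacle is the second identity $G=CB$, because $C$ contains the noncommutative inverse $A^{-1}=(\alpha_A y_1+\beta_A x_1 y_2)^{-1}$ together with several factors related only by the Weyl relations~\eqref{eq:Weyl-c-r}. Here I would first multiply out $CB$, using $A^{-1}B=\alpha_B+\beta_B A^{-1}x_1 x_2 y_3$, so that $CB=\alpha_B\alpha_C y_2+\alpha_C\beta_B\,y_2 A^{-1}x_1 x_2 y_3+\beta_C\,y_1 x_2 A^{-1} y_3$, and then I would check the statement in the commutative/Poisson limit, where everything collapses after cancelling the common factor $x_2 y_3$ to the scalar system $\alpha_G=\alpha_B\alpha_C$, $\beta_C=\alpha_A\beta_G$ together with the relation $\alpha_C\beta_B=\beta_A\beta_G$. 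The genuinely hard step is promoting this to the Weyl case: one must use ultra-locality~\eqref{eq:u-l} together with $x_i y_i=q y_i x_i$ to show that the two $A^{-1}$-dependent terms still conspire to reproduce exactly $\beta_G x_2 y_3$ while the leading term matches $\alpha_G y_2$, i.e.\ that all stray $q$-factors cancel. This cancellation is precisely where the Weyl-preservation property must be invoked.

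Finally I would assemble the outcome. Combining $\alpha_G=\alpha_B\alpha_C$ from $G=CB$ with $\alpha_B\beta_A=\alpha_G\beta_H$ from $H=B$ gives $\beta_A=\alpha_C\beta_H$, and the auxiliary relation $\alpha_C\beta_B=\beta_A\beta_G$ is then automatically implied by $\beta_B=\beta_G\beta_H$ and $\beta_A=\alpha_C\beta_H$, hence redundant. The two operator identities $H=B$ and $G=CB$ are therefore equivalent to the five matching conditions~\eqref{eq:V-pent-h-ab}. Should the gauge-theoretic route prove too delicate in the Weyl setting, the same five conditions can alternatively be obtained by direct substitution and comparison of the two full compositions, exactly as in the proof of Proposition~\ref{prop:V-p-h}, at the cost of a considerably longer computation.
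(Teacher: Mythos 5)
Your proposal is correct and follows essentially the same route as the paper's own proof: reduce via Proposition~\ref{prop:V-p-h} to the two gauge-matching identities $H=B$ and $G=CB$, compute the induced gauges along both compositions, expand in the monomials $y_1$, $x_1y_2$, $x_1x_2y_3$ (resp.\ $y_2$, $x_2y_3$), compare central coefficients to get the systems \eqref{eq:H=B-ab} and \eqref{eq:G=BC-ab}, and eliminate the redundant four-term relation $\alpha_C\beta_B=\beta_A\beta_G$. The one step you describe rather than execute --- the $q$-factor cancellation in $G=CB$ using ultra-locality and the Weyl relations (concretely, $x_2A^{-1}=A_q^{-1}x_2$ with $A_q=\alpha_A y_1+q\beta_A x_1y_2$, and $A^{-1}x_1y_2\,A_q=q\,x_1y_2$, after which all factors of $q$ drop out) --- does go through and yields exactly the scalar relations you state; it is precisely the step the paper itself compresses into ``application of the same reasoning.''
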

\begin{proof}
Let us denote
\begin{align*}
[(\bar{x}_1 , \bar{y}_1) , (\bar{x}_2, \bar{y}_2) , (\bar{x}_3 , \bar{y}_3 )] = &
\mathcal{V}_{23}^G [(x_1,y_1),(x_2,y_2), (x_3,y_3)] \\
[(x_1^\prime , y_1^\prime) , (x_2^\prime, y_2^\prime) , (x_3^\prime , y_3^\prime )] = &
 \mathcal{V}_{12}^A[(x_1,y_1),(x_2,y_2), (x_3,y_3)] . %,\\
%[(x_1^{\prime\prime} , y_1^{\prime\prime}) , (x_2^{\prime\prime}, y_2^{\prime\prime}) , %(x_3^{\prime\prime} , y_3^{\prime\prime} )] = &
%  \mathcal{V}_{13}^B [(x_1^\prime , y_1^\prime) , (x_2^\prime, y_2^\prime) , (x_3^\prime %, y_3^\prime )].
\end{align*}
The first equation $H=B$ of \eqref{eq:V-pent-g-ab} reads then
\begin{equation*}
\alpha_H \bar{y}_1 + \beta_H \bar{x}_1 \bar{y}_2 = \alpha_B y^\prime_1 + \beta_B x^\prime_1  y^\prime_3
\end{equation*}	
which gives
\begin{equation}
\alpha_H y_1 + \beta_H x_1 (\alpha_G y_2 + \beta_G x_2 x_3) = \alpha_B ( \alpha_A y_1 + \beta_A x_1 y_2) + \beta_B x_1 x_2  y_3.
\end{equation}	
By comparing coefficients of monomials on both sides we obtain
\begin{equation} \label{eq:H=B-ab}
\alpha_H = \alpha_A \alpha_B, \qquad \beta_B = \beta_G \beta_H, \qquad
\alpha_B \beta_A = \alpha_G \beta_H.
\end{equation}
Application of the same reasoning for the second equation $G=CB$ of \eqref{eq:V-pent-g-ab} gives, 
\begin{equation} \label{eq:G=BC-ab}
\alpha_G = \alpha_B \alpha_C, \qquad \beta_C = \alpha_A \beta_G, \qquad \alpha_C \beta_B = \beta_A \beta_G.
\end{equation}
Inserting the second expression of equation~\eqref{eq:H=B-ab} into the third expression of equation~\eqref{eq:G=BC-ab} gives the third one of the system~\eqref{eq:V-pent-h-ab}, thus completing the list. The last four-term expression of~\eqref{eq:H=B-ab} is then a consequence of~\eqref{eq:V-pent-h-ab}.  
\end{proof}

\section{The homogeneous Hirota map and its quantum reduction} \label{sec:Zam-h}
\subsection{The homogeneous Hirota map and local Yang--Baxter equations} \label{sec:Hir-h}
Let us start with generic homogeneous version of the linear problem, studied before in the Hirota gauge~\eqref{eq:lin-H} and in the affine version~\eqref{eq:lin-aff}
\begin{equation} \label{eq:lin-Z-h}
\begin{array}{lll}
\bpsi_{(2)}& =&\bpsi x_1 + \bpsi_{(1)} y_1 \\
\bpsi_{(23)}& =& \bpsi_{(2)} x_2 + \bpsi_{(12)} y_2 \\
\bpsi_{(3)} &= &\bpsi  x_3 + \bpsi_{(2)} y_3
\end{array}
\qquad \text{and} \qquad
\begin{array}{lll}
\bpsi_{(23)}  &= &\bpsi_{(3)}\tilde{x}_1  + \bpsi_{(13)} \tilde{y}_1 , \\
\bpsi_{(3)} &= &\bpsi \tilde{x}_2 + \bpsi_{(1)} \tilde{y}_2 , \\
\bpsi_{(13)}  & = &\bpsi_{(1)} \tilde{x}_3 + \bpsi_{(12)} \tilde{y}_3 .
\end{array}
\end{equation}
The transformation formulas read
\begin{equation}
\label{eq:Z-h}
\begin{array}{ll}
\tilde{x}_1 = & (x_3 + x_1 y_3)^{-1} x_1 x_2 \\
\tilde{x}_2 =  & x_3 + x_1 y_3 \\
\tilde{x}_3 = & y_1 x_1^{-1} x_3 (x_3 + x_1 y_3)^{-1} x_1 x_2 G^{-1}
\end{array} \qquad 
\begin{array}{ll}
\tilde{y}_1 = & G , \\
\tilde{y}_2 =  & y_1 y_3 , \\
\tilde{y}_3 = & y_2 G^{-1} ,
\end{array}
\end{equation}
where the free gauge parameter $G=\tilde{y}_1$ reflects the multiplicative freedom  in definition of the homogeneous coordinates $\bpsi_{(13)} $. 

Directly one can show that the transformation 
\begin{equation} \label{eq:F--->}
\mathcal{F}^G \colon [(x_1,y_1),(x_2,y_2),(x_3,y_3)] \dashrightarrow [(\tilde{x}_1,\tilde{y}_1), (\tilde{x}_2,\tilde{y}_2), (\tilde{x}_3,\tilde{y}_3)]
\end{equation}
can be decomposed into the homogeneous version of~\eqref{eq:fact-Z}
\begin{equation} \label{eq:fact-Z-h}
	\mathcal{F}^G = P_{23} \circ \mathcal{V}^G_{12} \circ \mathcal{N}_{13},
	\end{equation}
where $\mathcal{V}^G$ and $\mathcal{N}$ are given by~\eqref{eq:V-h} and \eqref{eq:N-h}, respectively. 
\begin{Cor}
	The homogeneous version of the Hirota map allows for the affine/non-homogeneous reduction~\eqref{eq:F-map}.
\end{Cor}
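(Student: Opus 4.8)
The plan is to avoid any direct substitution into the cumbersome formulas \eqref{eq:Z-h} and instead exploit the decomposition \eqref{eq:fact-Z-h}, reducing the statement to the reduction properties of its two constituent maps, which have already been established in this section. Concretely, I would impose the affine constraint $x_i + y_i = 1$, $i=1,2,3$, on the input triple and fix the free gauge parameter by $G = 1 - \tilde{x}_1$; since $\tilde{y}_1 = G$ in \eqref{eq:Z-h}, this choice is nothing but the affine condition $\tilde{x}_1 + \tilde{y}_1 = 1$ imposed on the first output pair. The goal is then to show that this single specialization propagates the affine constraint through the whole composition and makes the surviving $x$-components coincide with \eqref{eq:F-map}.

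First I would apply $\mathcal{N}_{13}$. By the proposition asserting that the homogeneous normalization map \eqref{eq:N-h} admits the affine reduction, the output in slots $1$ and $3$ again satisfies $x_i + y_i = 1$, with $x$-components given by the affine normalization map \eqref{eq:N}, while slot $2$ is untouched. Next I would apply $\mathcal{V}^G_{12}$. Its inputs are the pair $(\hat{x}_1,\hat{y}_1)$ from slot $1$ and the unchanged pair $(x_2,y_2)$ in slot $2$, so its first output component is $\bar{x}_1 = \hat{x}_1 x_2 = \tilde{x}_1$. Hence the prescribed gauge $G = 1 - \tilde{x}_1$ coincides exactly with the gauge $G = 1 - \bar{x}_1$ required by the proposition on the affine reduction of the homogeneous Veblen map \eqref{eq:V-h}. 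That proposition then guarantees that the $\mathcal{V}^G_{12}$-output is again affine-reduced, with $x$-components matching the affine Veblen map \eqref{eq:V}. Finally $P_{23}$ merely permutes the factors and preserves the constraint, so the whole composition reduces to $P_{23}\circ V_{12}\circ N_{13}$, which is precisely the decomposition \eqref{eq:fact-Z} of the affine Hirota map $F$ of \eqref{eq:F-map}.

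The one point that needs care --- and which I expect to be the only genuine obstacle --- is the matching of the gauge parameter: the homogeneous map carries a free $G$, whereas the affine picture has no such freedom, so one must verify that the single constraint $\tilde{x}_1 + \tilde{y}_1 = 1$ is the correct and consistent specialization. The identity $\bar{x}_1 = \tilde{x}_1$ computed above makes this transparent, since it shows that the gauge forced by the affine condition on the first slot is exactly the gauge under which the Veblen constituent admits its affine reduction. Alternatively one could prove the corollary by brute force, substituting $y_i = 1 - x_i$ and $G = 1 - (x_3 + x_1(1-x_3))^{-1}x_1 x_2$ directly into \eqref{eq:Z-h} and checking $\tilde{x}_i + \tilde{y}_i = 1$ for $i=2,3$ together with the coincidence of $\tilde{x}_3$ with \eqref{eq:F-map}; but this route hides the one non-commutative rewriting --- passing from the form $y_1 x_1^{-1} x_3(\dots)G^{-1}$ to the form $1 + (x_2-1)[\dots]^{-1}(\dots)$ --- that the decomposition renders automatic, so I would prefer the structural argument.
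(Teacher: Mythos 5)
Your argument is correct and is essentially the paper's own: the corollary is stated there as an immediate consequence of the decomposition \eqref{eq:fact-Z-h} combined with the previously proven affine-reduction propositions for $\mathcal{N}$ (giving \eqref{eq:N}) and for $\mathcal{V}^G$ with gauge $G=1-\bar{x}_1$ (giving \eqref{eq:V}), together with the affine factorization \eqref{eq:fact-Z}, which is exactly how you propagate the constraint. Your key observation that $\bar{x}_1=\hat{x}_1x_2=\tilde{x}_1$, so the required gauge is $G=1-\tilde{x}_1$, matches the specialization the paper itself records in its local Yang--Baxter discussion.
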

Let us describe the geometric interpretation of the Hirota map and of its linear problem (in all the gauges considered in the paper) within the local Yang-Baxter  equation approach~\cite{MailletNijhoff,Kashaev-LMP,Sergeev-LMP,Kashaev-Korepanov-Sergeev}. It is known that such a description of the mapping provides a simple proof of its Zamolodchikov property. However, to analyse generic case of arbitrary $2\times 2$ matrix $L$ one is forced to consider the map up to certain gauge transformations~\cite{Korepanov-SI,Kashaev-Korepanov-Sergeev}. We will describe in detail such an approach in the case of the homogeneous Hirota map.

Consider $2\times 2$ matrix $L(x,y)=\left( \begin{array}{cc} x & 1 \\ y & 0 \end{array} \right)$ and define its $3\times 3$ extensions by
\begin{equation*}
L_{12}(x,y) = \left( \begin{array}{ccc} x & 1 & 0 \\ y & 0 & 0 \\ 0 & 0 & 1\end{array} \right) , \quad
L_{13}(x,y) = \left( \begin{array}{ccc} x & 0 & 1 \\ 0 & 1 & 0 \\ y & 0 & 0\end{array} \right) , \quad
L_{23}(x,y) = \left( \begin{array}{ccc} 1 & 0 & 0 \\ 0 & x & 1 \\ 0 & y & 0\end{array} \right) .
\end{equation*}
Multiplication from the right of the system of vectors 
$(\bpsi, \bpsi_{(1)}, \bpsi_{(12)})$ by the matrix $L_{12}(x_1,y_1)$, in view of the first equation of the linear system \eqref{eq:lin-Z-h}, gives
\begin{equation*}
(\bpsi, \bpsi_{(1)}, \bpsi_{(12)})\left( \begin{array}{ccc} x_1 & 1 & 0 \\ y_1 & 0 & 0 \\ 0 & 0 & 1\end{array} \right) = (\bpsi_{(2)}, \bpsi, \bpsi_{(12)}).
\end{equation*} 
The first part of the linear problem \eqref{eq:lin-Z-h} provides therefore transformations
\begin{equation*}
(\bpsi, \bpsi_{(1)}, \bpsi_{(12)}) \xrightarrow{L_{12}(x_1,y_1)}
(\bpsi_{(2)}, \bpsi, \bpsi_{(12)}) \xrightarrow{L_{13}(x_2,y_2)}
(\bpsi_{(23)}, \bpsi, \bpsi_{(2)}) \xrightarrow{L_{23}(x_3,y_3)}
(\bpsi_{(23)}, \bpsi_{(3)}, \bpsi) ,
\end{equation*}
while its second part gives 
\begin{equation*}
(\bpsi, \bpsi_{(1)}, \bpsi_{(12)}) \xrightarrow{L_{23}(\tilde{x}_3,\tilde{y}_3)}
(\bpsi, \bpsi_{(13)}, \bpsi_{(1)}) \xrightarrow{L_{13}(\tilde{x}_2,\tilde{y}_2))}
(\bpsi_{(3)}, \bpsi_{(13)}, \bpsi) \xrightarrow{L_{12}(\tilde{x}_1,\tilde{y}_1)}
(\bpsi_{(23)}, \bpsi_{(3)}, \bpsi) .
\end{equation*}
The local Yang--Baxter equation 
\begin{equation} \label{eq:lYB}
L_{12}(x_1,y_1) L_{13}(x_2,y_2) L_{23}(x_3,y_3) = 
L_{23}(\tilde{x}_3,\tilde{y}_3) L_{13}(\tilde{x}_2,\tilde{y}_2) L_{12}(\tilde{x}_1,\tilde{y}_1),
\end{equation}
which is another form of the compatibility condition of the linear problem \eqref{eq:lin-Z-h}, reads explicitly as follows
\begin{equation*}
\left( \begin{array}{ccc} x_1 x_2 & x_3 + x_1 y_3 & 1 \\ y_1 x_2 & y_1 y_3 & 0 \\ 
y_2 & 0 & 0\end{array} \right) = 
\left( \begin{array}{ccc} \tilde{x}_2 \tilde{x}_1 & \tilde{x}_2 & 1 \\ 
\tilde{y}_2 \tilde{x}_1 + \tilde{x}_3 \tilde{y}_1 & \tilde{y}_2 & 0 \\ 
\tilde{y}_3 \tilde{y}_1 & 0 & 0\end{array} \right).
\end{equation*} 
Interpreted as five equations for six unknowns, the above system can be solved parametrizing first $\tilde{y}_1 = G$, and gives then the transformation formulas~\eqref{eq:Z-h}.

In the affine reduction $y_i = 1 - x_i$, $i=1,2,3$, it can be checked that by putting $\tilde{y}_1 = G = 1 - \tilde{x}_1$ we preserve the reduction condition also in the other variables. The corresponding matrix $L(x,1-x) = L^A(x)$ when inserted in the local Yang--Baxter equation gives rise to transformation formulas~\eqref{eq:F-map}.

By comparing the general linear problem in homogeneous coordinates~\eqref{eq:lin-Z-h} with the linear problem in the Hirota gauge~\eqref{eq:lin-H} one can obtain the corresponding reduction 
\begin{equation*}
L(-y,1) = \left( \begin{array}{cc} -y & 1 \\ 1 & 0 \end{array} \right)
\end{equation*} 
of the matrix entering the local Yang--Baxter equation~\eqref{eq:lYB}. When inserted into the equation the matrix gives the Hirota map~\eqref{eq:Hirota-map}. Up to the sign change symmetry it agrees with the matrix $L^H(y)$ of Section~\ref{sec:Hir-Zam-H}.

\subsection{The Zamolodchikov property of the homogeneous Hirota map}
We start from the ten-term relation for the normalization and Veblen maps in homogeneous coordinates.
\begin{Prop} \label{prop:NV-tt-h}
	The normalization map \eqref{eq:N-h} and the Veblen map \eqref{eq:V-h} are paired by the homogeneous form of the ten-term equation 
	\begin{equation} \label{eq:ten-term-h}
	\mathcal{V}_{13}^C\circ \mathcal{N}_{12}\circ \mathcal{V}_{14}^B\circ \mathcal{N}_{34}\circ \mathcal{V}_{24}^A = \mathcal{N}_{34}\circ \mathcal{V}_{24}^H \circ \mathcal{N}_{14}\circ \mathcal{V}_{13}^G\circ \mathcal{N}_{12}  ,
	\end{equation}
provided the gauge parameters of the Veblen transformations satisfy conditions
	\begin{equation} \label{eq:tt-g}
H = BA, \qquad G = C \left[ 1 + y_4 x_4^{-1} (x_2 + x_1 y_2)^{-1} x_1 x_3 \right].
\end{equation}	
\end{Prop}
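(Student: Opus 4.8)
The plan is to prove the identity by direct evaluation of both five-fold compositions on a generic input $[(x_1,y_1),\dots,(x_4,y_4)]$, in the same spirit as the verification of Proposition~\ref{prop:NV-tt}, while reducing the genuinely new content to the bookkeeping of the Veblen gauge factors. The geometric skeleton is already in place: the homogeneous maps \eqref{eq:N-h} and \eqref{eq:V-h} restrict to the affine maps $N$ and $V$ on the slice $x_i+y_i=1$, and the star configuration of Figure~\ref{fig:star} (together with the combinatorial consistency of Figure~\ref{fig:ten-term}) guarantees that the two orderings construct the same seven geometric points. Thus the incidence content of the two sides agrees automatically, and the only remaining task is to check that the two construction paths assign \emph{consistent homogeneous representatives} to the constructed points. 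Conditions \eqref{eq:tt-g} are exactly the statement that they do.

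Concretely, I would separate the output components into those insensitive to the right-scaling freedom and those carrying it. Each Veblen factor \eqref{eq:V-h} produces one gauge-free component ($\bar{x}_1=x_1x_2$) and three components ($\bar{x}_2,\bar{y}_1,\bar{y}_2$) that depend on its gauge through $G$ or $G^{-1}$. The gauge-free data propagate through both chains to the same result by the geometric argument above, so all discrepancy is localized in the gauge-carrying slots. I would then isolate, on each side, the component that \emph{literally equals} a gauge parameter, namely an output of the form $\tilde y_1=G$ (respectively $\tilde y_1=H$) coming from the $\bar{y}_1=G$ slot of the last relevant Veblen flip. Equating the two homogeneous coordinates obtained for a shared constructed point and comparing factors --- exactly the monomial-matching technique used to prove Proposition~\ref{prop:V-p-h-ab} --- yields $H=BA$ and the path-dependent expression $G=C\bigl[1+y_4x_4^{-1}(x_2+x_1y_2)^{-1}x_1x_3\bigr]$. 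Once these are imposed, I expect the residual components $\bar{x}_2$ and $\bar{y}_2$ of every Veblen factor, which carry $G^{-1}$ or $H^{-1}$, to match as well, closing the verification.

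The main obstacle is the non-commutative bookkeeping. Composing five maps, several of which contain an inverse such as $(x_2+x_1y_2)^{-1}$, forces one to track the precise left--right order of every factor through the entire chain, with no commutative cancellations available to shorten the algebra. The subtle point is the explicit form of $G$ in \eqref{eq:tt-g}: in contrast to the purely multiplicative $H=BA$, it depends on the input variables, reflecting the fact that the right-scaling of a point built by a Veblen flip is sensitive to the order in which the surrounding points were normalized along the two paths of Figure~\ref{fig:ten-term}. Extracting this exact expression --- rather than an equivalent but differently bracketed surrogate --- is where the computation must be done most carefully; I would pin it down by reading off the single output slot that equals the gauge on each side, since that slot is unaffected by the overall scaling freedom and hence removes all ambiguity. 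The alternative, more conceptual route is to repeat the star-configuration argument of Proposition~\ref{prop:NV-tt} directly in homogeneous coordinates, tracking the homogeneous normalization of each constructed point; the two conditions \eqref{eq:tt-g} then emerge as the compatibility requirements for the coordinates of the points $BD$-type vertices produced in the two different orders.
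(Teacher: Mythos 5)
Your proposal is correct and takes essentially the same route as the paper: the paper proves the statement by direct evaluation of both five-fold compositions, and its displayed final expressions exhibit exactly your gauge-slot matching --- $y_1^\prime = C$ from the left-hand side is equated with $G\left[1+y_4x_4^{-1}(x_2+x_1y_2)^{-1}x_1x_3\right]^{-1}$ from the right-hand side, and $y_2^\prime = BA$ with $H$, after which all remaining components agree once \eqref{eq:tt-g} is imposed. Your supporting geometric argument via the star configuration is the same consistency reasoning the paper uses for the affine case (Proposition~\ref{prop:NV-tt}), so it organizes the non-commutative bookkeeping rather than constituting a different method.
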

\begin{proof}
	We present just the final expressions
	\begin{equation*}
	[(x_1^\prime , y_1^\prime) , (x_2^\prime, y_2^\prime) , (x_3^\prime , y_3^\prime ), (x_4^\prime , y_4^\prime )] = 
	\begin{cases}
	\mathcal{V}_{13}^C\circ \mathcal{N}_{12}\circ \mathcal{V}_{14}^B\circ \mathcal{N}_{34}\circ \mathcal{V}_{24}^A [(x_1,y_1),(x_2,y_2), (x_3,y_3),(x_4,y_4)], \\ 
	\mathcal{N}_{34}\circ \mathcal{V}_{24}^H \circ \mathcal{N}_{14}\circ \mathcal{V}_{13}^G\circ \mathcal{N}_{12}[(x_1,y_1),(x_2,y_2), (x_3,y_3),(x_4,y_4)] ,
	\end{cases}
	\end{equation*}
	where
	\begin{align*}
	x_1^\prime & = (x_2 x_4 + x_1 y_2 x_4 + x_1 x_3 y_4)^{-1}x_1 x_3 , \\
	y_1^\prime & = C = G \left[ 1 + y_4 x_4^{-1} (x_2 + x_1 y_2)^{-1} x_1 x_3 \right]^{-1} ,\\
	x_2^\prime & = x_2 x_4 + x_1 y_2 x_4 + x_1 x_3 y_4 ,\\
	y_2^\prime & = BA = H, \\
	x_3^\prime & = BA (y_2x_4 + x_3 y_4)^{-1} x_1^{-1} x_2 x_4 \left[ (x_2 + x_1 y_2)^{-1} x_4 + x_1 x_3 y_4 \right]^{-1} x_1 x_3 C^{-1} = \\
	  & = H  (y_2x_4 + x_3 y_4)^{-1} x_1^{-1} x_2  (x_2 + x_1 y_2)^{-1} x_1 x_3 G^{-1}, \\
	y_3^\prime & = y_3\left( 1 + y_4 x_4^{-1}y_2^{-1} x_3 \right)^{-1} C^{-1}
	= y_3\left( 1 + y_4 x_4^{-1}y_2^{-1} x_3 \right)^{-1}  
	\left[ 1 + y_4 x_4^{-1} (x_2 + x_1 y_2)^{-1} x_1 x_3 \right] G^{-1} , \\
	x_4^\prime & = y_1(y_2x_4 + x_3 y_4)A^{-1} B^{-1} = y_1(y_2x_4 + x_3 y_4)H^{-1} , \\
	y_4^\prime & = x_3 y_4 A^{-1} B^{-1} =  x_3 y_4 H^{-1} .
	\end{align*}
\end{proof}

Let us state the final result for the Hirota map in arbitrary homogeneous coordinates.
\begin{Prop}
	The transformation $\mathcal{F}^G$ given by equations~\eqref{eq:Z-h} satisfies the Zamolodchikov condition
	\begin{equation*}
	\mathcal{F}_{123}^D \circ \mathcal{F}_{145}^C \circ \mathcal{F}_{246}^B\circ  \mathcal{F}_{356}^A = \mathcal{F}_{356}^V \circ \mathcal{F}_{246}^U \circ \mathcal{F}_{145}^T \circ \mathcal{F}_{123}^S,
	\end{equation*}
	provided the gauge parameters satisfy equations
	\begin{equation}
	B=VT, \qquad D=T, \qquad CA = U.
	\end{equation}
\end{Prop}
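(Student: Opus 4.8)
The plan is to deduce the Zamolodchikov property of $\mathcal{F}^G$ from the homogeneous ten-term relation of Proposition~\ref{prop:NV-tt-h} by exactly the mechanism used for the affine map $F$ in Lemma~\ref{lem:tt}, but now propagating the gauge parameters through every step. First I would insert the factorization \eqref{eq:fact-Z-h}, in the form $\mathcal{F}_{ijk}^G = P_{jk} \circ \mathcal{V}_{ij}^G \circ \mathcal{N}_{ik}$, into both sides of the claimed identity and write out the eight resulting words. Each transposition $P_{jk}$ either acts on slots disjoint from a given $\mathcal{V}$ or $\mathcal{N}$, in which case it commutes with it, or it overlaps, in which case it merely relabels its two indices. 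Pushing all transpositions to the right on each side, the accumulated permutations on the two sides coincide and may be cancelled, so that the identity to be proved collapses to a single relation among the gauge-free normalization maps $\mathcal{N}$ and the gauge-carrying Veblen maps $\mathcal{V}^G$, supported on the six slots in a fixed pattern.

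Next I would reduce the two sides of this relation to the ten-term relation. A cluster of $\mathcal{N}$'s sitting on one triple of slots is eliminated by the gauge-free homogeneous pentagon \eqref{eq:N-pent-h} of Proposition~\ref{prop:N-p-h}; a cluster of $\mathcal{V}^G$'s on one triple is eliminated by the reversed pentagon of Proposition~\ref{prop:V-p-h}, which contributes the first family of gauge constraints, of the shape $H=B$ and $G=CB$ coming from \eqref{eq:V-pent-g-ab}. After these two reductions the residual statement is precisely the homogeneous ten-term relation \eqref{eq:ten-term-h}, in which one triple of slots plays the role of the pentagon variables for $\mathcal{N}$ and the complementary triple plays the role of the reversed-pentagon variables for $\mathcal{V}^G$; by Proposition~\ref{prop:NV-tt-h} this holds as soon as the gauges of the surviving Veblen factors satisfy \eqref{eq:tt-g}, namely $H=BA$ together with the field-dependent identity $G = C\left[1 + y_4 x_4^{-1} (x_2 + x_1 y_2)^{-1} x_1 x_3\right]$.

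The core of the argument is then the gauge bookkeeping. Each gauge symbol is pinned to a definite Veblen factor, and I would compute, for both orderings of the refactorization, the composite gauge delivered on each output $\mathcal{F}$-slot; recall that the gauge is literally the output coordinate $\tilde{y}_1=G$, so matching the two sides as maps forces these composite gauges to agree. Equating them and feeding in the constraints of the previous paragraph should produce exactly the three relations $B=VT$, $D=T$ and $CA=U$. The subtle point I expect to be the main obstacle is the fate of the field-dependent factor in \eqref{eq:tt-g}: I anticipate that this same non-central factor is generated along both refactorization paths and therefore cancels when the two composite gauges are equated, so that only the multiplicative relations among the leading gauge elements survive. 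Establishing this cancellation --- that the non-central part of the gauge is transported identically through both paths --- is where the homogeneous case genuinely differs from the affine computation. I would also verify that the gauge $S$ of the first-applied factor drops out of the constraints, as the list of conditions suggests, consistent with its reading as a residual freedom in the choice of homogeneous representative.

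As an independent cross-check I would redo the computation through the local Yang--Baxter formulation of Section~\ref{sec:Hir-h}. Inserting the matrices $L_{ij}(x,y)$ into the four-dimensional refactorization built from the transitions \eqref{eq:lYB}, the parametrizing freedom $\tilde{y}_1=G$ available at each local transition plays the role of the gauge, and the consistency of the two refactorization orders of the tesseract reproduces the same matching conditions $B=VT$, $D=T$, $CA=U$. This matrix route is also the most reliable guard against an index- or order-bookkeeping slip in the pentagon and ten-term reductions.
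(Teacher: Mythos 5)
Your proposal is correct and follows essentially the same route as the paper's proof: insert the factorization \eqref{eq:fact-Z-h} into the Zamolodchikov relation and shift permutations to the right as in Lemma~\ref{lem:tt}, eliminate the $\mathcal{N}$-cluster by the gauge-free pentagon \eqref{eq:N-pent-h}, eliminate the $\mathcal{V}$-cluster by the gauged reversed pentagon of Proposition~\ref{prop:V-p-h} (yielding $D=T$ and $B=VT$), and recognize the residue as the homogeneous ten-term relation of Proposition~\ref{prop:NV-tt-h} (yielding $CA=U$). One correction to what you flag as the main obstacle: the non-central factor in \eqref{eq:tt-g} is not disposed of by a cancellation between the two refactorization paths. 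In the reversed pentagon the gauge of the first-applied Veblen factor is unconstrained (the conditions \eqref{eq:V-pent-g-ab} do not involve it), so the pentagon step $\mathcal{V}_{13}^D\circ\mathcal{V}_{36}^B=\mathcal{V}_{36}^V\circ\mathcal{V}_{16}^T\circ\mathcal{V}_{13}^X$ introduces a free auxiliary gauge $X$; the ten-term relation then fixes it through $S = X\left[1+y_5x_5^{-1}(x_2+x_1y_2)^{-1}x_1x_3\right]$, which constrains only the internal $X$ and leaves $S$ free --- exactly the residual freedom you correctly predicted for $S$, but realized by absorption into $X$ rather than by path-wise cancellation (this is what the paper's closing Remark about the connection between $S$ and $X$ records). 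With that adjustment your plan goes through verbatim and coincides with the paper's argument; your local Yang--Baxter cross-check is not part of the paper's proof but is consistent with the discussion in Section~\ref{sec:Hir-h}.
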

\begin{proof}
	Following the proof of Lemma~\ref{lem:tt} we arrive to the decomposition
	\begin{equation*}
	\mathcal{V}_{13}^D\circ\mathcal{V}_{36}^B\circ
	\mathcal{N}_{12}\circ \mathcal{V}_{15}^C\circ \mathcal{N}_{35}\circ \mathcal{V}_{25}^A 
	\circ \mathcal{N}_{14}\circ \mathcal{N}_{24}
	= \mathcal{V}_{36}^V\circ\mathcal{V}_{16}^T\circ
	\mathcal{N}_{35}\circ \mathcal{V}_{25}^U \circ \mathcal{N}_{15}\circ \mathcal{V}_{13}^S\circ \mathcal{N}_{24}  \circ \mathcal{N}_{12} .
	\end{equation*}	
	Because of the pentagon relations
	\begin{equation*}
	\mathcal{N}_{24}  \circ \mathcal{N}_{12} = \mathcal{N}_{12}  \circ \mathcal{N}_{14}  \circ \mathcal{N}_{24} \qquad \text{and} \qquad
	\mathcal{V}_{13}^D\circ\mathcal{V}_{36}^B = \mathcal{V}_{36}^V\circ\mathcal{V}_{16}^T \circ \mathcal{V}_{13}^X
	\end{equation*}
	where, by Proposition~\ref{prop:V-p-h}, $D=T$ and $B=VT$ we obtain
	\begin{equation*}
	\mathcal{V}_{13}^X\circ
	\mathcal{N}_{12}\circ \mathcal{V}_{15}^C\circ \mathcal{N}_{35}\circ \mathcal{V}_{25}^A 
	= 
	\mathcal{N}_{35}\circ \mathcal{V}_{25}^U \circ \mathcal{N}_{15}\circ \mathcal{V}_{13}^S\circ \mathcal{N}_{12} .
	\end{equation*}
	Finally, by Proposition~\ref{prop:NV-tt-h} the remaining relations between the gauge parameters read
	\begin{equation*}
	CA = U, \qquad S = X\left[ 1 + y_5 x_5^{-1} (x_2 + x_1 y_2)^{-1} x_1 x_3 \right].
	\end{equation*}
\end{proof}
\begin{Rem}
	The connection between $S$ and $X$ is needed when the decomposition of the Zamolodchikov equation into the ten-term relation and pentagon equations is used.
\end{Rem}

\subsection{Quantum and Poisson reductions of the homogeneous Hirota map} \label{sec:Hir-q-P}
Let us present the corresponding analogue of Proposition~\ref{prop:V-p-h-ab} which shows how the central parameters of the ultra-local/quantum/Poisson Veblen map~\eqref{eq:V-h-ab} should be matched in the ten-term relation.
\begin{Prop} \label{prop:tt-ab}
	The ultra-local/quantum/Poisson reductions of the homogeneous normalization and Veblen maps are paired by the corresponding form of the ten-term relation
\begin{equation} \label{eq:ten-term-q}
\mathcal{V}_{13}^{(\alpha_C,\beta_C)}\circ \mathcal{N}_{12}\circ \mathcal{V}_{14}^{(\alpha_B,\beta_B)}\circ \mathcal{N}_{34}\circ \mathcal{V}_{24}^{(\alpha_A,\beta_A)} = \mathcal{N}_{34}\circ \mathcal{V}_{24}^{(\alpha_H,\beta_H)} \circ \mathcal{N}_{14}\circ \mathcal{V}_{13}^{(\alpha_G,\beta_G)}\circ \mathcal{N}_{12}  ,
\end{equation}	
provided the central parameters of the ultra-local/quantum/Poisson Veblen map satisfy conditions
\begin{equation} \label{eq:tt-ab}
\alpha_G = \alpha_B \alpha_C, \quad
\alpha_H = \alpha_A \alpha_B, \quad 
\beta_A = \alpha_C \beta_H, \quad
\beta_B = \beta_G \beta_H, \quad 
\beta_C = \alpha_A \beta_G.
\end{equation}	
\end{Prop}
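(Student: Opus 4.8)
The plan is to deduce the ultra-local ten-term relation \eqref{eq:ten-term-q} from the generic homogeneous ten-term relation of Proposition~\ref{prop:NV-tt-h}, in exact analogy with the way Proposition~\ref{prop:V-p-h-ab} was obtained from Proposition~\ref{prop:V-p-h}. The point is that once the normalization factors are kept in their generic form \eqref{eq:N-h} (equivalently \eqref{eq:N-q} on the ultra-local locus) and the Veblen factors are specialized to the two-parameter family $\mathcal{V}^{(\alpha,\beta)}$ of Example~\ref{ex:G-ab}, the relation \eqref{eq:ten-term-q} already holds as a consequence of \eqref{eq:ten-term-h}, provided the field-dependent gauge-matching conditions \eqref{eq:tt-g} can be realized with all of $A,B,C,G,H$ taken of the specific shape $\alpha y_i + \beta x_i y_j$ dictated by \eqref{eq:V-h-ab}. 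Thus the whole problem reduces to translating the two operator identities $H = BA$ and $G = C\left[1 + y_4 x_4^{-1}(x_2 + x_1 y_2)^{-1} x_1 x_3\right]$ of \eqref{eq:tt-g} into arithmetic conditions on the central scalars $\alpha_\bullet,\beta_\bullet$.

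First I would fix, for each Veblen factor appearing in \eqref{eq:ten-term-q}, the explicit value of its gauge parameter. By Example~\ref{ex:G-ab} a factor $\mathcal{V}^{(\alpha,\beta)}_{ij}$ carries the gauge $\alpha y_i + \beta x_i y_j$ evaluated at the fields it actually sees inside the composition. I would therefore introduce names for the intermediate fields produced after each map, as in the proof of Proposition~\ref{prop:V-p-h-ab}, compute them in terms of the initial data $(x_i,y_i)$, $i=1,\dots,4$, using \eqref{eq:N-q} and \eqref{eq:V-h-ab}, and substitute. In particular $H$ and the product $BA$ (respectively $G$ and the product $C\left[1 + y_4 x_4^{-1}(x_2 + x_1 y_2)^{-1} x_1 x_3\right]$) would be read off from the two sides of the relation and expressed through the appropriate intermediate fields.

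Next I would impose the two matching identities \eqref{eq:tt-g}, working throughout with the ultra-locality rules \eqref{eq:u-l} and the Weyl relations \eqref{eq:Weyl-c-r}. After clearing the intermediate fields, the identity $H = BA$ should collapse to a finite sum of independent monomials in the $x_i,y_i$; comparing their coefficients then yields relations of exactly the same shape as \eqref{eq:H=B-ab}. The genuinely new work lies in the second identity: the field-dependent factor $1 + y_4 x_4^{-1}(x_2 + x_1 y_2)^{-1} x_1 x_3$ multiplying $C$ must be shown, under \eqref{eq:u-l}--\eqref{eq:Weyl-c-r}, to recombine with $C = \alpha_C y_1 + \beta_C x_1 y_3$ into a single admissible expression $\alpha_G y_1 + \beta_G x_1 y_3$ for $G$. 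I expect this to be the main obstacle, since one must verify that the non-commutative products simplify cleanly, with the inverse $(x_2 + x_1 y_2)^{-1}$ cancelling against matching factors, so that no monomial outside the two-term template survives and the Weyl constant $q$ drops out. Matching coefficients then delivers $\alpha_G = \alpha_B \alpha_C$, $\beta_C = \alpha_A \beta_G$ and a further auxiliary relation.

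Finally I would collect the constraints coming from both identities and eliminate the two auxiliary relations, precisely as at the end of the proof of Proposition~\ref{prop:V-p-h-ab}: inserting one relation into another reduces the over-determined list to the five independent conditions \eqref{eq:tt-ab}. That these coincide verbatim with the reversed-pentagon conditions \eqref{eq:V-pent-h-ab} serves as a useful consistency check, since the ten-term pairing and the reversed pentagon are expected to impose the same arithmetic on the central parameters, reflecting the common $\mathcal{S}_5$ combinatorics underlying the star configuration $(10_2,5_4)$ and the Desargues configuration $(10_3,10_3)$.
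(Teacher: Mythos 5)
Your proposal is correct and takes essentially the same route as the paper's own proof: the paper likewise reduces everything to the gauge-matching conditions \eqref{eq:tt-g} of Proposition~\ref{prop:NV-tt-h}, specializes the Veblen gauges to the two-parameter family of Example~\ref{ex:G-ab}, and reuses the coefficient-matching technique of Proposition~\ref{prop:V-p-h-ab}, with $H=BA$ producing \eqref{eq:H=B-ab} and the second, field-dependent condition (whose recombination with $C$ you rightly flag, and which does simplify cleanly under \eqref{eq:u-l}--\eqref{eq:Weyl-c-r}) producing \eqref{eq:G=BC-ab}. Eliminating the two auxiliary relations then yields exactly \eqref{eq:tt-ab}, and the coincidence with \eqref{eq:V-pent-h-ab} that you note as a consistency check is observed in the paper as well.
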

\begin{proof}
	The technique of the demonstration of the above relations is the same like in the proof of Proposition~\ref{prop:V-p-h-ab}; notice that also equations~\eqref{eq:V-pent-h-ab}
and \eqref{eq:tt-ab} are identical. From the first relation of~\eqref{eq:tt-g} we get equations~\eqref{eq:H=B-ab}, while the second one gives equations~\eqref{eq:G=BC-ab}.
\end{proof}

In the rest of this Section we discuss the central parameters describing the gauge in Example~\ref{ex:G-ab} and Proposition~\ref{prop:V-p-h-ab}. In particular we show how the parameters should be combined in the corresponding homogeneous Hirota maps in order the Zamolodchikov equation be satisfied. Directly by properties of the ultra-local/quantum/Poisson reductions of the homogeneous normalization and Veblen maps we obtain the following result.
\begin{Prop} \label{prop:F-q}
The map $\mathcal{F}^{(\alpha,\beta)} = P_{23} \circ \mathcal{V}_{12}^{(\alpha,\beta)} \circ \mathcal{N}_{13}$ given explicitly by
\begin{equation}
\label{eq:Z-h-q}
\begin{array}{ll}
\tilde{x}_1 = & x_1 x_2 (x_3 + x_1 y_3)^{-1},  \\
\tilde{x}_2 =  & x_3 + x_1 y_3 ,\\
\tilde{x}_3 = & y_1 x_2  x_3 (\alpha y_1 x_3 + \beta x_1 y_2)^{-1} ,
\end{array} \qquad 
\begin{array}{ll}
\tilde{y}_1 = & (\alpha y_1 x_3 + \beta x_1 y_2)(x_3 + x_1 y_3)^{-1},  \\
\tilde{y}_2 =  & y_1 y_3 , \\
\tilde{y}_3 = & y_2 (x_3 + x_1 y_3) (\alpha y_1 x_3 + \beta x_1 y_2)^{-1} ,
\end{array}
\end{equation}
preserves the Weyl commutation relations
\begin{equation}
x_i y_i = q y_i x_i \quad \text{and} \quad x_i x_j = x_j x_i, \quad y_i y_j = y_j y_i, \quad x_i y_j = y_j x_i, \quad \text{for} \quad i\neq j,
\end{equation}
or, for commuting variables, the corresponding Poisson brackets
\begin{equation}
\{ x_i , y_i \} = x_i y_i \quad \text{and} \quad \{x_i ,x_j \} = \{ y_i , y_j \} = \{ x_i,  y_j\} = 0 \quad \text{for} \quad i\neq j.
\end{equation}
\end{Prop}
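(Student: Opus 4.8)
The plan is to avoid any direct verification of \eqref{eq:Z-h-q} and instead exploit the decomposition $\mathcal{F}^{(\alpha,\beta)} = P_{23} \circ \mathcal{V}_{12}^{(\alpha,\beta)} \circ \mathcal{N}_{13}$, which is the ultra-local specialization of the general factorization \eqref{eq:fact-Z-h}. Since the Weyl commutation relations are stable under composition, it suffices to prove that each of the three constituent maps sends a triple of ultra-locally commuting slots satisfying the Weyl relations $x_i y_i = q y_i x_i$ (with a common central $q$) to another such triple, and then to observe that the composite inherits the property. The argument is therefore modular rather than computational.

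First I would treat $\mathcal{N}_{13}$, which acts on the pair of slots $1,3$ and leaves slot $2$ untouched. Restricted to slots $1,3$ the input satisfies the two-slot ultra-locality \eqref{eq:u-l} and the Weyl relations \eqref{eq:Weyl-c-r}, so by the established ultra-local reduction of the homogeneous normalization map \eqref{eq:N-h} the output fields in slots $1,3$ again satisfy ultra-locality and the Weyl relations with the same $q$. The point that promotes this two-slot statement to the present three-slot setting is automatic: the transformed fields are rational expressions in the slot-$1,3$ variables alone, while the untouched slot-$2$ variables commute with every slot-$1,3$ variable by ultra-locality of the input; hence slot $2$ commutes with the entire output of $\mathcal{N}_{13}$. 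Thus $\mathcal{N}_{13}$ maps a Weyl--ultra-local triple to a Weyl--ultra-local triple.

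Next I would apply the same reasoning to $\mathcal{V}_{12}^{(\alpha,\beta)}$. Its inputs are the post-normalization fields in slots $1,2$, which by the previous step satisfy the two-slot ultra-locality and Weyl relations, so the hypotheses of the ultra-local reduction of the homogeneous Veblen map --- in the form of Example~\ref{ex:G-ab}, with gauge $G=\alpha y_1 + \beta x_1 y_2$ built from its \emph{own} inputs --- are met; its output again satisfies ultra-locality and the Weyl relations with the same $q$, and commutativity with the untouched slot $3$ follows exactly as before. Finally $P_{23}$ is a mere relabeling of slots and trivially preserves all relations. Composing the three steps shows that $\mathcal{F}^{(\alpha,\beta)}$ preserves the Weyl commutation relations, and the Poisson statement then follows verbatim by passing to the quasi-classical $q\to 1$ limit, using the corresponding Poisson reductions of $\mathcal{N}$ and $\mathcal{V}$ recorded earlier.

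The only point requiring genuine care is the matching of the gauge. When $\mathcal{V}_{12}^{(\alpha,\beta)}$ is fed the post-$\mathcal{N}_{13}$ fields, its gauge parameter is $G=\alpha\hat{y}_1 + \beta \hat{x}_1 y_2$ in terms of the intermediate fields $\hat{x}_1,\hat{y}_1$; I would check that under ultra-locality this collapses to $(\alpha y_1 x_3 + \beta x_1 y_2)(x_3 + x_1 y_3)^{-1}$, which is precisely the value of $\tilde{y}_1$ recorded in \eqref{eq:Z-h-q}, thereby confirming that \eqref{eq:Z-h-q} is indeed the ultra-local specialization of \eqref{eq:Z-h}. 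Beyond this bookkeeping the argument is purely structural, so I expect no real obstacle once the constituent ultra-local reduction results are invoked; the mild subtlety is simply to make explicit the automatic extension of each two-slot preservation statement to the untouched third slot.
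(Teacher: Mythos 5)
Your proof is correct and follows essentially the same route as the paper, which obtains Proposition~\ref{prop:F-q} ``directly by properties of the ultra-local/quantum/Poisson reductions'' of $\mathcal{N}$ and $\mathcal{V}^{(\alpha,\beta)}$ through the decomposition $\mathcal{F}^{(\alpha,\beta)} = P_{23} \circ \mathcal{V}_{12}^{(\alpha,\beta)} \circ \mathcal{N}_{13}$. Your explicit attention to the two points the paper leaves implicit --- the automatic extension of the two-slot preservation statements to the untouched third slot, and the check that the gauge $G=\alpha\hat{y}_1+\beta\hat{x}_1 y_2$ built from the intermediate fields collapses under ultra-locality to $(\alpha y_1 x_3 + \beta x_1 y_2)(x_3+x_1y_3)^{-1}=\tilde{y}_1$ --- is exactly the right bookkeeping and makes the argument complete.
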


\begin{Prop} \label{prop:F-h-ab-Z}
The ultra-local/quantum/Poisson Hirota map satisfies the functional Zamolodchikov equation
\begin{equation} \label{eq:F-ab}
\mathcal{F}_{123}^{(\alpha_D,\beta_D)} \circ \mathcal{F}_{145}^{(\alpha_C,\beta_C)} \circ \mathcal{F}_{246}^{(\alpha_B,\beta_B)}\circ  \mathcal{F}_{356}^{(\alpha_A,\beta_A)} = \mathcal{F}_{356}^{(\alpha_V,\beta_V)} \circ \mathcal{F}_{246}^{(\alpha_U,\beta_U)} \circ \mathcal{F}_{145}^{(\alpha_T,\beta_T)} \circ \mathcal{F}_{123}^{(\alpha_S,\beta_S)},
\end{equation}
provided the central parameters satisfy conditions
\begin{gather*}
\alpha_B = \alpha_T \alpha_V, \quad \alpha_U = \alpha_A \alpha_C, \quad \beta_C = \beta_S \beta_U, \quad \beta_T = \beta_B \beta_D , \\
\alpha_A \beta_S = \alpha_V \beta_D, \quad \alpha_C \alpha_D = \alpha_S \alpha_T, \quad
\alpha_C \beta_A = \alpha_S \beta_U, \quad \beta_A \beta_B = \beta_U \beta_V.
\end{gather*}
\end{Prop}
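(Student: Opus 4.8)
The plan is to reproduce, almost verbatim, the argument already used for the single-parameter homogeneous Hirota map $\mathcal{F}^G$, while carefully tracking how the composite gauge parameter refines into the pair $(\alpha,\beta)$. First I would insert the decomposition $\mathcal{F}^{(\alpha,\beta)} = P_{23} \circ \mathcal{V}_{12}^{(\alpha,\beta)} \circ \mathcal{N}_{13}$ of Proposition~\ref{prop:F-q} into the Zamolodchikov equation~\eqref{eq:F-ab} and, exactly as in the proof of Lemma~\ref{lem:tt}, commute all the transpositions $P_{23}$ to the right. Since $\mathcal{N}$ carries no parameters and the permutations merely relabel indices, this produces precisely the same string identity as in the $\mathcal{F}^G$ case, now decorated with parameter pairs:
\begin{equation*}
\mathcal{V}_{13}^{(\alpha_D,\beta_D)}\circ\mathcal{V}_{36}^{(\alpha_B,\beta_B)}\circ \mathcal{N}_{12}\circ \mathcal{V}_{15}^{(\alpha_C,\beta_C)}\circ \mathcal{N}_{35}\circ \mathcal{V}_{25}^{(\alpha_A,\beta_A)} \circ \mathcal{N}_{14}\circ \mathcal{N}_{24} = \mathcal{V}_{36}^{(\alpha_V,\beta_V)}\circ\mathcal{V}_{16}^{(\alpha_T,\beta_T)}\circ \mathcal{N}_{35}\circ \mathcal{V}_{25}^{(\alpha_U,\beta_U)} \circ \mathcal{N}_{15}\circ \mathcal{V}_{13}^{(\alpha_S,\beta_S)}\circ \mathcal{N}_{24}\circ \mathcal{N}_{12} .
\end{equation*}

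Next I would peel off the two pentagon sub-identities. On the right factors $\mathcal{N}_{24}\circ\mathcal{N}_{12}$ I use the parameter-free normalization pentagon of Proposition~\ref{prop:N-p-h}, while on the left factors I apply the reversed Veblen pentagon of Proposition~\ref{prop:V-p-h-ab}, writing $\mathcal{V}_{13}^{(\alpha_D,\beta_D)}\circ\mathcal{V}_{36}^{(\alpha_B,\beta_B)} = \mathcal{V}_{36}^{(\alpha_V,\beta_V)}\circ\mathcal{V}_{16}^{(\alpha_T,\beta_T)}\circ\mathcal{V}_{13}^{(\alpha_X,\beta_X)}$ for an auxiliary pair $(\alpha_X,\beta_X)$. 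Matching this to the labelling of~\eqref{eq:V-pent-h-ab} yields five constraints: two of them, $\alpha_B=\alpha_T\alpha_V$ and $\beta_T=\beta_B\beta_D$, are already among the asserted eight, while the remaining three, $\alpha_D=\alpha_X\alpha_T$, $\beta_X=\alpha_V\beta_D$ and $\beta_V=\alpha_X\beta_B$, involve the auxiliary pair. After these cancellations the identity collapses to the ten-term relation
\begin{equation*}
\mathcal{V}_{13}^{(\alpha_X,\beta_X)}\circ \mathcal{N}_{12}\circ \mathcal{V}_{15}^{(\alpha_C,\beta_C)}\circ \mathcal{N}_{35}\circ \mathcal{V}_{25}^{(\alpha_A,\beta_A)} = \mathcal{N}_{35}\circ \mathcal{V}_{25}^{(\alpha_U,\beta_U)} \circ \mathcal{N}_{15}\circ \mathcal{V}_{13}^{(\alpha_S,\beta_S)}\circ \mathcal{N}_{12} ,
\end{equation*}
to which Proposition~\ref{prop:tt-ab} applies directly.

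Reading off the index identification (slots $1,3,5$ of the reduced equation against slots $1,3,4$ of~\eqref{eq:ten-term-q}) gives five further constraints from~\eqref{eq:tt-ab}: two of them, $\alpha_U=\alpha_A\alpha_C$ and $\beta_C=\beta_S\beta_U$, are again on the target list, while the remaining three read $\alpha_S=\alpha_C\alpha_X$, $\beta_A=\alpha_X\beta_U$ and $\beta_X=\alpha_A\beta_S$. The final step is to eliminate the auxiliary pair $(\alpha_X,\beta_X)$ from the ten surviving relations, using centrality of all parameters. Equating the two expressions for $\beta_X$ yields $\alpha_A\beta_S=\alpha_V\beta_D$; multiplying $\alpha_D=\alpha_X\alpha_T$ by $\alpha_C$ and using $\alpha_S=\alpha_C\alpha_X$ gives $\alpha_C\alpha_D=\alpha_S\alpha_T$, and the same operation on $\beta_A=\alpha_X\beta_U$ gives $\alpha_C\beta_A=\alpha_S\beta_U$; finally $\beta_A\beta_B=\alpha_X\beta_U\beta_B=\beta_U\beta_V$, invoking $\beta_V=\alpha_X\beta_B$ and centrality. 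Together with the four conditions already read off directly, this is exactly the asserted system of eight equations.

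I expect the main obstacle to be organisational rather than computational: one must keep the index relabellings under the transpositions and the two independent parameter bookkeepings (from the reversed pentagon and from the ten-term relation) perfectly aligned, and then verify that the two auxiliary unknowns $(\alpha_X,\beta_X)$ can be consistently eliminated so that precisely $10-2=8$ independent constraints survive and coincide with the stated ones. The only structural input beyond this bookkeeping is the centrality of $\alpha$ and $\beta$, which is what licenses the freely commuting cancellations in the elimination step.
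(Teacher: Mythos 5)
Your proposal is correct and follows essentially the same route as the paper's proof: decompose via $\mathcal{F}^{(\alpha,\beta)} = P_{23}\circ\mathcal{V}_{12}^{(\alpha,\beta)}\circ\mathcal{N}_{13}$ as in Lemma~\ref{lem:tt}, match the Veblen-pentagon part against Proposition~\ref{prop:V-p-h-ab} and the ten-term part against Proposition~\ref{prop:tt-ab}, then eliminate the auxiliary pair $(\alpha_X,\beta_X)$ from the resulting ten relations. Your two five-relation systems coincide exactly with those in the paper, and you merely make explicit the elimination step that the paper leaves as a remark.
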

\begin{proof}
	By comparison of the indices of the Veblen part of \eqref{eq:F-ab} with those in Proposition~\ref{prop:V-p-h-ab} we obtain relations
	\begin{equation*} 
	\alpha_B = \alpha_T \alpha_V, \quad
	\alpha_D = \alpha_X \alpha_T, \quad 
	\beta_X = \alpha_V \beta_D, \quad
	\beta_T = \beta_B \beta_D, \quad 
	\beta_V = \alpha_X \beta_B,
	\end{equation*}	
	while comparison of the ten-term part of \eqref{eq:F-ab} with those of Proposition~\ref{prop:tt-ab} gives
	\begin{equation*} 
	\alpha_S = \alpha_C \alpha_X, \quad
	\alpha_U = \alpha_A \alpha_C, \quad 
	\beta_A = \alpha_X \beta_U, \quad
	\beta_C = \beta_S \beta_U, \quad 
	\beta_X = \alpha_A \beta_S.
	\end{equation*}	
	The statement is the result of elimination of $\alpha_X$ and $\beta_X$ from the above two systems.
\end{proof}
\begin{Rem}
	As the set of independent eight central parameters one can take:
	(i) $\alpha_C, \alpha_T, \beta_B, \beta_U$, (ii)~three out of $\alpha_A, \alpha_V, \beta_D, \beta_S$, (iii) one out of $\alpha_D, \alpha_S, \beta_A, \beta_V$.
\end{Rem}

\section{Conclusion}
We constructed several solutions of the functional Zamolodchikov equation in terms of rational functions of totally non-commutative variables (of course the Zamolodchikov equation holds also in the commutative reduction of the maps). Our examples come from the geometric description of the non-Abelian Hirota--Miwa equation~\cite{Nimmo-NCKP} in terms of Desargues lattices~\cite{Dol-Des} --- the same geometric statements can be algebraically described in terms of different maps. The most general case of homogeneous coordinates gives, by the ultra-locality assumption, the reduction preserving the Weyl commutation relations between the fields and, in the classical limit, the corresponding Poisson map. 

In our approach we used previously found link~\cite{DoliwaSergeev-pentagon} between the Desargues lattices (as described in homogeneous coordinates) and the normalization and Veblen pentagonal maps. In particular we gave the geometric meaning to the corresponding ten-term relation~\cite{Kashaev-Sergeev} needed to produce Zamolodchikov map from pentagonal maps. Notice that our quantum maps have been obtained by application of the ultra-locality constraint on non-commutative maps (i.e. maps given in terms of fully non-commutative symbols), see also \cite{Sergeev-Q2+1} for an earlier work in this context. This way is in opposite direction to the standard quantization procedure which starts from Poisson maps. It is well known that in transition from maps defined in terms of commuting variables to their quantum versions different ordering of factors gives different theories. Even worse, some factors, which are crucial to integrability, may cancel out in the commutative reduction (see equation~\eqref{eq:N-h} as an example). By geometric identification of integrability (here the Desargues theorem) we obtain for free \emph{integrable} non-commutative mappings from the commutative ones.

It is known  that  Hirota's discrete KP system and its various reductions give rise to majority of integrable equations and appear ``everywhere'' in the theory of solvable systems of statistical and quantum physics~\cite{KNS-rev}. It would be desirable to see (if possible) how the corresponding reductions of the solutions of the Zamolodchikov equation presented here descend to the level of particular models.

\subsection*{Acknowledgements} The research of A.D. was supported by National Science Centre, Poland, under grant 2015/19/B/ST2/03575 \emph{Discrete integrable systems -- theory and applications}.
\bibliographystyle{amsplain}
\providecommand{\bysame}{\leavevmode\hbox to3em{\hrulefill}\thinspace}

\end{document}